      \theoremstyle{plain}
      \newtheorem{theorem}{Theorem}
      \newtheorem{example}{Example}
      \newtheorem{proposition}{Proposition}
      \newtheorem{lemma}{Lemma}
      \newtheorem{assumption}{Assumption}
\def\lQ{\scalebox{-1}[1]{''}}
\renewenvironment{abstract}{%
    \if@twocolumn
      \section*{\abstractname}%
    \else 
      \begin{center}%
        {\bfseries \normalsize\abstractname\vspace{\z@}}
      \end{center} \vspace{-0.5cm}%
      \quotation
    \fi}
    {\if@twocolumn\else\endquotation\fi}
\begin{document}

  \title{A Scrambled Method of Moments}
  \author{\Large Jean-Jacques Forneron\thanks{Department of Economics, Boston University, 270 Bay State Road, Boston, MA 02215 USA.\newline Email: \href{mailto:jjmf@bu.edu}{jjmf@bu.edu}, Website: \href{http://jjforneron.com}{http://jjforneron.com}.   \newline I would like to thank Zhongjun Qu for his helpful comments and suggestions. All errors are my own.}
  } \date{\today}
  \maketitle 

  \begin{abstract}  
    \vspace{-0.2cm}
    Quasi-Monte Carlo (qMC) methods are a powerful alternative to  classical Monte-Carlo (MC) integration. Under certain conditions, they can approximate the desired integral at a faster rate than the usual Central Limit Theorem, resulting in more accurate estimates. This paper explores these methods in a  simulation-based estimation setting with an emphasis on the scramble of \citet{owen1995randomly}. For cross-sections and short-panels, the resulting Scrambled Method of Moments simply replaces the random number generator with the scramble (available in most softwares) to reduce simulation noise. Scrambled Indirect Inference estimation is also considered. For time series, qMC may not apply directly because of a curse of dimensionality on the time dimension. A simple algorithm and a class of moments which circumvent this issue are described. Asymptotic results are given for each algorithm.  Monte-Carlo examples illustrate these results in finite samples, including an income process with \lQ lots of heterogeneity.'' 
  \end{abstract}
  
  \bigskip
  \noindent JEL Classification: C11, C12, C13, C32, C36.\newline
  \noindent Keywords:  Simulated Method of Moments, Indirect Inference, quasi-Monte Carlo, Scramble.

  \bibliographystyle{ecca}
  \baselineskip=18.0pt
  \thispagestyle{empty}
  \setcounter{page}{0}
  
  \newpage

  \section{Introduction}
  Simulation-based estimation is a popular approach to estimate complex economic models. The econometrician simply matches sample with simulated moments, drawn from a model of interest. The resulting Simulated Method of Moments (SMM) or Indirect Inference estimator makes estimation feasible even though the likelihood or the moments' expectation, required for MLE and GMM, may be impossible or impractical to compute. However, using simulations rather than analytical computations introduces simulation noise, which increases the variance of the estimates. Also, the resulting simulated objective function is typically non-smooth and hence more difficult to minimize numerically. In theory, simulating many samples reduces simulation noise and smoothes the objective function, making optimization easier. In practice, however, this may not be feasible because of the increased computational cost. Also, more informative moments used for Indirect Inference or the Efficient Method of Moments can to be more computationally demanding  than simpler moments used in SMM, leading to a computational tradeoff between the informativeness of the moments and simulation noise. 
  A practical solution is to use variance reduction techniques such as antithetic draws\footnote{See Section \ref{sec:MC_anti} for a brief overview of antithetic sampling.}  which can reduce simulation noise with nearly no computational overhead.

  This paper investigates quasi-Monte Carlo (qMC) integration, another variance reduction approach, in the context of simulation-based estimation, with an emphasis on the scramble of \citet{owen1995randomly}. Under certain conditions, qMC can approximate an expectation at a faster rate than the usual Monte-Carlo (MC) Central Limit Theorem (CLT). This suggests that a Scrambled Method of Moments could outperform conventional SMM estimates using as many simulated samples. This is shown to be the case for a large class of models in cross-sections and short-panels with potentially non-smooth moments as in \citet{McFadden1989,Pakes1989} or auxiliary parameters as in \citet{Gourieroux1993}. For time series, qMC may not apply directly because of a curse of dimensionality over the time dimension. A class of models and moments which circumvent this issue are described.
  
  Using the scramble in an estimation setting poses several practical and theoretical challenges. These sequences are designed to approximate a fixed integral of an \textit{iid} sequence. Improper use of the scramble under dependence or with covariates may result in inconsistent estimators. Hence, the first and main contribution of the paper is methodological. The second contribution is theoretical. Uniform Laws of Large Numbers (ULLN) and CLTs are provided to handle smooth moments in cross-sections and short-panels. Scrambled draws are random and identically distributed \textit{but not independent}. This makes it more challenging to handle time series and non-smooth moments. A stochastic equicontinuity result for cross-sections and short-panels is established by re-writing the scrambled empirical process as the sum of a non-identically distributed but independent array with a standard qMC sequence. This allows to invoke existing results for each term separately. In the time series setting, a similar strategy allows to invoke results for bounded dependent heterogeneous arrays.

  The finite sample properties of the Scrambled Method of Moments are illustrated using several simple Monte-Carlo examples including an income process with \lQ lots of heterogeneity'' \citep{BROWNING2010}. 
  In this example, the scramble improves on SMM with random and antithetic draws in terms of variance. Furthermore, optimization over the 2,000 replications  was on average 15\% faster with the scramble than SMM using as many simulated samples - because scrambled moments are smoother than MC moments.

  \subsection*{Structure of the Paper} 
  After a review of the literature, Section \ref{sec:MC_qMC} provides an overview of (quasi)-Monte Carlo integration which is lesser known in economics. Section \ref{sec:srambledMM} shows how to implement the Scrambled Method of Moments in various settings.  Asymptotic results for each algorithm are given in Section \ref{sec:AsymTheory} and the proofs are in Appendix \ref{sec:proofs}. Section \ref{sec:MC} illustrates its finite sample properties using Monte-Carlo simulations. Section \ref{sec:Conclusion} concludes.

  \subsection*{Related Literatures} 
  There are two related literatures: simulation-based estimation and variance reduction techniques. In economics, simulation-based estimation includes the Simulated Method of Moments \citep{McFadden1989,Pakes1989,duffie-singleton}, Indirect Inference \citep{Gourieroux1993} and the Efficient Method of Moments \citep{Gallant1996}. See \citet{smith-palgrave} for an overview of simulation-based estimation in economics and common empirical applications. In statistics, Bayesian methods such as Approximate Bayesian Computation \citep[also known as ABC; ][]{Marin2012} and Synthetic Likelihood \citep{Wood2010} are more common. See \citet{jjng-14} for an overview and comparisons of these frequentist and Bayesian methods. 

  As discussed in the introduction, Monte-Carlo methods introduce simulation noise in the estimation which increases the variance of the estimator. There is a large number of variance reduction techniques, the following summary will only cover some of those that are most relevant to simulation-based estimation. One approach is to use low-discrepancy sequences - this is more commonly known as quasi-Monte Carlo integration. These sequences were initially designed to compute integrals of \textit{iid} sequences and can achieve faster than $\sqrt{n}$-rate convergence. More details are given in Section \ref{sec:MC_qMC}. qMC integration has been extended to non-linear state-space filtering \citep{Gerber2015,Gerber2017}, MCMC sampling \citep{Owen2005} and importance sampling for ABC estimation \citep{Buchholz2017}. A key takeaway from these papers is that a lot of care is required in implementing qMC integration in non \textit{iid} settings (MCMC or filtering) where `naive' implementations may be inconsistent. This may explain why it is only rarely used in empirical economics, even though their appeal has been known for some time \citep[][]{judd1998numerical}.  
  In economics, antithetic draws are a popular variance reduction technique. However, they can lead to either efficiency gains or losses depending on the integrand as discussed in Section \ref{sec:MC_anti}. Another variance reduction method, which is more popular in statistics, is the control variates approach \citep[see e.g.][]{robert2013monte}.\footnote{Note that despite the similarity in names, this is not related to control variable estimation used in structural econometric estimation.} The main idea is to augment the estimating sample and simulated moments with analytically tractable moments for the shocks themselves. This additional information can help reduce the uncertainty attributable to simulation noise.\footnote{See \citet{Davis2019} for an application of control variates to Indirect Inference. Control variates were also considered for qMC integration  in \citet{Hickernell2005a}.} The control functional approach \citep{Oates2017}, which uses all the information about the distribution of the shocks, can result in faster than $\sqrt{n}$-rate convergence. Important efficiency gains require the control variate moments to be sufficiently rich which could lead to a curse of dimensionality. For instance, the model of Section \ref{sec:het_income} has shocks with dimension $d=30$ so that spanning polynomials of order up to $2$ or $3$ would require introducing $496$ or $5,456$ additional moments respectively. The number of moments quickly becomes greater than the sample size itself.

  \section{(quasi)-Monte Carlo Integration and the Scramble} \label{sec:MC_qMC}
  The following provides a brief overview of Monte-Carlo (MC) and quasi-Monte Carlo (qMC) integration.\footnote{For further reading, \citet{Lemieux2009} provides a non-technical introduction to MC and qMC integration; \citet{Dick2010} provide the underlying theory.}
  Throughout, we are interested in evaluating the integral of a known measurable function $f: [0,1]^d \to \mathbb{R}$:
  \begin{align}
    I = \int_{[0,1]^d} f(u)du, \label{eq:integral}
  \end{align}
  by using a fixed or random sequence of points $u_1,\dots,u_n$ in $[0,1]^d$:
  \begin{align} \hat I_n = \frac{1}{n} \sum_{i=1}^n f(u_i). \label{eq:sample_integral}
  \end{align}
  
  \subsection{Monte-Carlo Integration and Antithetic Draws} \label{sec:MC_anti}
  A widely applicable approach is MC integration. Take \textit{iid} uniform draws $u_i \sim \mathcal{U}_{[0,1]^d}$ and compute the sample analog $\hat I^{MC}_n = \frac{1}{n}\sum_{i=1}^n f(u_i).$
  Assuming $f(u_i)$ has finite variance, $\hat I^{MC}_n$ is unbiased and the approximation error $|\hat I^{MC}_n-I|$ is of order $\sqrt{\text{var}[f(u_i)]/n}$. This implies that in order to reduce the approximation error tenfold, the number of draws must be a hundred times greater: the computational cost increases faster than the approximation error declines.
  
  A popular variance reduction approach is to use antithetic draws. For $n$ even, compute: \[ \hat I_n^{Anti} = \frac{1}{n} \sum_{i=1}^{n/2}[f(u_i) + f(1-u_i)], u_i \overset{iid}{\sim} \mathcal{U}_{[0,1]^d}.\] This approach is only valid if $f(u_i)$ and $f(1-u_i)$ have the same distribution; for instance, $e_i = \Phi^{-1}(u_i) \sim \mathcal{N}(0,1)$ and $-e_i = \Phi^{-1}(1-u_i) \sim \mathcal{N}(0,1)$ as well. Without this property, when the distribution is asymmetric, $\hat I_n^{Anti}$ may not be consistent for $I$. 
  
  Assuming $f(u_i)$ and $f(1-u_i)$ have the same distribution, $\hat I_n^{Anti}$ is unbiased and $\text{var}(\hat I_n^{Anti}) = \left( \text{var}[f(u_i)] + \text{cov}[f(u_i),f(1-u_i)] \right)/n$. If $\text{corr}[f(u_i),f(1-u_i)] = -1$, then $\text{var}(\hat I_n^{Anti}) =0$; the estimator is exact as soon as $n=2$. This improves significantly on MC integration. However, if $\text{corr}[f(u_i),f(1-u_i)] = +1$ then $\text{var}(\hat I_n^{Anti}) = 2 \text{var}(\hat I_n^{MC})$. Now, $\hat I_n^{MC}$ outperforms $\hat I_n^{Anti}$.
  
  The performance of antithetic draws relative to simple MC draws will typically depend on both the parameter of interest and the choice of  estimating moments.  
  To illustrate, consider the following two examples. First, suppose  $I=\mathbb{E}(e_i)$ where $e_i = \Phi^{-1}(u_i) \sim \mathcal{N}(0,1)$. Note that $-e_i \sim \mathcal{N}(0,1)$ and $\hat I_n^{Anti}$ is consistent for $I$ in this example. Since $\text{corr}(e_i,-e_i) = -1$, $\text{var}(\hat I_n^{Anti})=0$. The estimator is exact as soon as $n=2$. Second, suppose $I=\mathbb{E}(e_i^2)$ with $e_i$ as above. Now, $\text{corr}( e_i^2,[-e_i]^2 )=+1$ and $\text{var}(\hat I_n^{Anti}) = 2 \text{var}(\hat I_n^{MC})$. These examples suggest that the moments need to have some asymmetry properties in order to produce efficiency gains. This can be hard to check for intractable non-linear models.

  \subsection{quasi-Monte Carlo Integration} \label{sec:qMC}
  The discussion above shows that some sequences can outperform MC integration. For instance, for $f$ smooth and $u_i \in [0,1]^d$ with $d=1$ the lattice sequence $u_i = i/(n-1), i=0,\dots,n-1$, the estimator
  \[ \hat I_n^{Lattice} = \frac{1}{n} \sum_{i=1}^{n}f(u_i), \quad u_i = i/(n-1),\, i \in \{0,\dots,n-1\}\]
  has an approximation error of order $O(\|\partial_u f\|_{\infty}/n)$. The approximation error declines linearly with the computational cost. However for $d\geq 2$, this sequence has approximation errors of order $n^{-1/d}$ which is worse than MC as soon as $d \geq 3$. 
   
  It is possible to break this curse of dimensionality. To achieve this, the qMC literature relies on two pivotal inequalities. The first one is the Koksma-Hlawka inequality:
  \begin{align} 
    \Big| \frac{1}{n} \sum_{i=1}^n f(u_i) - \int_{[0,1]^d} f(u)du \Big| \leq \|f\|_{TV} \times D_n^\star(u_1,\dots,u_n), \label{eq:KHineq}
  \end{align}
  where $\|f\|_{TV}$ is the total variance norm of $f$ in the sense of Hardy and Krause: 
  \begin{align}  \|f\|_{TV} = \sum_{\mathfrak{u} \subseteq \mathcal{I}_d}\int_{[0,1]^{|\mathfrak{u}|}} \Big|\frac{\partial^{|\mathfrak{u}|} f (\mathfrak{u})}{\partial \mathfrak{u}}\Big|d\mathfrak{u}, \label{eq:HardyKraus}
  \end{align}
  $\partial^{|\mathfrak{u}|} f (\mathfrak{u})/\partial \mathfrak{u}$ consists of all univariate derivatives $\partial_{u_1} f(u),\dots,\partial_{u_d} f(u)$ and partial cross-derivatives $\partial^2_{u_1,u_2} f(u),\partial^2_{u_1,u_3} f(u),\partial^2_{u_2,u_3} f(u),\dots,\partial^2_{u_{d-1},u_d} f(u)$ up to order $d$ with $\partial^d_{u_1,\dots,u_d} f(u)$. It \textit{does not include repeated derivatives} such as $\partial^2_{u_1,u_1} f(u)$. What matters here is the smoothness of $f$ across the co-ordinates $u_1,\dots,u_d$. As a result, integrating over larger dimensions $d$ typically requires additional smoothness in $f$ over these cross-derivatives.
  
  The other term in the Koksma-Hlawka inequality is $D_n^\star(u_1,\dots,u_n)$ which corresponds to the star discrepancy of the sequence $(u_1,\dots,u_n)$, defined as: 
  \begin{align} D_n^\star(u_1,\dots,u_n) = \sup_{u \in [0,1)^d} \Big| \frac{1}{n}\sum_{i=1}^n \mathbbm{1}_{u_i \in [0,u)} - \int_{[0,u)} 1 du\Big|. \label{eq:star_discrepancy}
  \end{align}
  In statistics, this is known as the Kolmogorov-Smirnov (KS) distance between the empirical CDF of $(u_1,\dots,u_n)$ and the population CDF of a uniform $\mathcal{U}_{[0,1]^d}$ distribution. 

  For a given function $f$, reducing the approximation error in (\ref{eq:KHineq}) implies finding sequences with smaller $D_n^\star$. For \textit{iid} random draws, $D_n^* = O_p(n^{-1/2})$ by Donsker theorems \citep{VanderVaart1996}. The lattice sequence above has $D_n^* = O(n^{-1/d})$ for $d \geq 1$.
  
  For any sequence $(u_1,\dots,u_n)$, the second pivotal inequality - initially due to \citet{Roth1954} and generalized by \citet{Schmidt1970} - provides a lower bound on its star discrepancy:
  \begin{align} D_n^\star(u_1,\dots,u_n) \geq C_d \times \frac{\log(n)^{d-1}}{n}, \label{eq:lower_bound}
  \end{align}
  where $C_d$ is a universal constant which only depends on the dimension $d$. Note the striking difference with the Discrepancy of the set $D_n(u_1,\dots,u_n) = \sup_{i = 1,\dots,n}(\inf_{j \neq i}\|u_i-u_j\|)$ which cannot decrease faster than $n^{-1/d}$. Under the sup-norm distance there is the well known curse of dimensionality which affects grid searches, non-parametric estimation, etc. Under the KS distance, this lower bound suggests that the impact of dimensionality is much less severe.\\
  
  \begin{tcolorbox}[breakable,title=Constructing a qMC Sequence: the Sobol Point Set] The following material is adapted from \citet{Dick2010}, Chapter 8.1, and \citet{Lemieux2009}, Chapter 5.4. A popular approach to conduct qMC integration is to use sequences called \textit{Digital Nets}. Many of these sequences can be represented as:
  \[ u_i = \sum_{j=0}^\infty u_{i,j}b^{-j},  \]
  where $b \geq 2$ is a prime number so that $(u_{i,0},u_{i,1},\dots)$ is the $b$-adic representation $u_i$, i.e. the digits of $u_i$ in the basis $b$.\footnote{In base $10$, the digits are simply the number's decimals.}
  A well known digital net is the Sobol sequence for which $b=2$ so that $(u_{i,0},u_{i,1},\dots)$ is simply the digital expansion of $u_i$ in base $2$. The following considers the case $d=1$ for simplicity. To construct the sequence two inputs are needed. First, we need \textit{primitive polynomials} sorted by increasing degree $e_\ell, \ell \in \{1,\dots,d\}$:\footnote{A primitive polynomial is a polynomial of degree $e \geq 1$ with coefficients in a Galois field $\mathbb{G}(m) = \mathbb{Z} \text{ modulo } m \times \mathbb{Z}$ (for instance $\mathbb{G}(2) = \mathbb{Z} \text{ modulo } 2 \times \mathbb{Z} = \{-1,0,1\}$; a field is a finite set where addition, subtraction, multiplication and division are defined and verify certain axioms; in addition, a Galois field is finite) such that the powers $x^j \text{ modulo } p(x), j=1,\dots,m^{e}-1$ generate the set of nonzero polynomials of degree less of equal to $e$ in $\mathbb{G}(m)$.}
  \[ p_\ell(x) = x^{e_\ell} + a_{\ell,1} x^{e_\ell-1} + \dots + a_{\ell,e_\ell-1}x + a_{\ell,e_\ell},\quad \ell \in \{1,\dots,d\}.\]
  Second, construct \textit{direction numbers} $v^\ell_{j}, \ell \in \{1,\dots,d\}, j \in \{1,\dots,e_\ell\}$ as:
  \[ v_{j}^\ell = 2^{-\ell}m_{j}^\ell \]
  for some user-chosen odd integers $m^\ell_{j} \in \{1,\dots,2^j-1\}, j\in \{1,\dots,e_\ell\}$. One way to think of these direction numbers is that $2^{-\ell}$ splits $[0,1]$  into subintervals of length $2^{-\ell}$ and $m^\ell_{j}$ picks one of these subintervals (see the figure below for an illustration). Then the recursions described below ensure that the sequence covers $[0,1]^d$ well using these subintervals (this is a defining feature of digital nets). From these initial direction numbers, the following recursion generates the rest of the sequence:
  \[ v^\ell_{j+1} = a_{\ell,1}v^\ell_{j} \oplus \dots \oplus a_{\ell,e_\ell-1} v^\ell_{j+2-e_\ell} \oplus v^\ell_{j+1-e_\ell} \oplus (2^{-e_\ell}v^\ell_{j+1-e_\ell}),\]

  where $\oplus$ is the x-or operator on the binary representation.\footnote{The x-or or exclusive-or operator has the following property: $1\oplus 1 =0, 1 \oplus 0 = 1, 0 \oplus 0 =0$. On the binary representation this implies for $v_1 = 1/2, v_2 = 3/4$ we have $v_1 =  1 \times 2^{-1}, v_2 = 1 \times 2^{-1} + 1 \times 2^{-2}$ so that $v_1 \oplus v_2 = (1,0) \oplus (1,1) = (0,1)$ which is $1/4$ in the usual decimal representation.} The x-or operator allows to cycle over the splits described above and the requirement that the polynomial be primitive ensures that the cycle spans all the splits.
  Now to compute the i-th Sobol number, write down the base $2$ representation $i = i_0 + 2i_1+2^2i_2 + \dots + 2^{r-1}i_r$ for some $r \geq 0$ and we have:
  \[ u^\ell_{i} = i_0v^\ell_{1} \oplus  i_1v^\ell_{2} \oplus \dots \oplus i_{r-1}v^\ell_{r},\]
  the Sobol point set is then $u_i = (u^1_{i},\dots,u^d_{i}), i \in \{1,\dots,n\}$. \\
  Note that since the initial direction numbers are user-chosen many Sobol sequences can be generated with varying finite sample properties. One issue in particular is that some direction numbers can lead to finite sample correlations between the dimensions of $u_i$ which is undesirable. Several authors report direction numbers which perform well in practice; some scrambling algorithms can also improve the properties of the sequence \citep[see e.g.][]{Chi2005}. In practice, the Fortran implementation of ACM Algorithm 659\citep{Bratley1988,Joe2003} seems to be widely used \footnote{The \textit{randtoolbox} package of \citet{Dutang2019} provides an R interface to the Fortran code.} and provides direction numbers with good properties for dimensions up to $d=1,111$.
  \tcbline
  To put this in practice, consider the case with $d=1$, $p(x) = x^2 + x + 1$ so that $e=2$, which means that two direction numbers are required. Pick $v_{1}=1/2,v_2=3/4$ or in binary representation $v_1 = (1,0),v_2=(1,1)$. Using the recursion:
  \[ v_3 = (1,1) \oplus (1,0) \oplus (0,0,1) = (0,1,1),\]
  since the polynomial coefficients are all equal to $1$ and $2^{-1}v_1=(0,0,1)$. Note that $v_3 = 2^{-2} + 2^{-3} = 1/4 + 1/8 = 3/8 = 0.375$. The next number in the sequence is:
  \[ v_4 = (0,1,1) \oplus (1,1) \oplus (0,0,1,1) = (1,0,0,1) \]
  since $2^{-2}v_2 = (0,0,1,1)$, in base $10$ we have $v_4= 2^{-1} + 2^{-4} = 0.5625$. The Sobol sequence is then $u_0 = 0$, for $i=1 = 1 \times 2^0$, we have $u_1 = v_1 = (0,1)$ which implies $v_1 = 1/2$ in the decimal system. For $i=2 = 0 \times 2^0 + 1 \times 2^1$, we have $u_2 = (0 \times v_1) \oplus (1 \times v_2) = v_2$ i.e. $u_2= 3/4$, $u_3 = v_1 \otimes v_2 = (0,1)$ which is $1/4$, then $u_4 = v_3$ i.e. $3/8$. The figure below illustrates the construction of the first $8$ points of the Sobol sequence by the R package \textit{randtoolbox}.
  \begin{center}
    \includegraphics[scale = 0.7]{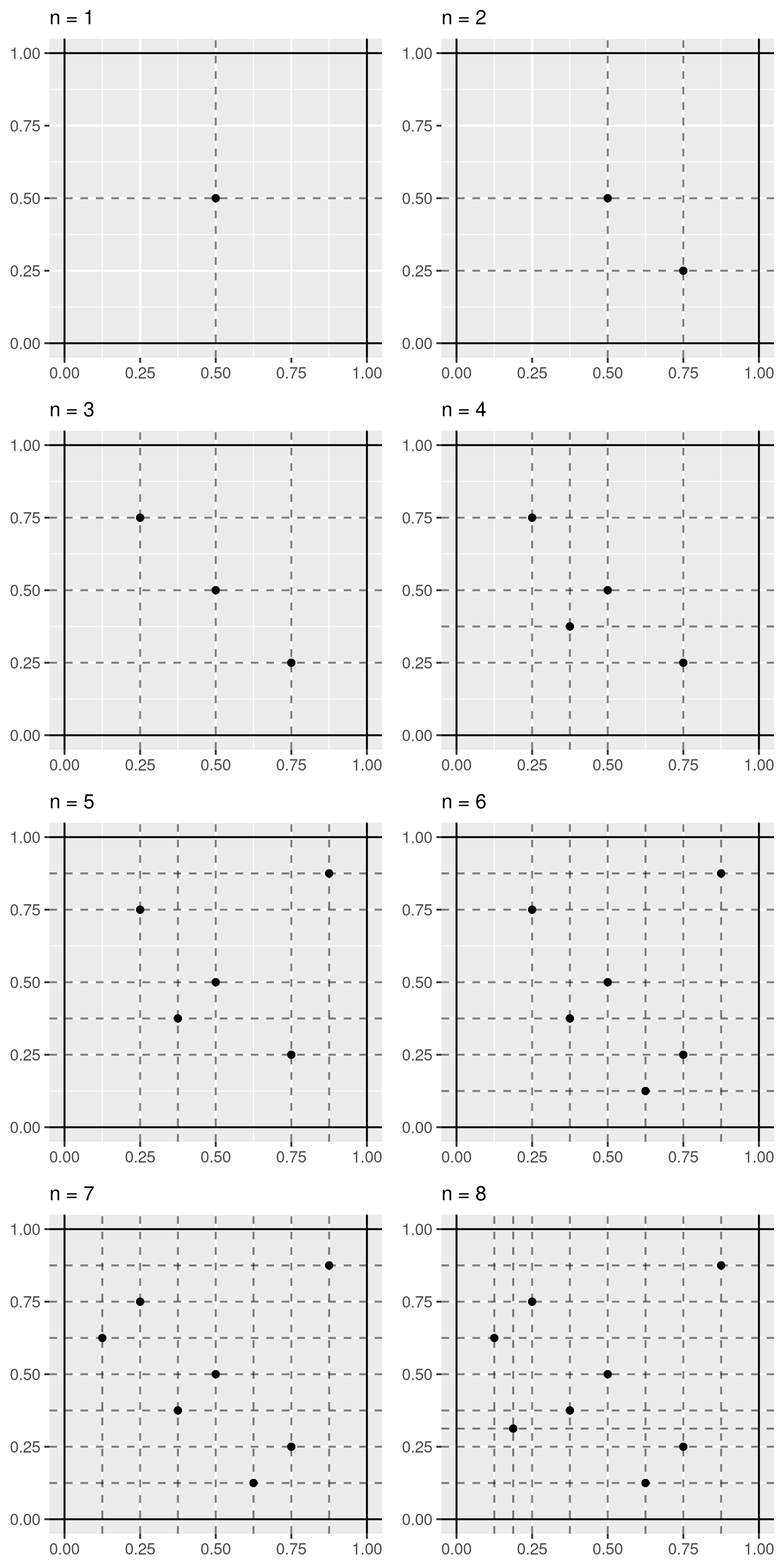}
  \end{center}
  \end{tcolorbox}  

  For any $d \geq 1$ fixed, this lower bound suggests a faster than $\sqrt{n}$-rate is achievable. There are a number of deterministic sequences which are close in rate to the bound (\ref{eq:lower_bound}); these include the Sobol, Halton, van der Corput and Hammersley sequences. Most of these are readily available in statistical softwares.\footnote{The R package \textit{randtoolbox}, the \textit{Sobol} module in Julia, the \textit{quasirandomset} toolset in Matlab and the \textit{SamplePack} library in C++ can generate the Sobol sequence, for instance.} There is, however, a caveat that when $d$ becomes large $C_d$ can also become large. For instance, $C_d = 2^d$ for the Sobol sequence which increases very rapidly with $d$. As a result, in finite samples MC integration may outperform qMC integration for $d$ large relative to $n$. Under additional smoothness conditions, so-called higher-order sequences can achieve even faster rates of order $n^{-\alpha}$ (up to log-terms) for some $\alpha>1$ which depends on the smoothness of $f$ and higher-order properties of the sequence.
  
  \subsection{Randomized quasi-Monte Carlo and the Scramble} \label{sec:rqMC}
  These results above are restrictive since they require the integrand $f$ to be smooth, otherwise $\|f\|_{TV} = +\infty$ and the Koksma-Hlawka inequality (\ref{eq:KHineq}) is uninformative. In economics, many problems involve non-smooth integrands such as simulation-based estimation of discrete choice models \citep{Train2009}. Also, $\hat I_{n}^{qMC}$ computed with a deterministic sequence is typically biased and its approximation error is hard to evaluate numerically. This would make it difficult to compute standard errors in an estimation setting. 
  
  One solution is to use randomized quasi-Monte Carlo (RqMC) methods. A simple randomizer is the digital shift. Take one random draw $u \sim \mathcal{U}_{[0,1]^d}$, a qMC sequence $u_1,\dots,u_n$ ( e.g. Sobol, Halton) and compute $\tilde u_i = [u_i + u] \text{ modulo } 1$. The modulo operator is applied one dimension at a time. This shifts all the co-ordinates of $u_1,\dots,u_n$ by the same random quantity $u$ and preserves the order of magnitude of its star discrepancy $D_n^\star$. The randomized $\tilde u_i$ are identically distributed $\mathcal{U}_{[0,1]^d}$ but not independent. The estimator $\hat I_n^{RqMC}$ is unbiased. To approximate its variance, apply the digital shift with different draws $u$ to compute the integral several times and then compute the variance across these estimates \citep{Lemieux2009}.
  
  Another randomization approach, which will be the main focus of this paper, is the scramble introduced by \citet{owen1995randomly}. Similarly to the random shift above, it transforms a deterministic low-discrepancy sequence into random identically but not independently distributed uniform $\mathcal{U}_{[0,1]^d}$ draws. Since the scrambled draws are uniform, the estimator $\hat I_n^{scramble}$ is unbiased. The procedure is described in the box below. The scramble does not deteriorate the discrepancy of the original sequence, in fact it was shown that it can further improve it \citep[see][Chapter 13.1 for bibliographical references]{Dick2010}. 
  
  The scramble approximates $I$ under the same conditions as the classical CLT as shown in Theorem \ref{th:owen}. The underlying theory is quite involved since it relies of Walsh expansions, an approach similar to Fourier expansions but in a digital basis $b$ which requires an understanding of both number theory and functional approximation theory. See \citet{Dick2010} for an introduction to the relevant material and proofs.   
  \begin{theorem}[Owen, 1997] \label{th:owen} Let $u_1,\dots,u_n$ be a scrambled sequence using the algorithm proposed by \citet{owen1995randomly}. If $f$ is measurable and $f(u), u \sim \mathcal{U}_{[0,1]^d}$ has finite variance then:
    \[ \frac{1}{n} \sum_{i=1}^n f(u_i) - \int_{[0,1]^d} f(u)du = o_p(n^{-1/2}). \]
  \end{theorem}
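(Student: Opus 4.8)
The plan is to reduce the statement to a variance bound and then invoke Chebyshev's inequality. Owen's scramble leaves each point marginally distributed as $\mathcal{U}_{[0,1]^d}$, so (as already noted above) $\hat I_n^{scramble}$ is unbiased, i.e. $\mathbb{E}[\hat I_n^{scramble}] = I$. Hence $\hat I_n^{scramble} - I$ has mean zero and it suffices to prove that $n\,\mathrm{Var}(\hat I_n^{scramble}) \to 0$: Markov's inequality applied to $(\hat I_n^{scramble}-I)^2$ then gives $\mathbb{P}(|\hat I_n^{scramble}-I| > \varepsilon n^{-1/2}) \le n\,\mathrm{Var}(\hat I_n^{scramble})/\varepsilon^2 \to 0$, which is exactly $\hat I_n^{scramble}-I = o_p(n^{-1/2})$.

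The core is the variance analysis, which rests on a multiresolution (Walsh/Haar-type) expansion of $f$ in the digital base $b$ underlying the net. Writing $f = \sum_{\kappa} \hat f_\kappa\, \psi_\kappa$ in this orthonormal basis, the scrambled-net variance decomposes as $\mathrm{Var}(\hat I_n^{scramble}) = \frac{1}{n}\sum_{\kappa} \Gamma_{n,\kappa}\, \hat f_\kappa^2$, where the ``gain coefficients'' $\Gamma_{n,\kappa}$ encode how the net together with the random scramble treat each basis function. Two facts about these coefficients, both due to Owen, are needed. First, a uniform bound: $0 \le \Gamma_{n,\kappa} \le \Gamma < \infty$ for a constant $\Gamma$ depending only on $b$, $d$ and the quality parameter $t$ of the sequence (for a $(t,m,d)$-net one may take $\Gamma = b^{t}\bigl(\tfrac{b+1}{b-1}\bigr)^{d-1}$), so that $n\,\mathrm{Var}(\hat I_n^{scramble}) \le \Gamma\,\mathrm{Var}(f(u))$ --- scrambled nets are never much worse than plain Monte Carlo. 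Second, equidistribution: once $n=b^m$ is large enough that the net is fine relative to the resolution of $\psi_\kappa$, the net integrates $\psi_\kappa$ exactly and the scramble preserves this exactness, i.e. $\Gamma_{n,\kappa} = 0$ for all $\kappa$ below a resolution level that grows with $m$.

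These two facts combine through a truncation argument. Fix $\varepsilon > 0$. Since $f \in L^2([0,1]^d)$, its expansion converges in $L^2$, so there is a resolution level $K$ such that the tail $r_K := f - \sum_{\kappa \le K} \hat f_\kappa \psi_\kappa$ satisfies $\|r_K\|_2^2 < \varepsilon/\Gamma$. For all $n$ large enough the net resolves every $\psi_\kappa$ with $\kappa \le K$, so only the tail contributes to the variance, and the uniform bound gives $n\,\mathrm{Var}(\hat I_n^{scramble}) = n\,\mathrm{Var}(\widehat{(r_K)}_n) \le \Gamma \|r_K\|_2^2 < \varepsilon$. Letting $\varepsilon \downarrow 0$ yields $n\,\mathrm{Var}(\hat I_n^{scramble}) \to 0$ along $n = b^m$; the extension to general $n$ along a scrambled $(t,d)$-sequence follows by nesting each $n$ between consecutive powers $b^m$ and absorbing the resulting $O(b^m/n)$ factor, which is bounded.

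The main obstacle is the variance decomposition itself: setting up the Walsh/Haar basis adapted to the digital net, computing the covariance structure induced by Owen's nested uniform scramble, and proving the two properties of the gain coefficients $\Gamma_{n,\kappa}$ (the uniform bound and their vanishing at coarse resolutions). This is precisely the ``quite involved'' machinery alluded to above, and for those details I would cite \citet{owen1995randomly} and \citet{Dick2010}, Chapter 13; everything after that is the elementary truncation-plus-Chebyshev argument sketched here.
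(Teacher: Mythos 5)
Your sketch is correct in outline, and it follows exactly the route the paper itself points to: the paper does not prove Theorem \ref{th:owen} but imports it from \citet{Owen1997}, noting that the argument rests on Walsh-type expansions in base $b$, which is precisely your unbiasedness-plus-Chebyshev reduction combined with the gain-coefficient variance decomposition, the uniform bound $\Gamma_{n,\kappa}\leq \Gamma$, the vanishing of gain coefficients at coarse resolutions, and the $L^2$ truncation, with the two key properties of $\Gamma_{n,\kappa}$ deferred to \citet{owen1995randomly} and \citet{Dick2010}. Since that is the same (cited) approach rather than a genuinely different one, no further comparison is needed; the only point to keep an eye on is that the clean $o(1/n)$ variance statement is native to sample sizes $n=\lambda b^m$, so the extension to arbitrary $n$ along the scrambled sequence deserves the care you briefly flag.
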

  Under additional smoothness conditions $\hat I_n^{scramble}$ approximates $I$ at a near $n^{-3/2}$-rate; which is faster than deterministic qMC sequences. A refinement of the orginal algorithm, higher-order scrambling, can achieve even faster rate for smooth integrands; for instance, in some cases the convergence can be of order $n^{-5/2}$ or $n^{-7/2}$. 
  The scrambled estimator $\hat I_n^{Scramble}$ is unbiased. Its variance can be approximated the same way as for $\hat I_n^{RqMC}$. Note that these results assume $d \geq 1$ is fixed. In practice, MC may outperform the scramble for $d$ large. Other scrambles have also been proposed by \citet{Hickernell1996} and \citet{Matousek1998}, among others. See \citet{Lemieux2009} and \citet{Dick2010} for additional references.\\

  \begin{tcolorbox}[breakable,title=Owen's Scramble]
    The following material is adapted from \citet{Owen1997}. 
    The scramble starts from a set of points $u_i = (u_{i}^1,\dots,u_{i}^d) \in [0,1]^d$ with base $b$ representation: 
    \[ u_i^\ell = u_{i,1}^\ell b^{-1} + u_{i,2}^\ell b^{-2} + \dots,\]
    for each coordinate $\ell \in \{1,\dots,d\}$. The scrambled points $\tilde u_i$ are generated by applying random permutations to the $b$-adic representation $u_{i,j}^\ell$ of $u_i$. Let $\pi^\ell$ be random permutations from $\{0,\dots,b-1\}$ to itself drawn uniformly over all permutations and independently across coordinates $\ell$ (there are $b! = b \times (b-1) \times (b-2) \times \dots \times 1$ such permutations), then the scrambled sequence is generated recursively as:
  \begin{align*} 
    \tilde u_{i,1}^\ell = \pi^\ell(u_{i,1}^\ell),  \quad
    \tilde u_{i,2}^\ell = \pi^\ell_{u_{i,1}^\ell}(u_{i,2}^\ell), \quad
    \tilde u_{i,3}^\ell = \pi^\ell_{u_{i,1}^\ell,u_{i,2}^\ell}(u_{i,3}^\ell), \quad
    \dots
  \end{align*}
  the permutation for the $j$-th digit depends on the $j-1$ previous digits; this creates path dependence in the scrambling process which makes the algorithm computationally demanding.   
  Owen's algorithm, described above, is also known as nested uniform or fully random scrambling. ACM Algorithm 823 implements a faster non-nested scrambling algorithm (which relies on matrix operations) that is also called Owen's scramble in statistical softwares \citep{Hong2003,Dutang2019}.\footnote{As discussed in \citet{Hong2003}, this is ``to recognize that it is done in the spirit of Owen’s original proposal.''} Although the two implementations are different, the resulting sequences share important desirable theoretical properties.
  \tcbline
  To illustrate the nested scramble described above, consider a Sobol sequence written in base $b=2$ with $d=1$. There are two possible permutations: $\pi(0)=1,\pi(1)=0$ and $\pi(0)=0,\pi(1)=1$. First, the permutation is applied to $u_{i,1}$. The permutation $\pi(0)=0$ preserves the first digit. In practice, this implies that $u_i \geq 0.5 \Rightarrow \tilde u_i \geq 0.5$. The other possible permutation $\pi(1)=0$ splits the $[0,1)$ segment into two parts $[0,1/2)$ and $[1/2,1)$ and permutes them: $u_i \geq 0.5 \Rightarrow \tilde u_i < 0.5$. 
  
  The second step permutes the second digit: split the unit interval into $4$ subintervals $[0,1/4), [1/4,1/2), [1/2,3/4), [3/4,1)$ and apply a permutation as before but between the pairs $[0,1/4), [1/4,1/2)$ and $[1/2,3/4), [3/4,1)$.  
  For instance, suppose that $\pi(0)=0$ so that the first digit is unchanged. Consider the pair $[0,1/4), [1/4,1/2)$, and assume the permutation is $\pi_{0}(0)=1$ then $u_i \in [0,1/4) \Rightarrow \tilde u_i \in [1/4,1/2)$. Separately, if $\pi_{1}(0)=0$ then $u_i \in [1/2,3/4) \Rightarrow \tilde u_i \in [1/2,3/4)$.
  
  The third step further splits $[0,1)$ into $8$ subintervals $[0,1/8),[1/8,1/4),[1/4,3/8),[3/8,1/2),\dots$ and applies permutations over the $4$ pairs following the same logic. For instance if $u_i \in [0,1/8)$, then $\tilde u_{i,3} = \pi_{00}(0)$ while $\tilde u_{i,3} = \pi_{01}(0)$ when $u_i \in [1/4,3/8)$. Note that $\pi_{00}$ and $\pi_{01}$ are different uniform permutations draws. The process continues until a desired level of precision is attained. The table below illustrates the first two iterations of the scramble when applied to a small sequence with $d=1$ for some realization of the permutations.\\
  \begin{center}
    \includegraphics[scale = 0.7]{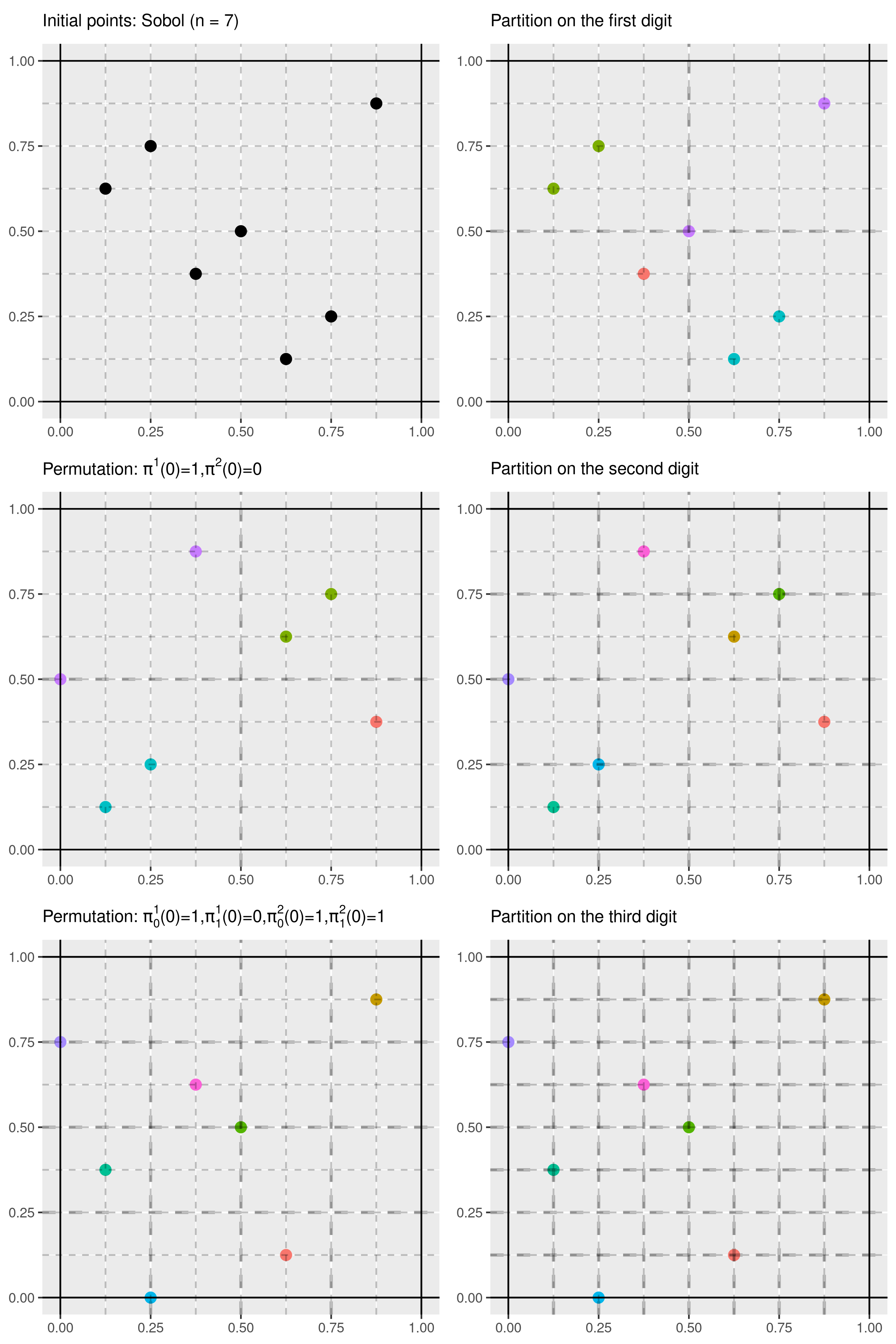}
  \end{center}
    \begin{center}
    \begin{tabular}{l|cccc}
      \hline \hline
     & $u_1$ & $u_2$ & $u_3$ & $u_4$ \\ 
      \hline 
      initial points          & 0.125 & 0.375 & 0.500 & 0.875 \\ 
      $\pi(0)=1$              & 0.625 & 0.875 & 0.000 & 0.375 \\ 
      $\pi_0(0)=0,\pi_1(0)=1$ & 0.625 & 0.875 & 0.500 & 0.125 \\
       \hline \hline
    \end{tabular}
  \end{center}
   The figure above illustrates the first two iterations of the nested scramble applied the Sobol sequence with $n=7$ and $d=2$. The first iteration splits $[0,1)^2$ into 4 squares and performs permutations over the two rectangles on the $x$-axis ($\pi^1$) and $y$-axis ($\pi^2$). The second iteration further splits each square into 4 sub-squares (so there is a total of 16 squares) and performs permutations between the 2 pairs of rectangles on the $x$-axis ($\pi_0^1$ and $\pi_1^1$) and the $y$-axis ($\pi_0^2$ and $\pi_1^2$). The next iteration further splits each square into 4 subsets and performs additional permutations. The procedure continues until a certain level of numerical precision is achieved. 
   
   Note that although the $\tilde u_i^\ell$ are not independent over $i$ for a given $\ell$, they are independent over $\ell$ for any $i$ since the permutations are drawn independently over dimensions $\ell \in \{1,\dots,d\}$. This feature is quite important for the finite sample properties of the scramble: while the Sobol sequence could display correlations across dimensions $\ell$ for some direction numbers, the nested scramble guarantees independence over $\ell$. This is visible in Figure \ref{fig:draws} where some Sobol points are aligned on the 45 degree line whereas the scrambled Sobol sequence does not display such patterns.
  \end{tcolorbox}

  \begin{figure}[H]
    \caption{Random, Sobol and Scrambled Sequences: $n=500,d=2$} \label{fig:draws}
    \begin{center}
    \includegraphics[scale=0.9]{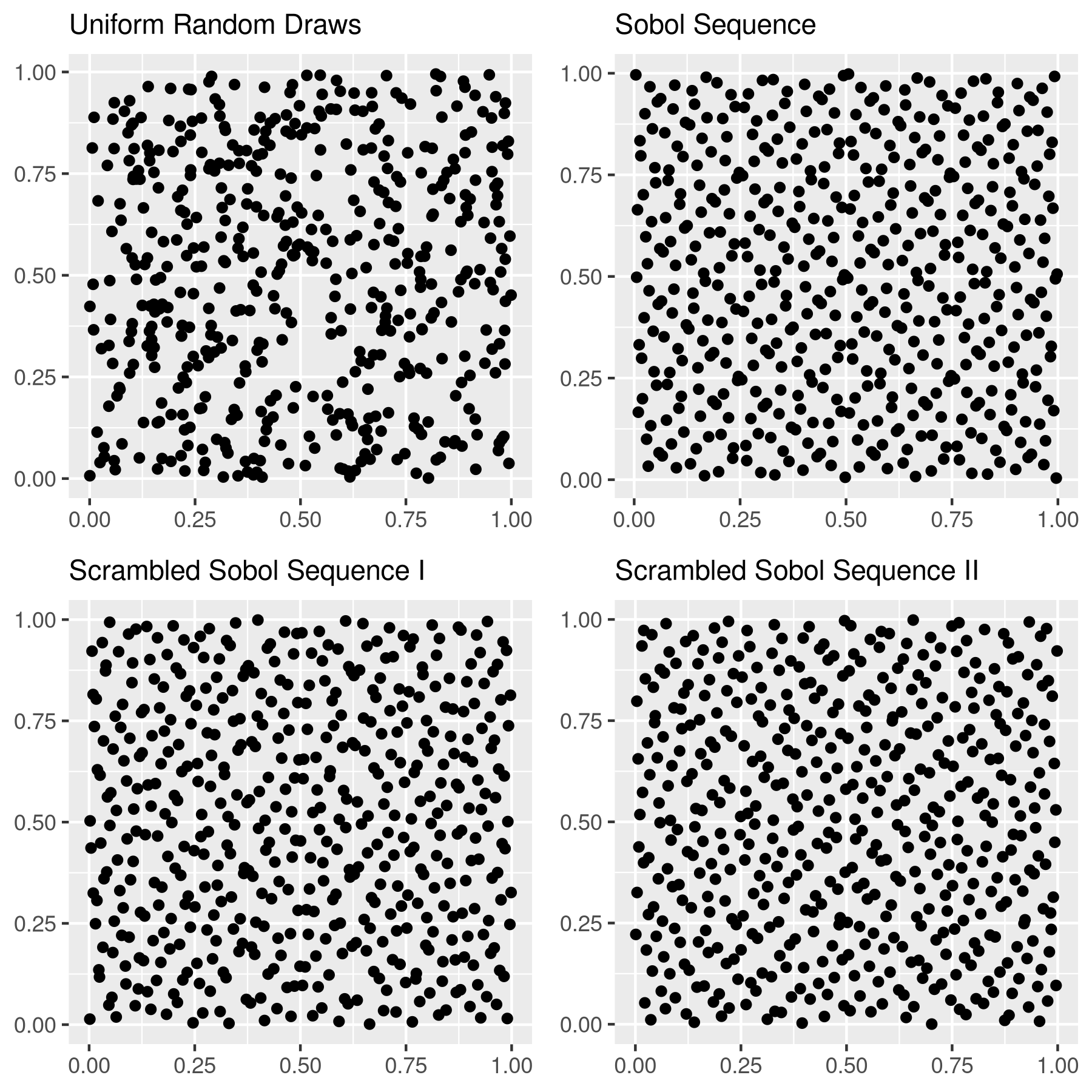}
    \end{center}
    \textbf{Legend:} top-left random uniform draws, top-right Sobol sequence, bottom-left one realization of the scrambled Sobol sequence,  bottom-right another realization of the scrambled Sobol sequence. 
  \end{figure}

  Figure \ref{fig:draws} illustrates the differences between random, deterministic and scrambled sequences. A particular realization of a random sample may have points clustered in some areas of $[0,1]^2$, as visible in the figure. The Sobol sequence covers the square better in this case, though some points might appear to cluster here as well. The two realizations of the scrambled Sobol sequence cover the square well and cluster slightly less (better discrepancy) than the deterministic Sobol points.  
 
  \section{A Scrambled Method of Moments} \label{sec:srambledMM}
  This section introduces the main algorithms to implement the Scrambled Method of Moments and Scrambled Indirect Inference. The data generating process (DGP) is the same as in \citet{Gourieroux1993}:
  \begin{align}
      &y_{i,t} = g_{obs}(y_{i,t-1},x_{i,t},z_{i,t};\theta) \label{eq:obs}\\ 
      &z_{i,t} = g_{latent}(z_{i,t-1},u_{i,t};\theta) \text{ where } u_{i,t} \overset{iid}{\sim} \mathcal{U}_{[0,1]^d}. \label{eq:latent}
  \end{align}
 A simple transformation allows to replace $u_{i,t}$ with $e_{i,t} = \Phi^{-1}(u_{i,t}) \overset{iid}{\sim} \mathcal{N}(0,1)$ or other distributions by the Rosenblatt transform. $i=1,\dots,n$ indexes individuals and $t=1,\dots,T$ the time dimension. $y_{i,t}$ is the vector of observed outcome variables. $x_{i,t}$ is a vector of strictly exogenous covariates and $z_{i,t}$ a vector of unobserved latent variables. The functions $g_{obs}$ and $g_{latent}$ are assumed to be known up to a finite dimensional parameter $\theta$ to be estimated. 

  \subsection{Static Models} \label{sec:SMM_static}
  For static models, which correspond to cross-sections and short-panels, the $t$ index will be omitted to re-write (\ref{eq:obs})-(\ref{eq:latent}) and the moments as:
  \begin{align}
    &y_i = g(x_i,u_i;\theta), \quad \hat \psi_n = \frac{1}{n} \sum_{i=1}^n \psi(y_i,x_i), \label{eq:static}
  \end{align}
  where $y_i = (y_{i,1},\dots,y_{i,T})$ and $u_{i} = (u_{i,1},\dots,u_{i,T}) \in [0,1]^{T \times d_{u_{i,t}}}$. The dimension $d$ of $u_i$ is $T \times \text{dim}(u_{i,t})$, using the notation in (\ref{eq:obs})-(\ref{eq:latent}).
  Given a vector of moments $\hat \psi_n$ and a weighting matrix $W_n$, a simple SMM estimator is given in Algorithm \ref{algo:SMM}.
 \begin{algorithm}
  \caption{Simulated Method of Moments for Static Models} \label{algo:SMM}
  \begin{algorithmic}
    \State \textbf{Draw a random sequence} $u^s_{i} \overset{iid}{\sim} \mathcal{U}_{[0,1]^d}, i = 1,\dots,n$; and $s = 1, \dots, S$
    \State Simulate:  $y_{i}^s(\theta) = g_{obs}(x_{i},u_i^s;\theta)$
    \State Compute: $\hat \psi_n^S(\theta) = \frac{1}{n \times S} \sum_{s=1}^S \sum_{i=1}^n \psi(y_{i}^s(\theta),x_{i})$
    \State Find: $\hat \theta_n^S = \text{argmin}_{\theta \in \Theta} \|\hat \psi_n-\hat \psi_n^S(\theta)\|_{W_n}$
  \end{algorithmic}
\end{algorithm}
    
Without covariates $x_i$, the expectation $\mathbb{E}[\hat \psi_n^S(\theta)]$ has the same form as (\ref{eq:integral}). The scramble can be applied if the moments have finite variance. The resulting Algorithm \ref{algo:static_no_covariates} is thus very similar to SMM. 
\begin{algorithm}
  \caption{Scrambled Method of Moments for Static Models without Covariates} \label{algo:static_no_covariates}
  \begin{algorithmic}
    \State \textbf{Draw a scrambled sequence}  $\tilde u_{i} \overset{iid}{\sim} \mathcal{U}_{[0,1]^d}, i = 1,\dots,n \times S$ 
    \State Simulate:  $\tilde y_{i}(\theta) = g_{obs}(\tilde u_i;\theta)$
    \State Compute: $\hat \psi_n^S(\theta) = \frac{1}{n \times S}\sum_{i=1}^{n \times S} \psi( \tilde y_{i}(\theta))$
    \State Find: $\hat \theta_n^S = \text{argmin}_{\theta \in \Theta} \|\hat \psi_n-\hat \psi_n^S(\theta)\|_{W_n}$
  \end{algorithmic}
\end{algorithm}
In practice, one samples an $(nS) \times d$ matrix of scrambled shocks rather than $S$ different $n \times d$ matrices of random numbers. This is may be useful because using a large simulated sample of $n \times S$ observations implies a reduction in variance greater than $S$, as a consequence of the faster rate in Theorem \ref{th:owen}, compared to using $S$ independent simulated samples. Asymptotic results for Algorithm \ref{algo:static_no_covariates} are provided in Proposition \ref{prop:static_nocovariates}, assuming the moments are smooth in $\theta$. These assumptions are comparable to those required for SMM.

When $\hat \psi_n$ is a vector of auxiliary moments as \citep{Gourieroux1993}, the results from Proposition \ref{prop:static_nocovariates} can be extended for the scramble as shown in Proposition \ref{prop:ind}. Again, the assumptions are comparable to those required for Indirect Inference. These Indirect Inference results could also be extended to non-smooth moments and time series given appropriate changes to the assumptions.

In the presence of covariates, $\mathbb{E}[\hat \psi_n^S(\theta)]$ does not have the same form as (\ref{eq:integral}):
\[ \mathbb{E}[\hat \psi_n^S(\theta)] = \int_{[0,1]^d \times \mathcal{X}}  \psi\left( g(x,u;\theta),x\right) f_x(x)dx du, \]
where $f_x$ is joint density of the covariates $x$.\footnote{The results could be extended to allow some components of $x$ to be discrete. However, the assumptions in the next Section imply that at least one of the covariates should have a continuous density.} Without further assumptions, it is typically not possible to sample from the population $f_x$ directly so that qMC sequence with $n \times S$ elements for $(x_i,u_i)$ cannot be constructed. Taking the covariates as given, Algorithm \ref{algo:static_covariates} relies on $S$ independent scrambled sequences of size $n$ rather than a large sequence of size $n \times S$ as in Algorithm \ref{algo:static_no_covariates}.\footnote{It is implicitly assumed that $(x_1,\dots,x_n)$ is a random sample. If the ordering is deterministic, then $x_i$ and $u_i)$ are not independent. Randomly shuffling the covariates without replacement solves this issue.}
    \begin{algorithm}
      \caption{Scrambled Method of Moments for Static Models with Covariates} \label{algo:static_covariates}
  \begin{algorithmic}
      \State \textbf{Draw $S$ independently scrambled sequences }  $\tilde u_{i}^s \overset{iid}{\sim} \mathcal{U}_{[0,1]^d}, i = 1,\dots,n$
      \State Simulate:  $\tilde y^s_{i}(\theta) = g_{obs}(x_{i}, \tilde u^s_i;\theta)$
      \State Compute: $\hat \psi_n^S(\theta) = \frac{1}{n \times S}\sum_{i=1}^{n \times S} \psi( \tilde y^s_{i}(\theta),x_i)$
      \State Find: $\hat \theta_n^S = \text{argmin}_{\theta \in \Theta} \|\hat \psi_n-\hat \psi_n^S(\theta)\|_{W_n}$
  \end{algorithmic}
    \end{algorithm}

  For each $s \in \{1,\dots,S\}$, the function $\mathbb{E}[ \psi( y^s_i(\theta),x_i)|\tilde u_i^s = u]$ does not depend on $x$ so that Theorem \ref{th:owen} can be applied to this conditional expectation, assuming it has finite variance. This insight was used to derive CLTs for moments based on hybrid sequences which combine MC draws with qMC sequences in \citet{Okten2006} and \citet{Buchholz2017} for bounded $\psi$. The results in Proposition \ref{prop:static_covariates} extend these results to unbounded empirical processes over $\theta \in \Theta$, allowing $\hat \psi_n^s$ to be non-smooth in $\theta$. The assumptions are more demanding than for SMM, although they could be weakened for smooth moments with covariates. The conditional expectation $\mathbb{E}[\hat \psi_n^S(\cdot)|\tilde u_1,\tilde u_2,\dots]$ itself is required to be smooth in $\theta$, \textit{i.e.} integrating out the covariates smoothes out the sample and simulated moments. This implies that at least one of the covariates has a continuous density.

  \subsection{Dynamic Models} \label{sec:SMM_dynamic}
    For dynamic models, which correspond to time series observations, the $i$ index will be omitted to re-write (\ref{eq:obs})-(\ref{eq:latent}) and the moments as:
  \begin{align}
    &y_t = g_{obs}(y_{t-1},z_t;\theta), \quad z_t = g_{latent}(z_{t-1},u_t;\theta), \quad u_t \overset{iid}{\sim} \mathcal{U}_{[0,1]^d} \label{eq:dyn_model}
  \\
    &\hat \psi_T = \frac{1}{T} \sum_{t=L+1}^T \psi(y_t,\dots,y_{t-L}). \label{eq:dyn_mom}
  \end{align}
  Covariates are omitted to simplify the theoretical results. 
  Only moments involving a fixed and finite number of lags $L$ will be considered as explained below. Algorithm \ref{algo:SMMdyn} details the SMM procedure to estimate (\ref{eq:dyn_model})-(\ref{eq:dyn_mom}). 
  \begin{algorithm}
    \caption{Simulated Method of Moments for Dynamic Models} \label{algo:SMMdyn}
    \begin{algorithmic}
      \State \textbf{Draw a random sequence} $u^s_t \overset{iid}{\sim} \mathcal{U}_{[0,1]^d}, t = 1,\dots,T;$ $s = 1, \dots, S$
      \State Set $(y_0^s,z_0^s) = (y_0,z_0)$, a fixed initial value
      \State Simulate: $z_t^s(\theta) = g_{latent}(z_{t-1}^s,u_t^s;\theta)$ and $y_t^s(\theta) = g_{obs}(y_{t-1}^s(\theta),z_t^s(\theta);\theta)$
      \State Compute: $\hat \psi_T^S(\theta) = \frac{1}{T \times S} \sum_{s=1}^S \sum_{t=L+1}^n \psi(y_t^s(\theta),\dots,y_{t-L}^s(\theta))$
      \State Find: $\hat \theta_T^S = \text{argmin}_{\theta \in \Theta} \|\hat \psi_T-\hat \psi_T^S(\theta)\|_{W_T}$
    \end{algorithmic}
  \end{algorithm}

  To understand the issues caused by the dynamics for the scramble and qMC integration, note that for any initial value $(y_0,z_0)$, $y_t$ can be re-written as:
  \[ y_t = g_t(u_t,\dots,u_1,y_0,z_0;\theta), \]
  for some function $g_t$ which can be expressed in terms of $g_{obs}$ and $g_{latent}$. Using this notation, the expected value of $\hat \psi_T$ can be re-written as:
  \[ \mathbb{E}(\hat \psi_T) = \frac{1}{T} \sum_{t=L+1}^T \int_{[0,1]^{t \times d}} \psi \circ (g_t,\dots,g_{t-L})(u_t,\dots,u_1,y_0,z_0) du_t\dots du_1.  \]
  The expectation above differs from the qMC setting in (\ref{eq:integral}) in several ways. First, the function to be integrated involves $g_t$ which varies with $t$ unlike the function in (\ref{eq:integral}). Second, the integral is computed over $u_1,\dots,u_t$ which has a dimension $t$ that increases with the sample size. This implies a curse of dimensionality for qMC which requires the dimension $d$ to be fixed. Third, both randomized and non-randomized qMC sequences are identically but not independently distributed. A naive implementation of the scramble could introduce spurious dependence in the simulated data and the resulting estimator may not be consistent as a result.
  
  Implementing qMC integration in a dynamic setting without additional structure comes at a cost. In finance, qMC sequences are used to simulate long time series and price financial derivatives \citep[see e.g.][]{Paskov1995,Lemieux2009}. This is done by setting $d = T$ and sampling a very large number $n$ of financial series. In the present setting, this amounts to picking $S$ very large and $d=T$ which is not computationally attractive compared to standard SMM.\footnote{Recall that for the Sobol sequence $C_d = 2^d$ so that the error would be of the order of $2^T/S$. Consistency of the qMC integral would require $S \gg 2^T$, \textit{i.e.} $S$ needs to grow exponentially fast with the sample size $T$.} For state-space filtering, \citet{Gerber2015,Gerber2017} propose a Hilbert sorting step to re-sample draws into a low-discrepancy sequence using the Hilbert fractal map from $[0,1]$ to $[0,1]^d$. This Hilbert map can be challenging to implement in practice and suffers from a curse of dimensionality. 
  
  \subsubsection{qMC-only Approach} \label{sec:SMM_dyn_qMC}
  The class of moments described in (\ref{eq:dyn_mom}) where the number of lags $L$ is fixed and finite allows to circumvent these issues. To get some intuition, suppose that it is possible to draw $(y^1_t,z^1_t) = F_{y,z}^{(-1)}(v^1_t)$ from the stationary distribution directly using the Rosenblatt transform with $v^1_t \sim \mathcal{U}_{[0,1]^{\text{dim}(v_t^1)}}$. Then, using additional shocks $u_t^2,\dots,u_t^L$ one could simulate a short time series consisting of $L \geq 1$ observations for each $t = 1,\dots,T \times S$:
  \begin{align*}
    &(y^1_t,z^1_t) = F_{y,z}^{(-1)}( v^1_t)\\
    &(y_t^2,z_t^2) = \left(g_{obs}(y_t^1,z_t^2;\theta),\, g_{latent}(z_t^1, u_t^2;\theta) \right)\\
    &\quad \vdots\\
    &(y_t^L,z_t^L) = \left(g_{obs}(y_t^{L-1},z_t^L;\theta),\, g_{latent}(z_t^{L-1}, u_t^{L};\theta) \right).
  \end{align*}
  The resulting draws $(y_t^1,\dots,y_t^L)$ are \textit{iid} over $t =1,\dots,T\times S$ from the stationary distribution by construction.\footnote{This idea was also used in \citet{Davis2019} but as a variance reduction method with MC draws.} This is now within the setting of (\ref{eq:integral}). Algorithm \ref{algo:ScrMMdyn1} describes a Scrambled Method of Moments for models where simulating as described above is feasible. The main idea is to simulate the $(y_t^1,\dots,y_t^L)$ $T \times S$ times with scrambled shocks $(v_t^1,u_t^2,\dots,u_t^L)_{t = 1,\dots,T\times S} \in [0,1]^d$ with  dimension $d=\text{dim}(v_t^1,u_t^2,\dots,u_t^L)$. which depends on the dimension of the shocks and the numbers of lags $L$. Note that while Algorithm \ref{algo:SMMdyn} requires $T \times S$ draws, Algorithm \ref{algo:ScrMMdyn1} effectively requires $n \times S \times L$ draws. However, the latter Algorithm is massively parallel over $t$ so that for some models it may run faster than the former in a parallel environment. Proposition \ref{prop:dyn_case1} provides the asymptotic results for Algorithm \ref{algo:ScrMMdyn1}.\footnote{When simulating the initial draw with the Rosenblatt transform is not possible, one may consider using a fixed starting value and a \textit{burn-in} period assuming some decay conditions hold. This is only considered for the hybrid MC-qMC method, theoretical investigations for qMC-only draws is left to future research.}
    \begin{algorithm}
      \caption{Scrambled Method of Moments for Dynamic Models - qMC-only Approach} \label{algo:ScrMMdyn1}
      \begin{algorithmic}
        \State \textbf{Draw a scrambled sequence}  $\tilde u_t = (\tilde v_{t}, \tilde u_t^2,\dots,\tilde u_{t}^L) \in [0,1]^{d \times (L-1) + \tilde d}$, $t = 1,\dots,T \times S$
        \State Compute $(\tilde y^1_t(\theta),\tilde z^1_t(\theta)) = F^{-1}(\tilde v;\theta)$ for $t =1, \dots, T\times S$
        \State Simulate: $\tilde z^\ell_t(\theta) = g_{latent}(\tilde z_{t}^{\ell-1},\tilde u_t^\ell;\theta)$ and $\tilde y_t^\ell(\theta) = g_{obs}(\tilde y_{t}^{\ell-1}(\theta),\tilde z_t^\ell(\theta);\theta)$ for $\ell =2,\dots, L$
        \State Compute: $\hat \psi_T^S(\theta) = \frac{1}{T \times S} \sum_{t=1}^{T\times S}  \psi(\tilde y_t^L(\theta),\dots,\tilde y_t^1(\theta))$
        \State Find: $\hat \theta_T^S = \text{argmin}_{\theta \in \Theta} \|\hat \psi_T-\hat \psi_T^S(\theta)\|_{W_T}$
      \end{algorithmic}
    \end{algorithm}

    Sampling from the stationary distribution directly is feasible for some DGPs such as the Gaussian ARMA model (see the Monte-Carlo example in Section \ref{sec:arma}) or the following stochastic volatility process:
    \[ \log(\sigma_t) = \mu_\sigma + \rho_\sigma \log(\sigma_{t-1}) + \kappa_\sigma e_{t,1}, \quad y_t = \sigma_t e_{t,2},\quad (e_{t,1},e_{t,2}) \overset{iid}{\sim} \mathcal{N}(0,I_2).  \]
    Since the log-volatility follows a Gaussian AR(1) process, one can simply draw $\log(\sigma_t^1) \sim \mathcal{N}(\mu_\sigma/(1-\rho_{\sigma},\kappa_\sigma^2/[1-\rho_\sigma^2])$ and $y_t^1 = \sigma^1_t e^1_{t,2}$ where $e^1_{t,2}$ to simulate $(y_t^1,\sigma_t^1)$ from their stationary distribution. For more complex DGPs this may not be feasible, however.

    \subsubsection{Hybrid MC-qMC Approach}
    When the direct approach in Algorithm \ref{algo:ScrMMdyn1} is not feasible, an alternative is to sample the initial draws $(y_t^1,z_t^1)$ by MC methods and then simulate $(y_t^2,z_t^2),\dots,(y_t^L,z_t^L)$ using the scramble. This hybrid MC-qMC approach allows to sample from intractable distributions while retaining some of the the features of qMC integration.

    \begin{algorithm}
      \caption{Scrambled Method of Moments for Dynamic Models - Hybrid MC-qMC Approach} \label{algo:ScrMMdyn2}
      \begin{algorithmic}
        \State \textbf{Draw a random sequence} $u^1_t \overset{iid}{\sim} \mathcal{U}_{[0,1]^d}, t = 1,\dots,T \times S$
        \State Set $(y_0^s,z_0^s) = (y_0,z_0)$, a fixed initial value
        \State Simulate: $z_t^1(\theta) = g_{latent}(z_{t-1}^1,u_t^1;\theta)$ and $y_t^1(\theta) = g_{obs}(y_{t-1}^1(\theta),z_t^1(\theta);\theta)$
        \State \textbf{Draw a scrambled sequence}  $\tilde u_t = (\tilde u_t^2,\dots,\tilde u_{t}^L) \in [0,1]^{d \times (L-1)}$, $t = 1,\dots,T \times S$
        \State Simulate: $\tilde z^\ell_t(\theta) = g_{latent}(\tilde z_{t}^{\ell-1},\tilde u_t^\ell;\theta)$ and $\tilde y_t^\ell(\theta) = g_{obs}(\tilde y_{t}^{\ell-1}(\theta),\tilde z_t^\ell(\theta);\theta)$ for $\ell =2,\dots, L$
        \State Compute: $\hat \psi_T^S(\theta) = \frac{1}{T \times S} \sum_{t=1}^{T\times S}  \psi(\tilde y_t^L(\theta),\dots,\tilde y_t^1(\theta))$
        \State Find: $\hat \theta_T^S = \text{argmin}_{\theta \in \Theta} \|\hat \psi_T-\hat \psi_T^S(\theta)\|_{W_T}$
      \end{algorithmic}
    \end{algorithm}
  The resulting Algorithm \ref{algo:ScrMMdyn2} combines elements from Algorithms \ref{algo:SMMdyn} and \ref{algo:ScrMMdyn1}. It      requires an additional loop compared to the latter, which is more computationally demanding. Because the estimation combines MC with qMC, the variance of the estimates will typically be greater than a qMC only approach. Note that once the $(z_t^1(\theta),y_t^1(\theta))$ are drawn by MC simulations, $(z_t^\ell(\theta),y_t^\ell(\theta))_{\ell > 1}$ can be simulated in parallel which can be computationally attractive. 
     
  Proposition \ref{prop:dyn_case2} provides asymptotic results for Algorithm \ref{algo:ScrMMdyn2} with conditions similar to \citet{duffie-singleton} but assuming bounded moments. Relaxing this assumption would require to extend existing CLTs for dependent heterogeneous arrays \citep[see e.g.][Theorme 5.10]{White1984} which goes beyond the scope of this paper. The simulations in Section \ref{sec:arma} suggest that the estimator performs well with unbounded moments in practice.

  \subsection{Computing Standard Errors for the Simulated and Scrambled Method of Moments} \label{sec:se_s}
  Given that the scramble is different from standard Monte-Carlo methods, the following shows how to compute standard errors for $\hat \theta_n^{S}$ for SMM, antithetic draws and the scramble. 

  Under regularity conditions, the Simulated and Scrambled Method of Moments estimators satisfy the following asymptotic expansion:
  \[ \hat \theta_n^{S} - \theta_0 = - \left( G^\prime W_n G \right)^{-1} G^\prime W_n \left[\hat \psi_n - \hat \psi_n^S(\theta_0) \right] + o_p(n^{-1/2}), \]
  where $G = \partial_\theta \mathbb{E} \left[ \hat \psi_n^S(\theta_0) \right]$ is the usual Jacobian matrix. Under a CLT, the asymptotic variance is given by the usual sandwich formula. Given that $W_n$ is chosen by the user, only two terms need to be approximated: the Jacobian $G$ and the asymptotic variance of $[\hat \psi_n - \hat \psi_n^S(\theta_0)]$.
  
  When the moments are smooth, the plug-in Jacobian estimator $\hat G_n = \partial_\theta  \hat \psi_n^S(\hat \theta_n^S)$ is consistent for $G$ under a ULLN. For non-smooth moments, there are several possibilities. The more computationally demanding approach is to Bootstrap the estimator $\hat \theta_n^S$ directly. Alternatively, \citet{Bruins2018} propose to smooth the draws $y_{i,t}^s$ in dynamic discrete choice models using a kernel; this transforms non-smooth and unbiased into smooth but biased simulated moments. \citet{Frazier2019} rely on a change of variable argument to compute analytical Jacobians in a class of discrete choice models. The quasi-Jacobian matrix in \citet{Forneron2019} smoothes the moments themselves to approximate $G$. It is also possible to use MCMC methods to sample from a quasi-posterior distributions which approximates the frequentist distribution of $\hat \theta_n^S$ \citep[see e.g.][]{chernozhukov-hong,Wood2010}.

  For cross-sections and short panels, the asymptotic variance of $[\hat \psi_n - \hat \psi_n^S(\theta_0)]$ in SMM can be approximated with the cross-sectional variance of $[\psi(y_i,x_i)-\frac{1}{S} \sum_{s=1}^S \psi(y_i^s(\hat \theta_n^S),x_i)]$. Pooling all the simulated samples that way ensures that the estimator is consistent for both standard and antithetic draws.\footnote{Another approach is to use use the variance of $\psi(y_i^s(\hat \theta_n^S),x_i)$ divided by $S$ as an estimate for $\hat \psi_n^S(\theta_0)$. Although commonly used, this may actually not be consistent in the presence of antithetic draws. Depending on the correlation described in Section \ref{sec:MC_anti} it may either under or over-estimate the variance.} For time series, under appropriate conditions, a HAC estimator is consistent for the long-run variance of $\hat \psi_T$ and the averaged $\hat \psi^S_T(\theta_n^S)$ respectively. Computing the long-run variance for the averaged $\sum_{s=1}^S \psi(y_t^s,\dots,y_{t-L}^s)/S$ ensures that the estimate is consistent for both standard and antithetic draws. As before, an estimate for the non-averaged moment may not be consistent for antithetic draws because of the dependence between simulated moments.

  For the Scrambled Method of Moments, the variance should not be computed as above because scrambled draws are not independent from one another. Theorem \ref{th:owen} implies that the asymptotic variance only involves $\hat \psi_n$ in most cases; because simulation noise is asymptotically negligible.\footnote{See e.g. Proposition \ref{prop:static_nocovariates}.} One approach is to only compute the variance of $\hat \psi_n$. However, as illustrated in Section \ref{sec:MC}, even though the simulation noise can be small in finite samples, it may not be completely negligible for some DGPs. In these cases, one would want to account for the variance attributable to $\hat \psi_n^S$. As discussed in Section \ref{sec:rqMC}, to consistently estimates the variance of $\hat \psi_n^{S}$ one can evaluate $\hat \psi_n^{S}$ several times with different seeds for the scramble and compute the variance across these estimates.

  \section{Asymptotic Theory} \label{sec:AsymTheory}
  In the following $\hat \theta_n^S$ and $\hat \theta_T^S$ will denote the scrambled estimator for static and dynamic models respectively. Consistency and asymptotic normality results are provided for the algorithms described above. The first set of assumptions below is standard in the Monte-Carlo simulation-based estimation literature.
  \begin{assumption}[Identification, Regularity, Sample Moments] \label{ass:regular}
    Suppose the following holds:
    \begin{itemize}
      \item[i.] (Identification) $\mathbb{E}[\hat \psi_n] = \mathbb{E}[\hat \psi_n^S(\theta)] \Leftrightarrow \theta = \theta_0$.
      \item[ii.] (Regularity) $\theta_0 \in \text{interior}(\Theta)$ where $\Theta$ is a compact and convex subset of $\mathbb{R}^{d_\theta}$, $1 \leq d_\theta<+\infty$ fixed. $\mathbb{E}[\hat \psi_n^S(\cdot)]$ is continuously differentiable around $\theta_0$ and $\partial_\theta \mathbb{E}[\hat \psi_n^S(\theta_0)]$ has full rank.
      \item[iii.] (Sample Moments) $\hat \psi_n$ satisfies a Law of Large Numbers and a Central Limit Theorem:
      \[ \sqrt{n} \left[  \hat \psi_n - \mathbb{E}(\hat \psi_n) \right]\overset{d}{\to} \mathcal{N}(0,V). \]  
      \item[iv.] (Weighting Matrix) $W_n \overset{p}{\to} W$ positive definite
    \end{itemize}
  \end{assumption} 
  
  \subsection{Static Models} \label{sec:Static}



  To simplify notation, let:
  \[ \tilde \psi(x_i,\theta,u_i) \overset{def}{=} \psi( g(x_i,\theta,u_i),x_i ) \]

  \subsubsection{Smooth moments with no covariates}

  \begin{assumption}[Scrambled Smooth Moments without Covariates] \label{ass:smooth_nocovariates}
    Suppose that the following holds:
    \begin{itemize}
      \item[i.] For all $\theta \in \Theta$, \[\mathbb{E}\Big( \Big\| \tilde \psi( \theta,u_i) \Big\|^{2} \Big) < + \infty\]
      and 
      \[ \|\tilde \psi( \theta_1,u_i) - \tilde \psi( \theta_2,u_i)\| \leq C_1(u_i) \times \|\theta_1-\theta_2\|, \]
      where $\mathbb{E}[C_1(u_i)^2]<+\infty$.
      \item[ii.] For all $\theta \in \Theta$, $\tilde \psi$ is continuously differentiable in $\theta$ around $\theta_0$ and:
      \[\mathbb{E}\Big( \Big\| \partial_\theta \tilde \psi(\theta,u_i) \Big\|^{2} \Big) < + \infty,\]
      and 
      \[ \| \partial_\theta \tilde \psi( \theta_1,u_i) - \partial_\theta \tilde \psi( \theta_2,u_i)\| \leq C_2(u_i) \times \|\theta_1-\theta_2\|, \]
      where $\mathbb{E}[C_2(u_i)^2]<+\infty$.
    \end{itemize} 
  \end{assumption}
  Assumption \ref{ass:smooth_nocovariates} provides sufficient conditions to prove a uniform law of large numbers (ULLN) for $\tilde \psi$ and $\partial_\theta \tilde \psi$ using the scramble. The proof is similar to \citet{Jennrich1969}.

  \begin{proposition}[Consistency and Asymptotic Normality without Covariates] \label{prop:static_nocovariates}
    Suppose Assumptions \ref{ass:regular} and \ref{ass:smooth_nocovariates} hold, then $\hat \theta_n^S \overset{p}{\to} \theta_0$ and
    \[ \sqrt{n} \left( \hat \theta_n^S - \theta_0 \right) \overset{d}{\to} \mathcal{N}(0,\Sigma), \]
    where
    \[ \Sigma = \left( G^\prime W G \right)^{-1} G^\prime W V W G \left( G^\prime W G \right)^{-1},\]
    $G = \partial_\theta \mathbb{E} [\hat \psi_n^S(\theta_0)]$, $V  =\lim_{n \to \infty} n \times \text{var}(\hat \psi_n)$.
  \end{proposition}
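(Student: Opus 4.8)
The plan is to follow the classical three-step route for minimum-distance estimators — a uniform law of large numbers, then consistency, then an asymptotic linearization — where the only place the scramble enters is through Theorem \ref{th:owen}. The central observation is that $\mathbb{E}[\hat\psi_n^S(\theta)]$ equals the $n$- and $S$-free object $\psi_\infty(\theta) \overset{def}{=} \int_{[0,1]^d}\tilde\psi(\theta,u)\,du$, because every scrambled draw is marginally $\mathcal{U}_{[0,1]^d}$. Step 1 is a ULLN for $\hat\psi_n^S$. Fix $\theta$: since $\tilde\psi(\theta,u_i)$ has finite variance by Assumption \ref{ass:smooth_nocovariates}(i), Theorem \ref{th:owen} gives the (much stronger than needed) pointwise statement $\hat\psi_n^S(\theta) - \psi_\infty(\theta) = o_p(n^{-1/2})$. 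For equicontinuity, the Lipschitz bound of Assumption \ref{ass:smooth_nocovariates}(i) yields $\|\hat\psi_n^S(\theta_1) - \hat\psi_n^S(\theta_2)\| \leq \bar C_n\,\|\theta_1 - \theta_2\|$ with $\bar C_n = (nS)^{-1}\sum_i C_1(\tilde u_i)$, and since $\mathbb{E}[C_1(u_i)^2]<\infty$ Theorem \ref{th:owen} again gives $\bar C_n \pconv \mathbb{E}[C_1(u_i)] < \infty$, hence $\bar C_n = O_p(1)$. Covering the compact set $\Theta$ by finitely many small balls and combining pointwise convergence at the centers with this uniform modulus of continuity yields $\sup_{\theta\in\Theta}\|\hat\psi_n^S(\theta) - \psi_\infty(\theta)\| \pconv 0$; running the identical argument on $\partial_\theta\tilde\psi$ under Assumption \ref{ass:smooth_nocovariates}(ii) gives $\sup_{\theta\in\Theta}\|\partial_\theta\hat\psi_n^S(\theta) - \partial_\theta\psi_\infty(\theta)\| \pconv 0$, while dominated convergence delivers continuity of $\psi_\infty$ and $\partial_\theta\psi_\infty$. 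This is the step that is ``similar to \citet{Jennrich1969}''; the only novelty is that $iid$-independence is replaced by Owen's theorem.

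For consistency, set $Q_n(\theta) = \|\hat\psi_n - \hat\psi_n^S(\theta)\|_{W_n}^2$ and $Q_\infty(\theta) = \|\mathbb{E}(\hat\psi_n) - \psi_\infty(\theta)\|_W^2$. By Assumption \ref{ass:regular}(iii), $\hat\psi_n \pconv \mathbb{E}(\hat\psi_n)$; by Step 1 the simulated moments converge uniformly; by Assumption \ref{ass:regular}(iv), $W_n \pconv W$; together these give $\sup_{\theta\in\Theta}|Q_n(\theta) - Q_\infty(\theta)| \pconv 0$. Assumptions \ref{ass:regular}(i)–(ii) make $Q_\infty$ continuous with a unique minimizer at $\theta_0$, using that $\mathbb{E}(\hat\psi_n) = \psi_\infty(\theta_0)$. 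The standard extremum-estimator consistency argument then gives $\hat\theta_n^S \pconv \theta_0$.

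For asymptotic normality, since $\theta_0$ is interior and the moments are smooth in $\theta$, with probability tending to one $\hat\theta_n^S$ solves the first-order condition $[\partial_\theta\hat\psi_n^S(\hat\theta_n^S)]^\prime W_n[\hat\psi_n - \hat\psi_n^S(\hat\theta_n^S)] = 0$. A componentwise mean-value expansion of $\hat\psi_n^S(\hat\theta_n^S)$ around $\theta_0$, with intermediate points $\bar\theta_n \pconv \theta_0$, together with the uniform convergence of $\partial_\theta\hat\psi_n^S$ from Step 1, gives $\partial_\theta\hat\psi_n^S(\bar\theta_n) \pconv G = \partial_\theta\psi_\infty(\theta_0)$ and $\hat G_n \overset{def}{=} \partial_\theta\hat\psi_n^S(\hat\theta_n^S) \pconv G$. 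The only genuinely stochastic input is $\sqrt{n}[\hat\psi_n - \hat\psi_n^S(\theta_0)]$: by Theorem \ref{th:owen}, $\hat\psi_n^S(\theta_0) - \psi_\infty(\theta_0) = o_p(n^{-1/2})$, and $\psi_\infty(\theta_0) = \mathbb{E}(\hat\psi_n)$ by identification, so $\sqrt{n}[\hat\psi_n - \hat\psi_n^S(\theta_0)] = \sqrt{n}[\hat\psi_n - \mathbb{E}(\hat\psi_n)] + o_p(1) \dconv \mathcal{N}(0,V)$ by Assumption \ref{ass:regular}(iii). Solving the expanded first-order condition gives the linearization
\[ \sqrt{n}(\hat\theta_n^S - \theta_0) = -(\hat G_n^\prime W_n \hat G_n)^{-1}\hat G_n^\prime W_n\, \sqrt{n}[\hat\psi_n - \hat\psi_n^S(\theta_0)] + o_p(1), \]
matching the expansion displayed in Section \ref{sec:se_s}, and Slutsky's lemma delivers $\sqrt{n}(\hat\theta_n^S - \theta_0) \dconv \mathcal{N}(0,\Sigma)$ with the stated sandwich form $\Sigma = (G^\prime W G)^{-1} G^\prime W V W G (G^\prime W G)^{-1}$.

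I expect the delicate point to be the ULLN of Step 1 — specifically, justifying uniformity in $\theta$ even though the scrambled draws are \emph{not} independent across $i$. The resolution is that dependence never needs to be confronted directly: Owen's theorem supplies both the pointwise convergence and the $O_p(1)$ control of the Lipschitz envelope $\bar C_n$, and once those two ingredients are in hand the chaining over a finite cover of the compact $\Theta$ is entirely deterministic. A secondary point requiring care is the identification of $\mathbb{E}[\hat\psi_n^S(\theta)]$ with the $n$- and $S$-free limit $\psi_\infty(\theta)$, which is what makes the application of Theorem \ref{th:owen} at $\theta_0$ collapse the simulation noise below the $\sqrt{n}$ scale and leaves only $\hat\psi_n$ in the asymptotic variance.
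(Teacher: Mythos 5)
Your proposal is correct and follows essentially the same route as the paper: a Jennrich-style ULLN for $\hat\psi_n^S$ and $\partial_\theta\hat\psi_n^S$ in which Owen's Theorem \ref{th:owen} supplies both the pointwise convergence and the $O_p(1)$ control of the averaged Lipschitz constant, consistency via the standard extremum-estimator argument (the paper cites Theorem 2.1 of Newey--McFadden), and a mean-value expansion in which $\hat\psi_n^S(\theta_0)-\mathbb{E}[\hat\psi_n^S(\theta_0)]=o_p(n^{-1/2})$ by Theorem \ref{th:owen}, leaving only $\sqrt{n}[\hat\psi_n-\mathbb{E}(\hat\psi_n)]$ to drive the limit distribution. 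The paper's Lemma \ref{lem:ULLNnocovariates} packages exactly your Step 1 and the $o_p(n^{-1/2})$ rate, so there is no substantive difference in approach.
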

  Given the ULLN for the simulated moments and the fast convergence rate in Theorem \ref{th:owen}, the estimator is consistent and asymptotically normal. The main difference with standard SMM is that here the simulations do not inflate the asymptotic variance, even for $S=1$, whereas the simulation noise in SMM implies an additional $1/S$ factor. 

  \subsubsection{Potentially non-smooth moments with covariates}
  As discussed in Section \ref{sec:SMM_static}, moments with covariates do not quite fit the setting described in Section \ref{sec:MC_qMC}. Indeed, the scrambled draws are identically distributed but not independent. With the introduction of covariates, $\tilde \psi(x_i,u_i;\theta),i = 1,\dots,n$ are neither identically distributed nor independent which makes deriving ULLNs and CLTs challenging. Furthermore, if the moments are non-smooth in $\theta$ then the approach of \citet{Jennrich1969} cannot be applied and empirical process methods are required. 
  
  The main idea is to split the sample moments and the empirical process into two parts: one is non-identically distributed but independent and the other is identically distributed but not independent. The former can be handled using CTLs and empirical process results for heterogeneous arrays and assuming the later is smooth in $\theta$, it can be handled using the steps in \citet{Jennrich1969} as in Proposition \ref{prop:static_nocovariates}. The main assumption there is that integrating over $x_i$, while conditioning on $u_i$, transforms non-smooth into smooth moments. This puts restrictions on the moments and covariates used in the estimation. 
  \begin{assumption}[Scrambled Non-Smooth Moments with Covariates] \label{ass:smooth}
    Suppose that for some $\delta >0$ the following holds:
    \begin{itemize}
      \item[i.] $\mathbb{E} \left[ \text{var}\left( \psi(y_i,x_i) - \tilde \psi(x_i,u_i;\theta_0)  | u_i\right) \right]$ is positive definite and finite, also\\ $\mathbb{E}\left[ \| \text{var}\left( \psi(y_i,x_i) - \tilde \psi(x_i,u_i;\theta_0) | u_i \right) \|^{2+\delta} \right]<+\infty$. 
      \item[ii.] There exists an envelope function $\bar \psi$ such that for all $\theta \in \Theta$, $\| \tilde \psi(x_i,u_i;\theta) \| \leq \bar \psi(x_i,u_i)$ with $\mathbb{E} \left[ \text{var} \left( \bar \psi(x_i,u_i) | u_i \right) \right] >0$ and  $\mathbb{E} \left[ \text{var} \left( \bar \psi(x_i,u_i)|u_i \right)^{2+\delta} \right) < +\infty$.
      \item[iii.] There exists $\tilde C_1(\cdot)$ such that $\theta_1,\theta_2 \in \Theta$, $\mathbb{E}( \|\tilde \psi(x_i,u_i;\theta_1)-\tilde \psi(x_i,u_i;\theta_2)\|^2|u_i) \leq \tilde C(u_i)^2 \times \|\theta_1-\theta_2\|^2$ with $\mathbb{E}\left( \tilde C(u_i)^4 \right) < +\infty$.
      \item[iv.] $\mathbb{E}(\tilde \psi(x_i,u_i;\cdot)|u_i)$ is continuously differentiable in $\theta \in \Theta$, $u_i$ almost surely. There exists $\tilde C_2(\cdot)$ such that for all $\theta_1,\theta_2 \in \Theta$, $\| \mathbb{E} \left( \tilde \psi(x_i,u_i;\theta_1) - \tilde \psi(x_i,u_i;\theta_2) | u_i \right) - \partial_\theta \mathbb{E} \left( \tilde \psi(x_i,u_i;\theta_2) |u_i \right)(\theta_1-\theta_2) \| \leq \tilde C_2(u_i) \times \|\theta_1 - \theta_2\|^2$. There exists $\tilde C_3(\cdot)$ such that for all $\theta \in \Theta$, $\mathbb{E}\left[ \| \partial_\theta \mathbb{E} \left(  \tilde \psi(x_i,u_i;\theta) |u_i \right) \|^2 \right] <+\infty$, $\| \partial_\theta \mathbb{E} \left[ \tilde \psi( \theta_1,u_i)|u_i \right] - \partial_\theta \mathbb{E} \left[ \tilde \psi( \theta_2,u_i) |u_i \right]\| \leq \tilde C_3(u_i) \times \|\theta_1-\theta_2\|$,  where $\mathbb{E}[ \tilde C_3(u_i)^2]<+\infty$.
    \end{itemize} 
  \end{assumption}    
  Assumption \ref{ass:smooth} i.-ii. ensure the Lindeberg condition holds for the heterogeneous array which is required to apply a CLT and the Jain-Markus Theorem \citep{VanderVaart1996}. Conditions iii-iv. ensures that Theorem \ref{th:owen} can be applied to the smoothed moments, i.e. after integrating out the covariates. 

  \begin{proposition}[Consistency and Asymptotic Normality with Covariates] \label{prop:static_covariates} For $S \geq 1$, suppose that $\|\hat \psi_n - \hat \psi_n^S(\hat \theta_n^S)\|_{W_n} \leq o_p(n^{-1/2})$ and that Assumptions \ref{ass:regular}, \ref{ass:smooth} hold then $\hat \theta_n^S \overset{p}{\to} \theta_0$ and
    \[ \sqrt{n} \left( \hat \theta_n^S - \theta_0 \right) \overset{d}{\to} \mathcal{N}(0,\Sigma), \]
    where
    \[ \Sigma = \left( G^\prime W G \right)^{-1} G^\prime W \tilde V W G \left( G^\prime W G \right)^{-1},\]
    $G = \partial_\theta \mathbb{E} [\hat \psi_n^S(\theta_0)]$, $\tilde V  = \mathbb{E} \left[ \text{var} \left( \psi(y_i,x_i) - \tilde \psi(x_i,u_i;\theta_0)|u_i  \right) \right]$. 
  \end{proposition}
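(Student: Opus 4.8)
\noindent\emph{Setup and decomposition.} Write $\tilde\psi(x,u;\theta)=\psi(g(x,u;\theta),x)$ and set $m(u;\theta):=\mathbb E[\tilde\psi(x_i,u_i;\theta)\mid u_i=u]=\int\tilde\psi(x,u;\theta)f_x(x)dx$, which does not depend on $x$ because the scramble is drawn independently of the covariates; hence $\mathbb E[\hat\psi_n^S(\theta)]=\int_{[0,1]^d}m(u;\theta)du=:I(\theta)$. The plan is the decomposition announced in Section \ref{sec:Static}: for each $s$ and $i$,
\[
\tilde\psi(x_i,\tilde u_i^s;\theta)-I(\theta)=\underbrace{\big[\tilde\psi(x_i,\tilde u_i^s;\theta)-m(\tilde u_i^s;\theta)\big]}_{\xi_i^s(\theta)}+\underbrace{\big[m(\tilde u_i^s;\theta)-I(\theta)\big]}_{\eta_i^s(\theta)}.
\]
Conditionally on the scramble $\{\tilde u_j^s\}$, the array $\{\xi_i^s(\theta)\}_{i,s}$ is independent (not identically distributed), mean zero, with an integrable envelope from Assumption \ref{ass:smooth} ii.; for each fixed $s$, $\tfrac1n\sum_i\eta_i^s(\theta)$ is exactly the scrambled-qMC approximation error for the fixed integrand $m(\cdot;\theta)$, so Theorem \ref{th:owen} applies to it as soon as $m(\cdot;\theta)$ has finite variance (which Assumption \ref{ass:smooth} i.--ii. secures). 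I would handle the $\xi$-part with empirical-process results for independent heterogeneous arrays and the $\eta$-part with Owen's rate together with the smooth-in-$\theta$ steps of \citet{Jennrich1969}, exactly as the text anticipates.

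\emph{Consistency.} First I would prove $\sup_{\theta\in\Theta}\|\hat\psi_n^S(\theta)-I(\theta)\|\pconv 0$. For $\tfrac1{nS}\sum_{s,i}\eta_i^s(\theta)$, Assumption \ref{ass:smooth} iii.--iv. make $m(\cdot;\theta)$ Lipschitz in $\theta$ with an integrable Lipschitz constant, so Theorem \ref{th:owen} pointwise in $\theta$ (and in $s$) plus a finite-cover argument as in \citet{Jennrich1969} gives uniform convergence. For $\tfrac1{nS}\sum_{s,i}\xi_i^s(\theta)$, conditioning on the scramble the class $\{(x,u)\mapsto\tilde\psi(x,u;\theta)-m(u;\theta):\theta\in\Theta\}$ is $L^2$-Lipschitz in $\theta$ (Assumption \ref{ass:smooth} iii.) with a square-integrable envelope (Assumption \ref{ass:smooth} ii.), hence Glivenko--Cantelli for independent non-identically distributed data. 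Together with Assumption \ref{ass:regular} (identification, compact convex $\Theta$, continuity, $\hat\psi_n\pconv I(\theta_0)$, $W_n\pconv W$) and the hypothesis $\|\hat\psi_n-\hat\psi_n^S(\hat\theta_n^S)\|_{W_n}=o_p(n^{-1/2})$, a standard M-estimation argument (as for Proposition \ref{prop:static_nocovariates}) yields $\hat\theta_n^S\pconv\theta_0$.

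\emph{Asymptotic normality.} Two ingredients are needed. (i) Stochastic equicontinuity of $\nu_n(\theta):=\sqrt n[\hat\psi_n^S(\theta)-I(\theta)]$ over $o_p(1)$-neighbourhoods of $\theta_0$: for the $\eta$-part, a mean-value expansion (Assumption \ref{ass:smooth} iv.) writes the increment as $[\tfrac1n\sum_i\partial_\theta m(\tilde u_i^s;\theta_0)-\mathbb E\,\partial_\theta m]\sqrt n(\theta-\theta_0)+o_p(1)$, and the bracket is $o_p(1)$ by Theorem \ref{th:owen} applied to $\partial_\theta m(\cdot;\theta_0)$; for the $\xi$-part, Assumption \ref{ass:smooth} i.--iii. supply the $L^2$-Lipschitz, envelope and $2+\delta$ moment conditions needed for a Jain-Markus type central limit theorem in $\ell^\infty(\Theta)$ for the conditionally independent heterogeneous array \citep{VanderVaart1996}, which yields both this equicontinuity and (ii) at $\theta_0$. (ii) A CLT at $\theta_0$ for $\sqrt n[\hat\psi_n-\hat\psi_n^S(\theta_0)]$: the $\eta$-channel of $\hat\psi_n^S(\theta_0)$ is $o_p(n^{-1/2})$ by Theorem \ref{th:owen}; the $\xi$-channel $\tfrac1{\sqrt n\,S}\sum_{s,i}\xi_i^s(\theta_0)$ obeys, conditionally on the scramble, a Lindeberg CLT whose Lindeberg ratios converge because $\tfrac1n\sum_i\text{var}(\bar\psi(x_i,\tilde u_i^s)\mid\tilde u_i^s)$ itself satisfies a scrambled LLN under Assumption \ref{ass:smooth} i.--ii.; and $\hat\psi_n$ contributes its own CLT from Assumption \ref{ass:regular} iii. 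Combining these with the expansion $\sqrt n(\hat\theta_n^S-\theta_0)=(G'WG)^{-1}G'W\sqrt n[\hat\psi_n-\hat\psi_n^S(\theta_0)]+o_p(1)$---which follows from (i), the differentiability of $I(\cdot)$ at $\theta_0$ (Assumptions \ref{ass:regular} ii.\ and \ref{ass:smooth} iv.), $W_n\pconv W$, and the hypothesis on the objective---delivers $\sqrt n(\hat\theta_n^S-\theta_0)\dconv\mathcal N(0,\Sigma)$ with the stated sandwich form, $G=\partial_\theta I(\theta_0)$ and $\tilde V=\mathbb E[\text{var}(\psi(y_i,x_i)-\tilde\psi(x_i,u_i;\theta_0)\mid u_i)]$.

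\emph{Main obstacle.} The delicate step is the stochastic-equicontinuity/CLT for the $\xi$-array, since it is only \emph{conditionally} (given the scramble) independent and heterogeneous: the Jain-Markus entropy and the Lindeberg conditions must be verified for random, scramble-dependent conditional laws, which is exactly where the within-scramble LLN---Theorem \ref{th:owen} applied to $m(\cdot;\theta_0)$ and to $\text{var}(\bar\psi(x_i,u_i)\mid u_i)$---has to be re-invoked to pin down the limiting conditional variance. A secondary bookkeeping point is the joint limit at $\theta_0$: $\hat\psi_n$ and $\hat\psi_n^S(\theta_0)$ share the covariates $x_1,\dots,x_n$, so the cross-covariance must be tracked, and one must check that the negligibility of the $\eta$-channel removes the would-be $m(\cdot;\theta_0)$-variance contribution, leaving precisely $\tilde V$.
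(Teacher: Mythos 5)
Your proposal is correct and follows essentially the same route as the paper's own proof: the decomposition $\tilde\psi(x_i,u_i;\theta)-I(\theta)$ into a conditionally independent heterogeneous array plus the scrambled-qMC error of the conditional mean $m(\cdot;\theta)$ is exactly the splitting used in Lemma \ref{lem:se_clt_covariates}, with the same tools (Theorem \ref{th:owen} for the $\eta$-part, a Lindeberg/Lyapunov CLT for heterogeneous arrays and the Jain--Markus theorem for the $\xi$-part, Jennrich-style covers, and the standard M-estimation expansion). Your closing remarks on the conditional-variance identification of $\tilde V$ and the shared covariates are precisely the points the paper's Assumption \ref{ass:smooth} i.--ii.\ and the conditional variance $\text{var}(\psi(y_i,x_i)-\tilde\psi(x_i,u_i;\theta_0)\mid u_i)$ are designed to absorb, so no gap remains.
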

  Proposition \ref{prop:static_covariates} is similar to \citet{Pakes1989} with scrambled instead of MC draws. The variance $\tilde V$ can be computed using the steps described in Section \ref{sec:se_s}.

  \subsubsection{Scrambled Indirect Inference} \label{sec:II}
  The following extends the results from Proposition \ref{prop:static_nocovariates} to the Indirect Inference estimator of \citet{Gourieroux1993}. The moments $\hat \psi_n,\hat \psi_n^S(\theta)$ are now defined as sample and simulated M-estimators:
  \begin{align*}
    \hat \psi_n &= \text{argmin}_{\psi \in \Psi} M_n(\psi), \quad \quad \text{where } M_n(\psi) \,\,\,\, = \frac{1}{n} \sum_{i=1}^n m(y_i;\psi) \\
    \hat \psi_n^S(\theta) &= \text{argmin}_{\psi \in \Psi} M^S_n(\theta;\psi), \quad \text{where } M^S_n(\theta;\psi)=  \frac{1}{nS} \sum_{i=1}^{nS} m(y_i^s(\theta);\psi).
  \end{align*}
  Again, to simplify notation consider:
  \[ \tilde m(u_i,\theta;\psi) \overset{def}{=}  m(y_i^s(\theta);\psi). \]
  As in \citet{Gourieroux1993}, the binding function $\psi_{\infty}(.\cdot)$ is defined as:
  \[ \psi_{\infty}(\theta) \overset{def}{=} \text{argmin}_{\psi \in \Psi} \mathbb{E}\left[ \tilde m(u_i,\theta;\psi) \right]. \]
  Rather than matching sample moments, the indirect inference estimator matches functions (minimizers) of sample moment functions. 
  Assumption \ref{ass:ind} below is more detailed than the high-level conditions in \citet{Gourieroux1993}. Using implicit function arguments, it allows to express the estimator $\hat \theta_n^S$ in terms of the sample moments $\partial_\theta M_n^S$ which fit the setting of Section \ref{sec:MC_qMC} so that, eventually, Theorem \ref{th:owen} applies. 
  \begin{assumption}[Scrambled Indirect Inference] \label{ass:ind}
    Suppose that the following holds:
    \begin{itemize}
      \item[i.] The mapping $\theta \to \psi_\infty(\theta) \in \Psi$ is continuous differentiable and injective. $\Psi$ is a compact and convex subset of $\mathbb{R}^{d_\psi}$, finite-dimensional and $\psi_\infty(\theta_0) \in \text{interior}(\Psi)$.
      \item[ii.] For all $(\theta,\psi) \in \Theta \times \Psi$,
      \[ \mathbb{E}\left[ \| \tilde m(u_i,\theta;\psi) \|^2 \right] < +\infty, \]
      and there exists $C_1(\cdot,\cdot)$ such that for all $\theta \in \Theta$ and $\psi_1,\psi_2 \in \Psi$:
      \[ \| \tilde m(u_i,\theta;\psi_1)-\tilde m(u_i,\theta;\psi_2) \| \leq C_1(u_i,\theta) \times \| \psi_1-\psi_2 \|, \]
      with $\mathbb{E}[C_1(u_i,\theta)^2]<+\infty$ for all $\theta \in \Theta$.
      \item[iii.] $\tilde m$ is twice continuously differentiable in $(\theta,\psi)$, $u_i$ almost surely. For all $(\theta,\psi) \in \Theta \times \Psi$,
      \begin{align*}
        &\mathbb{E}\left[ \| \partial_{\psi} \tilde m(u_i,\theta;\psi) \|^2 \right] < +\infty, \quad 
        \mathbb{E}\left[ \| \partial^2_{\psi,\psi^\prime} \tilde m(u_i,\theta;\psi) \|^2 \right] < +\infty, \quad \mathbb{E}\left[ \| \partial^2_{\psi,\theta^\prime} \tilde m(u_i,\theta;\psi) \|^2 \right] < +\infty,
      \end{align*}
      and there exists $C_2(\cdot),C_3(\cdot),C_4(\cdot)$ such that for all $\theta_1,\theta_2 \in \Theta$ and $\psi_1,\psi_2 \in \Psi$:
      \begin{align*}
        \| \partial_{\psi}\tilde m(u_i,\theta_1;\psi_1)-\partial_{\psi}\tilde m(u_i,\theta_2;\psi_2) \| &\leq C_2(u_i) \times \left( \| \theta_1-\theta_2 \| + \| \psi_1-\psi_2 \| \right),\\
        \| \partial^2_{\psi,\psi^\prime}\tilde m(u_i,\theta_1;\psi_1)-\partial^2_{\psi,\psi^\prime}\tilde m(u_i,\theta_2;\psi_2) \| &\leq C_3(u_i) \times \left( \| \theta_1-\theta_2 \| + \| \psi_1-\psi_2 \| \right),\\
        \| \partial^2_{\psi,\theta^\prime}\tilde m(u_i,\theta_1;\psi_1)-\partial^2_{\psi,\theta^\prime}\tilde m(u_i,\theta_2;\psi_2) \| &\leq C_4(u_i) \times \left( \| \theta_1-\theta_2 \| + \| \psi_1-\psi_2 \| \right),
      \end{align*}
      with $\mathbb{E}[C_2(u_i)^2],$ $\mathbb{E}[C_3(u_i)^2]$ and $\mathbb{E}[C_4(u_i)^2]<+\infty$.   
      \item[iv.] The Hessian $\partial^2_{\psi,\psi^\prime} \mathbb{E} [ \tilde m(u_i,\theta;\psi) ]$ is  positive definite for all $\theta \in \Theta$ and all $\psi \in \Psi$ with \[0< \inf_{(\theta,\psi) \in \Theta \times \Psi} \lambda_{\min}(\partial^2_{\psi,\psi^\prime} \mathbb{E} [ \tilde m(u_i,\theta;\psi) ]) \leq \sup_{(\theta,\psi) \in \Theta \times \Psi} \lambda_{\max}(\partial^2_{\psi,\psi^\prime} \mathbb{E} [ \tilde m(u_i,\theta;\psi) ]) <+\infty.\]
      Also, $\sup_{(\theta,\psi) \in \Theta \times \Psi} \|\partial^2_{\psi,\theta^\prime} \mathbb{E} [ \tilde m(u_i,\theta;\psi) ])\| <+\infty$.
    \end{itemize} 
  \end{assumption}    

  \begin{proposition}[Consistency and Asymptotic Normality with Auxiliary Parameters] \label{prop:ind}
    Suppose Assumption \ref{ass:regular} and \ref{ass:ind} hold,  then $\hat \theta_n^S \overset{p}{\to} \theta_0$ and
    \[ \sqrt{n} \left( \hat \theta_n^S - \theta_0 \right) \overset{d}{\to} \mathcal{N}(0,\Sigma), \]
    where
    \[ \Sigma = \left( G^\prime W G \right)^{-1} G^\prime W V W G \left( G^\prime W G \right)^{-1},\]
    $G = \partial_\theta \mathbb{E} [\hat \psi_n^S(\theta_0)]$, $V  =\lim_{n \to \infty} n \times \text{var}(\hat \psi_n)$.
  \end{proposition}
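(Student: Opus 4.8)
The plan is the classical two-stage M-estimation argument for indirect inference, organized so that the implicit function theorem reduces every object to an average of a \emph{fixed} measurable function of the scrambled draws $u_i$, at which point Theorem \ref{th:owen} can be invoked. First, for consistency, I would establish a ULLN for $M_n^S(\theta;\psi) = \frac{1}{nS}\sum_{i=1}^{nS}\tilde m(u_i,\theta;\psi)$ over the compact set $\Theta\times\Psi$: Theorem \ref{th:owen} gives the pointwise LLN (the scrambled sequence is identically distributed and $\tilde m$ has finite second moment by Assumption \ref{ass:ind}(ii)), and the Lipschitz bound in Assumption \ref{ass:ind}(ii) together with compactness delivers the stochastic equicontinuity needed to upgrade to uniformity, exactly along the lines of \citet{Jennrich1969}. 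Since $\psi_\infty(\theta)$ is the unique minimizer of $\psi\mapsto\mathbb{E}[\tilde m(u_i,\theta;\psi)]$ by Assumption \ref{ass:ind}(iv) (uniformly positive definite Hessian), a standard $\argmin$ argument yields $\sup_{\theta\in\Theta}\|\hat\psi_n^S(\theta)-\psi_\infty(\theta)\|\pconv 0$. Because $y_i^s(\theta_0)$ and $y_i$ share the same distribution, $\hat\psi_n\pconv\psi_\infty(\theta_0)$, so $\hat\theta_n^S$ minimizes a criterion converging uniformly to $\|\psi_\infty(\theta_0)-\psi_\infty(\theta)\|_W$, which by injectivity of $\psi_\infty$ (Assumption \ref{ass:ind}(i)) and Assumption \ref{ass:regular}(i) is uniquely minimized at $\theta_0$; consistency of $\hat\theta_n^S$ follows.

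For the Jacobian, I would differentiate the first-order condition $\partial_\psi M_n^S(\theta;\hat\psi_n^S(\theta))=0$ implicitly (valid because $\tilde m$ is twice continuously differentiable by Assumption \ref{ass:ind}(iii) and the Hessian is invertible by Assumption \ref{ass:ind}(iv)), giving $\partial_\theta\hat\psi_n^S(\theta) = -[\partial^2_{\psi,\psi^\prime}M_n^S(\theta;\hat\psi_n^S(\theta))]^{-1}\partial^2_{\psi,\theta^\prime}M_n^S(\theta;\hat\psi_n^S(\theta))$. ULLNs for the two Hessian blocks (Theorem \ref{th:owen} pointwise, plus the Lipschitz bounds on $C_2,C_3,C_4$ in Assumption \ref{ass:ind}(iii) and the eigenvalue bounds in Assumption \ref{ass:ind}(iv)) together with the uniform consistency of $\hat\psi_n^S(\cdot)$ give $\partial_\theta\hat\psi_n^S(\theta)\pconv\partial_\theta\psi_\infty(\theta)$ uniformly in $\theta$, hence $\hat G_n := \partial_\theta\hat\psi_n^S(\hat\theta_n^S)\pconv G$. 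Next, a Taylor expansion in $\psi$ of $\partial_\psi M_n^S(\theta_0;\hat\psi_n^S(\theta_0))=0$ around $\psi_\infty(\theta_0)$, combined with the invertible limiting Hessian, yields $\hat\psi_n^S(\theta_0)-\psi_\infty(\theta_0) = -[\partial^2_{\psi,\psi^\prime}M_n^S(\theta_0;\bar\psi_n)]^{-1}\partial_\psi M_n^S(\theta_0;\psi_\infty(\theta_0))$. The crucial point is that $\partial_\psi M_n^S(\theta_0;\psi_\infty(\theta_0))$ is the average over the scrambled draws of the fixed, finite-variance function $u\mapsto\partial_\psi\tilde m(u,\theta_0;\psi_\infty(\theta_0))$, whose population mean is zero (the first-order condition defining $\psi_\infty(\theta_0)$, interchanging derivative and expectation being justified by the $L^2$ domination and Lipschitz bounds); therefore Theorem \ref{th:owen} forces it to be $o_p(n^{-1/2})$, so $\sqrt n(\hat\psi_n^S(\theta_0)-\psi_\infty(\theta_0))=o_p(1)$.

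Finally, expanding the first-order condition $[\partial_\theta\hat\psi_n^S(\hat\theta_n^S)]^\prime W_n[\hat\psi_n-\hat\psi_n^S(\hat\theta_n^S)]=0$ around $\theta_0$ gives $\sqrt n(\hat\theta_n^S-\theta_0) = (\hat G_n^\prime W_n\hat G_n)^{-1}\hat G_n^\prime W_n\sqrt n[\hat\psi_n-\hat\psi_n^S(\theta_0)]+o_p(1)$. Writing $\sqrt n[\hat\psi_n-\hat\psi_n^S(\theta_0)] = \sqrt n[\hat\psi_n-\psi_\infty(\theta_0)] - \sqrt n[\hat\psi_n^S(\theta_0)-\psi_\infty(\theta_0)] = \sqrt n[\hat\psi_n-\psi_\infty(\theta_0)]+o_p(1)\dconv\mathcal{N}(0,V)$ by Assumption \ref{ass:regular}(iii), Slutsky's lemma and the continuous mapping theorem deliver $\sqrt n(\hat\theta_n^S-\theta_0)\dconv\mathcal{N}(0,\Sigma)$ with the stated sandwich form; $W_n\pconv W$ is Assumption \ref{ass:regular}(iv). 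Note that, as in Proposition \ref{prop:static_nocovariates}, the simulation noise contributes nothing to $\Sigma$ even for $S=1$, precisely because Theorem \ref{th:owen} makes $\hat\psi_n^S(\theta_0)-\psi_\infty(\theta_0)$ negligible at the $\sqrt n$ scale.

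The main obstacle is the \textbf{uniformity}: establishing the ULLNs for $M_n^S$ and for its first and second $\psi$- and $\theta$-derivatives over all of $\Theta\times\Psi$ when the only available primitive is Owen's \emph{pointwise} limit theorem for identically-but-not-independently distributed scrambled draws. The Lipschitz conditions in Assumption \ref{ass:ind} are tailored exactly so that the $L^2$ moduli of continuity are controlled and a bracketing/equicontinuity argument in the spirit of \citet{Jennrich1969} can be run term by term; checking that these translate into genuine uniform convergence, and that the implicit-function representation of $\partial_\theta\hat\psi_n^S$ holds uniformly because Assumption \ref{ass:ind}(iv) keeps the Hessian bounded away from singularity on $\Theta\times\Psi$, is the technical heart of the proof. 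Everything downstream — the two Taylor expansions and the single application of Theorem \ref{th:owen} at the fixed point $(\theta_0,\psi_\infty(\theta_0))$ — is routine.
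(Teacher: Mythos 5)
Your overall strategy coincides with the paper's on every key step: pointwise limits from Theorem \ref{th:owen}, the implicit-function representation of $\partial_\theta\hat\psi_n^S$, the observation that $\partial_\psi M_n^S(\theta_0;\psi_\infty(\theta_0))$ is a zero-mean, finite-variance average over the scrambled draws so that Theorem \ref{th:owen} makes $\sqrt n\,[\hat\psi_n^S(\theta_0)-\psi_\infty(\theta_0)]$ negligible even for $S=1$, and a final sandwich expansion whose randomness comes only from the CLT for $\hat\psi_n$ in Assumption \ref{ass:regular} iii. Those parts are correct and match the paper's proof.

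The one step that does not go through as you wrote it is the first: you claim a ULLN for $M_n^S(\theta;\psi)$ \emph{jointly} over $\Theta\times\Psi$ on the strength of Assumption \ref{ass:ind} ii. That assumption only provides a Lipschitz bound in $\psi$, with a constant $C_1(u_i,\theta)$ that may depend on $\theta$, and nowhere in Assumption \ref{ass:ind} is there a moment or Lipschitz condition on $\partial_\theta\tilde m$ itself --- only on $\partial_\psi\tilde m$, $\partial^2_{\psi,\psi^\prime}\tilde m$ and $\partial^2_{\psi,\theta^\prime}\tilde m$. So the Jennrich-type equicontinuity argument cannot be run in the $\theta$ direction for the criterion $M_n^S$, and the uniform consistency $\sup_{\theta\in\Theta}\|\hat\psi_n^S(\theta)-\psi_\infty(\theta)\|\pconv 0$ that you derive from it (and later feed into the uniform Jacobian convergence) is not justified by that route. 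The paper is organized precisely to avoid this: it uses the ULLN in $\psi$ for each fixed $\theta$ to get pointwise-in-$\theta$ consistency of $\hat\psi_n^S(\theta)$ via Theorem 2.1 of \citet{Newey1994a}, and then obtains uniformity over $\theta$ by showing that the argmin map $\theta\mapsto\hat\psi_n^S(\theta)$ is Lipschitz with a stochastically bounded constant, combining the mean-value theorem with the implicit-function representation $\partial_\theta\hat\psi_n^S(\theta)=-\left[\partial^2_{\psi,\psi^\prime}M_n^S\right]^{-1}\partial^2_{\psi,\theta^\prime}M_n^S$, the ULLNs for the two Hessian blocks (available because $C_3,C_4$ give Lipschitz control in both arguments), and the uniform eigenvalue bounds in Assumption \ref{ass:ind} iv.; Jennrich's argument is then applied to the argmin map directly. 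Since you already deploy exactly these ingredients for the Jacobian, the fix is to re-order your argument along these lines rather than to assert equicontinuity of $M_n^S$ in $\theta$. (A minor further point, shared with the paper's own treatment: in your last display you center $\hat\psi_n$ at $\psi_\infty(\theta_0)=\lim_{n\to\infty}\mathbb{E}(\hat\psi_n)$ rather than at $\mathbb{E}(\hat\psi_n)$ as in Assumption \ref{ass:regular} iii., which implicitly uses the usual $o(n^{-1/2})$ bias argument for the auxiliary M-estimator.)
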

  Proposition \ref{prop:ind} is similar to the results found in \citet{Gourieroux1993} but here the simulation noise has no effect on the asymptotic variance as in Proposition \ref{prop:static_nocovariates}.

  \subsection{Dynamic Models} \label{sec:Dynamic}
  \subsubsection{qMC-only Estimator}
  For simplicity, write:
  \[ \hat \psi_T^S(\theta) = \frac{1}{T S} \sum_{t=1}^{T S} \tilde \psi(u_t;\theta), \]
  where $u_t$ has the appropriate dimension $d$ given in Section \ref{sec:SMM_dyn_qMC}. For the qMC-only estimator, $\hat \psi_T^S$ is simply a cross-sectional average over short-time series. This fits the framework of Section \ref{sec:MC_qMC} directly and under the conditions in Assumption \ref{algo:static_no_covariates} the estimator is consistent and asymptotically normal as shown in the Proposition below. As for the static case, the asymptotic variance is not inflated by the simulation noise, even for $S=1$.
  \begin{proposition}[Consistency and Asymptotic Normality - qMC only] \label{prop:dyn_case1}
    Suppose Assumptions \ref{ass:regular} and \ref{ass:smooth_nocovariates} hold and the draws are generate as in Algorithm \ref{algo:ScrMMdyn1} then $\hat \theta_n^S \overset{p}{\to} \theta_0$ and:
    \[ \sqrt{T}\left( \hat \theta_T^S -\theta_0 \right) \overset{d}{\to} \mathcal{N}(0,\Sigma), \]
    where
    \[ \Sigma = \left( G^\prime W G \right)^{-1} G^\prime W V W G \left( G^\prime W G \right)^{-1}, \]
    $G = \partial_\theta \mathbb{E} \left[ \hat \psi_T^S(\theta_0)\right]$, $V = \lim_{T \to \infty} T \times \text{var}(\hat \psi_T)$.
  \end{proposition}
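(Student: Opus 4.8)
The plan is to reduce Proposition~\ref{prop:dyn_case1} to the static, covariate-free case already handled in Proposition~\ref{prop:static_nocovariates}. The construction in Algorithm~\ref{algo:ScrMMdyn1} is designed precisely so that the simulated short panels $(\tilde y_t^1(\theta),\dots,\tilde y_t^L(\theta))$ are \emph{iid over} $t=1,\dots,TS$ from the stationary distribution of the model, while the shocks $u_t=(\tilde v_t,\tilde u_t^2,\dots,\tilde u_t^L)$ live in a \emph{fixed} cube $[0,1]^d$ with $d=\mathrm{dim}(v_t^1,u_t^2,\dots,u_t^L)$. Consequently the conditional mean has exactly the form of \eqref{eq:integral},
\[
\mathbb{E}[\hat\psi_T^S(\theta)] \;=\; \int_{[0,1]^d}\tilde\psi(u;\theta)\,du,
\]
with no covariates and no $t$-dependence in the integrand. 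Once this identification is in place, the remainder of the argument is, up to relabeling $i\mapsto t$, $n\mapsto T$ and $nS\mapsto TS$, the same as for Proposition~\ref{prop:static_nocovariates}.

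First I would establish a uniform law of large numbers for $\hat\psi_T^S(\cdot)$ and $\partial_\theta\hat\psi_T^S(\cdot)$ over the compact set $\Theta$. For fixed $\theta$, $\tilde\psi(\theta,u_i)$ has finite second moment by Assumption~\ref{ass:smooth_nocovariates}(i), so Theorem~\ref{th:owen} gives $\hat\psi_T^S(\theta)-\mathbb{E}[\hat\psi_T^S(\theta)]=o_p((TS)^{-1/2})$, in particular pointwise consistency. Stochastic equicontinuity follows from the Lipschitz bound $\|\tilde\psi(\theta_1,u_i)-\tilde\psi(\theta_2,u_i)\|\le C_1(u_i)\|\theta_1-\theta_2\|$: since $\mathbb{E}[C_1(u_i)^2]<\infty$, Theorem~\ref{th:owen} applied to $C_1(\cdot)$ shows $\tfrac{1}{TS}\sum_i C_1(u_i)\pconv\mathbb{E}[C_1(u_i)]$, which controls the modulus of continuity of the random function. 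Covering $\Theta$ by finitely many shrinking balls then upgrades pointwise to uniform convergence, exactly as in \citet{Jennrich1969}, and the same argument with Assumption~\ref{ass:smooth_nocovariates}(ii) gives a ULLN for $\partial_\theta\tilde\psi$. Consistency of $\hat\theta_T^S$ then follows from the standard extremum-estimator argument: the objective $\|\hat\psi_T-\hat\psi_T^S(\theta)\|_{W_T}$ converges uniformly to $\|\mathbb{E}[\hat\psi_T]-\mathbb{E}[\hat\psi_T^S(\theta)]\|_W$, which by Assumption~\ref{ass:regular}(i) has a well-separated minimum at $\theta_0$, and $\Theta$ is compact.

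For asymptotic normality I would expand the first-order condition $\hat G_T' W_T[\hat\psi_T-\hat\psi_T^S(\hat\theta_T^S)]=0$ by a mean value theorem, using $\hat G_T:=\partial_\theta\hat\psi_T^S(\bar\theta_T)\pconv G$ (ULLN for $\partial_\theta\tilde\psi$ plus $\bar\theta_T\pconv\theta_0$ and continuity) and $W_T\pconv W$, to obtain
\[
\sqrt{T}\bigl(\hat\theta_T^S-\theta_0\bigr) \;=\; -\bigl(G'WG\bigr)^{-1}G'W\,\sqrt{T}\bigl[\hat\psi_T-\hat\psi_T^S(\theta_0)\bigr] + o_p(1).
\]
The key simplification is that, by Theorem~\ref{th:owen} applied at $\theta_0$ (with $S$ fixed), $\sqrt{T}\,[\hat\psi_T^S(\theta_0)-\mathbb{E}(\hat\psi_T^S(\theta_0))]=\sqrt{T}\,o_p((TS)^{-1/2})=o_p(1)$, while $\mathbb{E}(\hat\psi_T^S(\theta_0))=\mathbb{E}(\hat\psi_T)$ by Assumption~\ref{ass:regular}(i). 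Hence $\sqrt{T}[\hat\psi_T-\hat\psi_T^S(\theta_0)]=\sqrt{T}[\hat\psi_T-\mathbb{E}(\hat\psi_T)]+o_p(1)\dconv\mathcal{N}(0,V)$ by Assumption~\ref{ass:regular}(iii), and Slutsky's theorem delivers the stated sandwich variance $\Sigma$; this is also where the claim that the simulation noise does not inflate the asymptotic variance even for $S=1$ comes from.

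The only step requiring genuine care is the ULLN for the scrambled array, since the $u_t$ are identically distributed but \emph{not} independent, so the classical iid version of \citet{Jennrich1969} does not apply verbatim. But this obstacle is mild: Theorem~\ref{th:owen} already furnishes the pointwise LLN (indeed at rate $o_p(n^{-1/2})$) for \emph{any} integrand with finite variance, and the envelope conditions in Assumption~\ref{ass:smooth_nocovariates} let me replace the usual dominated-convergence control of the modulus of continuity by a further application of Theorem~\ref{th:owen} to $C_1$ and $C_2$. All the substantive work has been front-loaded into the design of Algorithm~\ref{algo:ScrMMdyn1}, which makes the simulated panels iid across $t$ while keeping the shock dimension $d$ fixed; given that, Proposition~\ref{prop:dyn_case1} is effectively a corollary of Proposition~\ref{prop:static_nocovariates}.
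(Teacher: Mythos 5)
Your proposal is correct and follows essentially the same route as the paper, whose proof simply observes that the construction in Algorithm \ref{algo:ScrMMdyn1} makes the simulated short panels fit the static no-covariates framework, so that Lemma \ref{lem:ULLNnocovariates} applies and the argument proceeds exactly as in Proposition \ref{prop:static_nocovariates}. Your write-up just spells out the details (Jennrich-type ULLN via Theorem \ref{th:owen} applied to $C_1,C_2$, mean-value expansion, negligibility of simulation noise at $\theta_0$, and the CLT from Assumption \ref{ass:regular}(iii)) that the paper leaves implicit by reference.
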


  \subsubsection{MC-qMC Hybrid Estimator}
  For simplicity, write:
  \[ \hat \psi_T^S(\theta) = \frac{1}{T S} \sum_{t=1}^{T S} \tilde \psi(y^1_t,z^1_t,u_t;\theta), \]
  where $y_t^1,z_t^1$ are simulated using MC methods as in Algorithm \ref{algo:ScrMMdyn2}. The hybrid MC-qMC approach relies on MC simulations to approximately draw initial values from the ergodic distribution and is combined with the scramble to simulate a cross-section of paths.
  \begin{assumption}[Dynamic Models - MC-qMC] \label{ass:dyn_case2}
    Suppose there exists a constant $K >0$ such that:
    \begin{itemize}
      \item[i.] For all $\theta \in \Theta$, $( y_t^1, z_t^1)$ is geometrically ergodic: $\|f_t( y_t^1, z_t^1;\theta)-f_\infty( y_t^\infty, z^\infty_t;\theta)\|_{TV} \leq C_1 \times \rho^t$, for some $\rho \in [0,1)$ and $0\leq C_1 < +\infty$, where $f_\infty$ is the ergodic distribution of $y_t^1,z_t^1$ and $f_t$ its non-stationary distribution with fixed starting value.
      \item[ii.] For all $\theta \in \Theta$, $\mathbb{E}[\|\tilde \psi( y_t^1, z_t^1,u_t;\theta)\|^{4}|u_t] \leq K < + \infty$, $\mathbb{E}[\|\tilde \psi( y^\infty_t,  z^\infty_t,u_t;\theta)\|^{4}|u_t] \leq K < + \infty$.
      \item[iii.]  For any $\|\theta_1-\theta_2\|$ small, $\|\tilde \psi( y_t^1, z_t^1,u_t;\theta_1) - \tilde \psi( y_t^1, z_t^1,u_t;\theta_2)\| \leq C_2( y_t^1, z_t^1,u_t;\theta_1) \times \|\theta_1-\theta_2\|$ with $\mathbb{E}[\| C_2( y_t^1, z_t^1,u_t;\theta_1) \|^{4}|u_t]\leq K < +\infty$ and $\mathbb{E}[\| C_2( y^\infty_t, z^\infty_t,u_t;\theta_1) \|^{4}|u_t]\leq K < +\infty$.
      \item[iv.] For all $\theta \in \Theta$, $\mathbb{E}[\| \partial_\theta \tilde \psi( y_t^1,  z_t^1,u_t;\theta)\|^{4}|u_t] \leq K < + \infty$, $\mathbb{E}[\| \partial_\theta \tilde \psi( y^\infty_t,  z^\infty_t,u_t;\theta)\|^{4}|u_t] \leq K < + \infty$.
      \item[v.]  For any $\|\theta_1-\theta_2\|$ small, $\|\partial_\theta \tilde \psi( y_t^1, z_t^1,u_t;\theta_1) - \partial_\theta \tilde \psi( y_t^1, z_t^1,u_t;\theta_2)\| \leq C_3( y_t^1, z_t^1,u_t;\theta_1) \times \|\theta_1-\theta_2\|$ with $\mathbb{E}[\| C_3( y_t^1, z_t^1,u_t;\theta_1) \|^{4}|u_t]\leq K < +\infty$  and $\mathbb{E}[\| C_3( y^\infty_t, z^\infty_t,u_t;\theta_1) \|^{4}|u_t]\leq K < +\infty$.
      \item[vi.] $\lim_{T \to \infty} T \times \text{var}( \hat \psi_T^S(\theta_0) |u_1,\dots,u_{TS})$ is positive definite and finite.
    \end{itemize}
  \end{assumption}
  Assumption \ref{ass:dyn_case2} i. is the usual geometric ergodicity condition \citep{duffie-singleton}. Conditions ii.-v. are more restrictive, they hold if the moments are bounded. To relax these conditions, one would need to extend the CLT in Theorem 5.20 of \citet{White1984} to unbounded non-identically distributed dependent arrays which is outside the scope of this paper. Condition vi. requires the variance to be non-degenerate to apply a CLT. Otherwise, simulation noise is negligible in some directions which is not problematic in this setting.

  \begin{proposition}[Consistency and Asymptotic Normality - MC-qMC] \label{prop:dyn_case2}
    Suppose Assumptions \ref{ass:regular} and \ref{ass:dyn_case2} hold, then $\hat \theta_T^S \overset{p}{\to} \theta_0$ and
    \[ \sqrt{T} (\hat \theta_T^S - \theta_0) \overset{d}{\to} \mathcal{N}(0,\Sigma), \]
    where
    \[ \Sigma = \left( G^\prime W G \right)^{-1} G^\prime W V W G \left( G^\prime W G \right)^{-1}, \]
    $G = \lim_{T \to \infty} \mathbb{E}[\partial_\theta \hat \psi_T^S(\theta_0)]$, $V = \lim_{T \to \infty} T \times \text{var}(\hat \psi_T^S(\theta_0)|u_1,\dots,u_{TS})$.
  \end{proposition}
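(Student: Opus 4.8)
The plan is the usual two moves for simulation‑based extremum estimators --- a uniform law of large numbers (ULLN) for consistency, then a mean‑value expansion of the first‑order condition for asymptotic normality --- except that every limiting argument is carried out \emph{conditionally on the scrambled draws} $u_1,\dots,u_{TS}$, so that Theorem \ref{th:owen} can be used to show the otherwise‑random conditional mean and conditional variance have nonrandom limits. Write $h_t(u;\theta)=\mathbb{E}[\tilde\psi(y_t^1,z_t^1,u;\theta)]$ for the expectation over the chain $(y_t^1,z_t^1)$ launched from the fixed starting value, and $h_\infty(u;\theta)=\mathbb{E}[\tilde\psi(y_t^\infty,z_t^\infty,u;\theta)]$ under the ergodic law. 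A truncation argument turns the total‑variation bound of Assumption \ref{ass:dyn_case2}(i) together with the uniform fourth‑moment bound (ii) into $\sup_{u,\theta}\|h_t(u;\theta)-h_\infty(u;\theta)\|\leq C\rho^{t}$, so the burn‑in bias of $\hat\psi_T^S$ is $O(1/(TS))$ throughout and, multiplied by $\sqrt T$, is $o(1)$; it is harmless at every step below, and the same remark applies to $\partial_\theta\hat\psi_T^S$ via (iv).

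\emph{Step 1 (ULLN and consistency).} Since $(y_t^1,z_t^1)_t$ is independent of the scramble, conditioning on $u_1,\dots,u_{TS}$ leaves it geometrically ergodic, hence mixing, and turns $\{\tilde\psi(y_t^1,z_t^1,u_t;\theta)\}_t$ into a \emph{heterogeneous} mixing array --- the heterogeneity coming from $u_t$ depending on $t$ --- uniformly $L^4$‑bounded by (ii). A LLN for $L^2$‑bounded mixing arrays gives $\hat\psi_T^S(\theta)-\tfrac1{TS}\sum_t h_t(u_t;\theta)\pconv 0$; replacing $h_t$ by $h_\infty$ (burn‑in) and applying Theorem \ref{th:owen} to $u\mapsto h_\infty(u;\theta)$, which has finite variance by Jensen and (ii), gives $\tfrac1{TS}\sum_t h_\infty(u_t;\theta)\pconv \int h_\infty(u;\theta)du=:\psi_\infty(\theta)$. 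The Lipschitz condition (iii), with its $L^4$ envelope, upgrades pointwise to uniform convergence over the compact $\Theta$ by the standard covering argument and delivers continuity of $\psi_\infty$. Combined with Assumption \ref{ass:regular}(i),(iii),(iv) --- which identify $\lim\mathbb{E}[\hat\psi_T]=\psi_\infty(\theta_0)$, supply a CLT for $\hat\psi_T$, and give $W_T\pconv W$ --- the sample criterion converges uniformly to $\|\psi_\infty(\theta_0)-\psi_\infty(\theta)\|_W^2$, uniquely minimized at $\theta_0$; argmin continuity yields $\hat\theta_T^S\pconv\theta_0$.

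\emph{Step 2 (asymptotic normality).} Consistency and $\theta_0\in\mathrm{int}(\Theta)$ activate the first‑order condition; a componentwise Taylor expansion of $\hat\psi_T^S(\hat\theta_T^S)$ around $\theta_0$ (licensed by (iv)) gives
\[ \sqrt T\,(\hat\theta_T^S-\theta_0)=-\big(\hat G_T'W_T\bar G_T\big)^{-1}\hat G_T'W_T\;\sqrt T\,\big(\hat\psi_T-\hat\psi_T^S(\theta_0)\big)+o_p(1), \]
with $\hat G_T=\partial_\theta\hat\psi_T^S(\hat\theta_T^S)$ and $\bar G_T$ evaluated at intermediate points. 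Assumptions (iv),(v) repeat the structure of (ii),(iii) for $\partial_\theta\tilde\psi$, so Step 1 applied to the Jacobian gives $\hat G_T,\bar G_T\pconv G=\lim_T\mathbb{E}[\partial_\theta\hat\psi_T^S(\theta_0)]$. For the moment vector, split $\sqrt T\,\hat\psi_T^S(\theta_0)$ into a conditional fluctuation plus a conditional‑bias term; the latter equals $\sqrt T\big(\tfrac1{TS}\sum_t h_\infty(u_t;\theta_0)-\int h_\infty(u;\theta_0)du\big)+o(1)=\sqrt T\cdot o_p((TS)^{-1/2})=o_p(1)$ by Theorem \ref{th:owen}. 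The conditional fluctuation is a normalized sum of a zero‑mean, uniformly $L^4$‑bounded, heterogeneous mixing array; Theorem 5.20 of \citet{White1984}, invoked through Assumption \ref{ass:dyn_case2}, gives a conditional CLT with variance $\tfrac1T\,\mathrm{var}(\hat\psi_T^S(\theta_0)\mid u_1,\dots,u_{TS})$, and a second use of Theorem \ref{th:owen}, now on the variance functionals of $(u_t)$, shows this converges to the nonrandom $V$ of Assumption (vi); hence the conditional law converges to $\mathcal N(0,V)$ for a.e.\ realization of the scramble, and therefore unconditionally. Combining with the CLT for $\sqrt T(\hat\psi_T-\mathbb{E}[\hat\psi_T])$ from Assumption \ref{ass:regular}(iii) (independent of the simulation) and the matching of the centerings gives $\sqrt T(\hat\psi_T-\hat\psi_T^S(\theta_0))\dconv\mathcal N(0,V)$, and Slutsky with the Jacobian and weight‑matrix limits delivers the stated sandwich.

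\emph{Main obstacle.} The delicate part is the conditional CLT for $\hat\psi_T^S(\theta_0)$: one must verify that conditioning on the scramble genuinely preserves the mixing and moment structure of the MC‑simulated chain for a.e.\ realization (which is why Assumption \ref{ass:dyn_case2}(ii)--(v) are stated conditionally on $u_t$), check White's mixing‑array hypotheses for the resulting \emph{heterogeneous} array, and --- the real crux --- use Theorem \ref{th:owen} to force the random conditional variance to a deterministic limit so the conditional CLT can be promoted to an unconditional one. This is also exactly where boundedness, Assumption \ref{ass:dyn_case2}(ii), enters; dispensing with it would require a CLT for unbounded heterogeneous dependent arrays, as noted after the statement.
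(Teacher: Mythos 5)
Your proposal is correct and follows essentially the same route as the paper: the same decomposition of $\hat\psi_T^S$ into a conditional-on-the-scramble heterogeneous mixing array (handled by a mixing LLN/covariance inequality and White's Theorem 5.20 CLT) plus a non-stationary qMC term (reduced to the stationary case via the geometric-ergodicity/total-variation bound with the fourth-moment condition, then controlled by Theorem \ref{th:owen}), followed by the Jennrich-style covering argument for uniformity and the standard consistency and mean-value-expansion steps. The paper's proof of Lemma \ref{lem:ULLN_mcQmc} makes the pointwise step explicit through Davydov's inequality, but this is the same device you invoke implicitly through the LLN for $L^2$-bounded mixing arrays.
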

  Proposition \ref{prop:dyn_case2} is similar to \citet{duffie-singleton}, the main idea to prove the result is to write the simulated moments as the sum of a mixing non-identically distributed heterogeneous array and an average of identically distributed non-independent terms. As in Proposition \ref{prop:static_covariates}, the former is handled using a specific CLT while the former uses Theorem \ref{th:owen} with similar steps to \citet{Jennrich1969}.

  \section{Monte-Carlo Illustrations} \label{sec:MC}
  The following illustrates the finite sample properties of the Scrambled Method of Moments and Scrambled Indirect Inference in several simple examples and one application drawn from the heterogeneous agents literature. All simulations were carried out in R and C++ using the \textit{Rcpp} package. Scrambled sequences were generated using the \textit{fOptions} package.
  \subsection{Simple Examples}  \label{sec:simple}
    
  \subsubsection{Mean-Variance} \label{sec:mean_var}
  The first example, drawn from \citet{Gourieroux1993}, considers the estimation of a sample mean and variance of for an \textit{iid} Gaussian sample:
  \[ y_i = \mu + \sigma e_i, \quad e_i \sim \mathcal{N}(0,1). \]
  This example illustrates Algorithms \ref{algo:SMM}, \ref{algo:static_no_covariates} and Proposition \ref{prop:ind}. As in the original paper, the auxiliary parameters $\hat \psi_n$ are the sample mean and variance of $(y_1,\dots,y_n)$:
  \[ \hat \psi_n = ( \hat \mu_n, \hat \sigma_n^2 )^\prime = \frac{1}{n} \sum_{i=1}^n (y_i, [y_i - \hat \mu_n]^2)^\prime. \]
  In the $5,000$ Monte-Carlo replications, the sample size is $n=100$ and $\theta_0=(\mu_0,\sigma_0^2) = (0,1)$. The number of simulated samples is $S=1,2,4$ and $20$.   For SMM, $e_i^s$ is drawn using the random number generator \textit{rnorm} in $R$ and antithetic draws are generated for $S=2,4$ and $20$ by taking $e_i^{s+S/2} = -e_i^{s}$ for each $s=1,\dots,S/2$. The \textit{fOptions} package generates the scrambled  Gaussian shocks directly.   Table \ref{tab:mean_variance} summarizes the biases and standard deviations of the estimators. 
  \begin{table}[H] \caption{Mean and Variance Estimation} \label{tab:mean_variance}
    \begin{center}
      { \footnotesize
      \begin{tabular}{l|c|cccc|cccc|cccc}
        \hline \hline
                 & MM & \multicolumn{4}{|c|}{SMM} & \multicolumn{4}{c|}{Antithetic} & \multicolumn{4}{c}{Scramble}\\
                coef./$S$ &   & 1 & 2 & 4 & 20 & 1 & 2 & 4 & 20 & 1 & 2 & 4 & 20\\ \hline
                $\sqrt{n} \times \text{std}(\hat \mu)$ & 0.99 & 1.44 & 1.22 & 1.10 & 1.01 & - & 0.99 & 1.00 & 0.98 & 1.00 & 1.00 & 1.00 & 1.00 \\
                $\sqrt{n} \times \text{std}(\hat \sigma^2)$ & 1.41 & 2.07 & 1.76 & 1.60 & 1.47 & - & 2.03 & 1.75 & 1.50 & 1.44 & 1.44 & 1.41 & 1.41 \\
                $100 \times \text{bias}(\hat \sigma^2)$ & -0.93 & 2.38 & 0.89 & 0.49 & 0.25 & - & 1.93 & 0.98 & 0.44 & -0.43 & -0.87 & -1.07 & -0.80 \\
        \hline \hline
      \end{tabular} 
    }
    \end{center}
  \end{table}
  Because it has no simulation noise, the Method of Moments (MM) estimator has the smallest variance. SMM has a bias correction property for $\hat \sigma^2$ \citep{Gourieroux1993}. For $\hat \mu_n$, antithetic draws and the scramble perform equally well for $S=2$ and $S=1$, respectively. For $\hat \sigma_n^2$, antithetic draws perform worse than SMM and the scramble. This is in line with the discussion in Section \ref{sec:MC_anti}. The scramble performs similarly to the MM while SMM requires $S=20$ to perform similarly. SMM and antithetic draws reduce the bias while the scramble does not. This reflects the fact that the scrambled $\hat \psi_n^S(\theta_0)$ approximates the asymptotic binding function $\psi_\infty(\theta_0) = \lim_{n \to \infty} \mathbb{E}(\hat \psi_n)$ while SMM and antithetic draws approximate the binding function $\psi(\theta_0) = \mathbb{E}(\hat \psi_n)$ which provides some finite sample bias correction.

  \subsubsection{Probit Model} \label{sec:probit}
  The second example illustrates Algorithm \ref{algo:static_covariates} with non-smooth moments and covariates. The DGP is a Probit model:
  \[ y_i = \mathbbm{1} \left\{ \theta_1 + \theta_2 x_i \theta + e_i \geq 0 \right\}, \quad e_i \overset{iid}{\sim} \mathcal{N}(0,1), \quad x_i \sim \mathcal{N}(0,1). \]
  The moments $\hat \psi_n$ consist of the intercept and the slope in an OLS regression of $y_i$ on $x_i$. In the $5,000$ Monte-Carlo replications, the sample size is $n=1,000$ and $\theta_0=(\theta_{1,0},\theta_{2,0}) = (1,1)$. The number of simulated samples is $S=1,2,4$ and $20$. The standard deviations of the estimators are reported in Table \ref{tab:probit}.
  \begin{table}[H] \caption{Probit Models: $\sqrt{n} \times \text{std}(\hat \theta_n^S)$} \label{tab:probit}
    \centering
    \begin{tabular}{r|cccc|cccc|cccc}
      \hline \hline
      & \multicolumn{4}{c|}{SMM} & \multicolumn{4}{c|}{Antithetic} & \multicolumn{4}{c}{Scramble}\\
      coef./$S$     & $1$ & $2$ & $4$ & $10$ & $1$ & $2$ & $4$ & $10$ & $1$ & $2$ & $4$ & $10$ \\ 
      \hline
      $\hat \theta_{1,n}$    & 2.38 & 2.24 & 2.11 & 1.91 & - & 2.17 & 2.06 & 1.91 & 2.14 & 2.09 & 2.01 & 1.90\\ 
      $\hat \theta_{2,n}$    & 2.76 & 2.57 & 2.42 & 2.22 & - & 2.47 & 2.35 & 2.21 & 2.68 & 2.52 & 2.39 & 2.19\\
       \hline \hline
    \end{tabular}
    \end{table} 
  The Scrambled Method of Moments outperforms SMM for $S=1$ and above. For $S \geq 2$, the scramble performs similarly to antithetic draws for estimating $\theta_0$ and $\theta_1$. The gains are less substantial than in the previous example.

  \subsubsection{ARMA Model} \label{sec:arma}
  To illustrate Algorithms \ref{algo:ScrMMdyn1} and \ref{algo:ScrMMdyn2} consider the following ARMA(1,1) model:\footnote{In the notation of Section \ref{sec:Dynamic}, the model can be written as: $y_t = \rho y_{t-1} + \sigma [z_{t,1} + \vartheta z_{t,2}],$ $(z_{t,1},z_{t,2})^\prime = (e_{t},z_{t-1,1})^\prime,$
  with $z_{t} = (z_{t,1},z_{t,2})^\prime$.}
  \[ y_t = \rho y_{t-1} + \sigma[ e_t + \vartheta e_{t-1}], \quad e_t \overset{iid}{\sim} \mathcal{N}(0,1), \]
  In the $5,000$ Monte-Carlo replications, the sample size is $T=200$ and $\theta_0=(\vartheta_0,\rho_0,\sigma_0^2) = (0.5,0.5,1)$. The number of simulated samples is $S=1,2$. The moments are the OLS coefficients from regressing $y_t$ on its first $L=4$ lags and the variance of the OLS residuals. Using auto-covariances as moments instead yields similar results.  
  
  Algorithm \ref{algo:ScrMMdyn1} requires sampling $(y^1_t,e^1_t)$ from its stationary distribution directly. The marginals are known since $e^1_t \sim \mathcal{N}(0,1)$ by assumption and $y^1_t \sim \mathcal{N}(0, [1+\vartheta^2 + 2\rho \vartheta]/[1-\rho^2]\sigma^2)$. Since they are jointly Gaussian, it is sufficient to compute their covariance, $\text{cov}(e_t^1,y_t^1) = \rho \vartheta \sigma$, to find their joint distribution: 
  \[ \left( \begin{array}{c} y_t^1 \\ e_t^1 \end{array} \right) \sim \mathcal{N} \left( \left( \begin{array}{c} 0 \\ 0 \end{array} \right), \left( \begin{array}{cc} \frac{1+\vartheta^2 + 2\rho \vartheta}{1-\rho^2}\sigma^2 & \rho \vartheta \sigma \\ \rho \vartheta \sigma & 1 \end{array} \right)\right). \]
  Transforming independent bivariate scrambled Gaussian shocks into draws from the joint distribution above is then straightforward.
  For Algorithm \ref{algo:ScrMMdyn2}, the $(y^1_t,e^1_t)$ need to be sampled using MC methods. First, the initial value $(y^1_0,e^1_0)=(0,0)$ is set and a path $(y_t^1,e_t^1)$ is simulated with random MC draws.  
  Once these $(y_t^1,e_t^1)$ are simulated, the remaining $(y_t^2,e_t^2,\dots,y_t^5,e_t^5)$ are computed using scrambled Gaussian shocks.\footnote{Here the number of lags used to compute the moments is $L=5$ because $y_t^5$ is regressed on its 4 lags $y_t^4,\dots,y_t^1$.}

  Table \ref{tab:arma} compares the Maximum Likelihood Estimator (MLE) with SMM and antithetic draws, the qMC-only scramble from Algorithm \ref{algo:ScrMMdyn1} (reported in the Scramble column) and the hybrid MC-qMC scrambled from Algorithm \ref{algo:ScrMMdyn2} (reported in the Scramble-MC column).
  \begin{table}[H] \caption{ARMA(1,1): $\sqrt{n} \times \text{std}(\hat \theta_n^S)$} \label{tab:arma}
    \centering
    \begin{tabular}{r|c|cc|cc|cc|cc}
      \hline \hline
                      & MLE  & \multicolumn{2}{c|}{SMM} & \multicolumn{2}{c|}{Antithetic} & \multicolumn{2}{c|}{Scramble} & \multicolumn{2}{c}{Scramble-MC}\\
      coef./$S$       &      & 1    & 2    & 1 & 2     & 1    & 2     & 1     & 2\\ \hline
      $\hat \rho_n$   & 1.10 & 1.64 & 1.44 & - & 1.66  & 1.20 & 1.17  & 1.39  & 1.28\\ 
      $\hat \theta_n$ & 1.13 & 1.86 & 1.57 & - & 1.87  & 1.33 & 1.28  & 1.53  & 1.36\\ 
      $\hat \sigma_n$ & 0.72 & 1.05 & 0.90 & - & 1.04  & 0.76 & 0.72  & 0.94  & 0.84\\
      \hline \hline
    \end{tabular}
  \end{table}
  MLE corresponds to the lower bound for variance of the estimators. The qMC-only scramble from Algorithm \ref{algo:ScrMMdyn1} outperforms SMM and antithetic draws. Antithetic draws perform worse than SMM using the same $S$ which further illustrates the discussion in Section \ref{sec:MC_anti}. The hybrid MC-qMC Algorithm \ref{algo:ScrMMdyn2} performs better than SMM and antithetic draws and, as expected, worse than the qMC-only approach. 
  

  \subsection{An Income Process with ``Lots of Heterogeneity''} \label{sec:het_income}
  The last example is a more substantial model borrowed from \citet{BROWNING2010}.\footnote{The data generating process considered here involves all the coefficients found in \citet{BROWNING2010}, Table 2 minus the measurement errors and the time-trend in the ARCH component which are not considered in this Monte-Carlo exercise.} Simulation-based estimation is commonly used in this heterogeneous agents literature due to the complexity and intractability of the models.\footnote{See e.g. \citet{Guvenen2011} for an overview of the computation and estimation of heterogeneous agents models.} The baseline data generating process is an ARMA(1,1) at the individual level:
  \begin{align*}
    y_{i,t} = \delta_i \times \left( [1-\omega^t_{i}] + \beta_i  \times [1-\omega^{t-1}_{i}]\right) + \alpha_i\beta_i  +\beta_i y_{i,t-1} + \alpha_i \times [1-\beta_i] \times t + \varepsilon_{i,t} + \theta_i \varepsilon_{i,t-1}
  \end{align*}
  where the drift $\alpha_i$, long-run mean $\delta_i$, AR and MA coefficients $\beta_i,\theta_i$ as well as the persistence coefficient $\omega_i$ all vary at the individual level. The Gaussian shocks to log-income in the time dimension are denoted by $\varepsilon_i$ while the Gaussian shocks to the ARMA coefficients $\alpha_i,\beta_i,\dots$ will be denoted by $\eta_i$. The initial value for log-income $y_{i,0}$ is drawn as:
  \begin{align*}
    y_{i,0} = \exp( \tau )\times \eta_{i,0}.
  \end{align*}
  The heterogenous ARMA coefficients are then drawn using:
  \begin{align*}
      \nu_{i,0} &= \exp(\phi_{11} + \phi_{12}\times y_{i,0} + \psi_{11} \times \eta_{i,1})\\
      \theta_{i} &= \text{logit}\left( \phi_{21} + \phi_{22} \times y_{i,0} + \psi_{21} \times \eta_{i,1} + \psi_{22} \times \eta_{i,2} \right) -1/2\\
      \alpha_i &= \phi_{31} + \phi_{32} \times y_{i,0} + \psi_{3,1} \times \eta_{i,1}\\
      \beta_{i} &= \text{logit}\left( \phi_{41} + \phi_{42} \times y_{i,0} + \psi_{4,1} \times \eta_{i,1}+ \psi_{4,2} \times \eta_{i,2} \right)\\
      \delta_{i} &= \phi_{51} + \phi_{52} \times y_{i,0} + \psi_{5,1} \times \eta_{i,1}+ \psi_{5,2} \times \eta_{i,2}\\
      \omega_{i} &= \text{logit}( \phi_{61} + \psi_{62} \times \eta_{i,2})
  \end{align*}
  where $\text{logit}$ is the usual logistic transformation $\text{logit}(x) = 1/[1+\exp(-x)]$. $\eta_{i,0},\dots,\eta_{i,2} \overset{iid}{\sim} \mathcal{N}(0,1)$. For a discussion of the parameters and the role of the transformations, see \citet{BROWNING2010}. $\nu_{i,0}$ is the initial value for the ARCH-type heteroskedasticity in the shocks $\varepsilon_{i,t}$:
  \begin{align*}
    \sigma_{i,1}^2 &= \nu_{i,0}, &\varepsilon_{i,1} = \sigma_{i,1} \times e_{i,1}\\
    \sigma_{i,t}^2 &= \nu_{i,0} + \text{logit}(\varphi) \times \varepsilon_{i,t-1}^2, &\varepsilon_{i,t} = \sigma_{i,t} \times e_{i,t}
  \end{align*}
  where $e_{i,0},\dots,e_{i,T} \overset{iid}{\sim} \mathcal{N}(0,1)$.  In the simulations, the number of households is $n=1,000$; the number of time periods is $T=30$. As in the original paper, a burn-in period of $T_{burn}=3$ periods is used to reduce the effect of the initial conditions. The parameter values are taken from Table 2 in \citet{BROWNING2010} and the moments are those described in their Appendix A.2 except the ones involving year of birth which are not considered in these simulations. In a nutshell, the moments involve the aggregation of individual-level OLS coefficients, moments based on OLS residuals, autocorrelations and measures of social mobility.  

  \begin{table} \caption{Income Process with Heterogeneity: $\sqrt{n} \times \text{std}(\hat \theta_n^S)$} \label{tab:MC_het} 
    \centering
    \begin{tabular}{r|ccc|ccc|ccc|ccc} 
      \hline \hline
       & \multicolumn{3}{c|}{SMM} & \multicolumn{3}{c|}{Antithetic} & \multicolumn{6}{c}{Scramble}   \\   \cline{8-13}
       & \multicolumn{3}{c|}{ }   & \multicolumn{3}{c|}{ }          & \multicolumn{3}{c|}{$S$ samples of size $n$} &  \multicolumn{3}{c}{$1$ sample of size $n S$}\\
     coef./$S$ & $1$ & $2$ & $4$ & $1$ & $2$ & $4$ & $1$ & $2$ & $4$ & $1$ & $2$ & $4$ \\ 
      \hline
$\tau$      & 1.27 & 1.20 & 1.20 & \,\,\,-\,\, & 1.25 & 1.35 & 1.06 & 1.12 & 1.11 & 1.06 & 1.13 & 1.13 \\
$\phi_{11}$ & 1.29 & 1.22 & 1.18 & \,\,\,-\,\, & 1.06 & 1.23 & 1.00 & 0.99 & 1.00 & 1.00 & 1.03 & 1.04 \\
$\phi_{12}$ & 4.32 & 3.70 & 3.28 & \,\,\,-\,\, & 3.70 & 4.47 & 3.39 & 3.07 & 2.95 & 3.39 & 3.09 & 3.20\\
$\phi_{21}$ & 1.62 & 1.44 & 1.39 & \,\,\,-\,\, & 1.53 & 1.73 & 1.30 & 1.31 & 1.26 & 1.30 & 1.20 & 1.28\\
$\phi_{31}$ & 0.18 & 0.15 & 0.14 & \,\,\,-\,\, & 0.14 & 0.15 & 0.15 & 0.14 & 0.13 & 0.15 & 0.14 & 0.14\\
$\phi_{32}$ & 0.17 & 0.16 & 0.14 & \,\,\,-\,\, & 0.15 & 0.15 & 0.15 & 0.15 & 0.14 & 0.15 & 0.14 & 0.14\\
$\phi_{41}$ & 1.90 & 1.66 & 1.68 & \,\,\,-\,\, & 1.72 & 1.84 & 1.60 & 1.57 & 1.54 & 1.60 & 1.60 & 1.59\\
$\phi_{51}$ & 2.98 & 2.84 & 2.70 & \,\,\,-\,\, & 2.68 & 2.79 & 2.94 & 2.78 & 2.59 & 2.94 & 2.76 & 2.61\\
$\phi_{52}$ & 8.54 & 8.17 & 7.58 & \,\,\,-\,\, & 7.75 & 7.73 & 8.00 & 7.59 & 7.23 & 8.00 & 7.54 & 7.31\\
$\phi_{61}$ & 4.18 & 4.20 & 3.29 & \,\,\,-\,\, & 3.36 & 3.32 & 3.27 & 3.16 & 2.87 & 3.27 & 3.01 & 3.04\\
$\psi_{11}$ & 1.25 & 1.29 & 1.20 & \,\,\,-\,\, & 1.29 & 1.28 & 0.89 & 0.88 & 0.96 & 0.98 & 0.97 & 1.01\\
$\psi_{22}$ & 2.55 & 2.42 & 2.39 & \,\,\,-\,\, & 2.47 & 2.89 & 2.13 & 2.08 & 2.15 & 2.13 & 2.09 & 2.21\\
$\psi_{31}$ & 0.08 & 0.07 & 0.07 & \,\,\,-\,\, & 0.07 & 0.07 & 0.06 & 0.07 & 0.06 & 0.06 & 0.06 & 0.06\\
$\psi_{41}$ & 2.68 & 2.36 & 2.30 & \,\,\,-\,\, & 2.40 & 2.61 & 2.27 & 2.22 & 2.10 & 2.27 & 2.22 & 2.19\\
$\psi_{42}$ & 1.90 & 1.76 & 1.74 & \,\,\,-\,\, & 1.75 & 1.87 & 1.56 & 1.53 & 1.55 & 1.56 & 1.63 & 1.58\\
$\psi_{51}$ & 3.30 & 3.13 & 2.83 & \,\,\,-\,\, & 2.90 & 3.28 & 2.64 & 2.89 & 2.56 & 2.64 & 2.64 & 2.70\\
$\psi_{52}$ & 2.14 & 2.03 & 2.20 & \,\,\,-\,\, & 2.14 & 2.45 & 1.78 & 1.95 & 1.83 & 1.78 & 1.84 & 1.96\\
$\psi_{62}$ & 3.53 & 3.53 & 3.65 & \,\,\,-\,\, & 3.55 & 4.01 & 3.34 & 3.28 & 3.36 & 3.34 & 3.44 & 3.56\\
$\varphi$   & 2.24 & 2.10 & 2.39 & \,\,\,-\,\, & 2.76 & 3.06 & 1.99 & 1.94 & 2.21 & 1.99 & 2.30 & 2.56\\
       \hline \hline
    \end{tabular}
    \end{table}
  
  The implementation of SMM is standard and described in Appendix A.4 of the original paper. For the scramble, a $(n\times S) \times (T+T_{burn} + 3) = (1,000 \times S) \times 36$ matrix of scrambled standard gaussian shocks is drawn. The integration dimension $d=36$ is sufficiently large to illustrate the finite sample performance of the scrambled method of moments with a relatively large number of shocks. The first three dimensions (columns of the matrix) correspond to $\eta_{i,0},\dots,\eta_{i,2}$, the remaining dimensions correspond to time dimensions $e_{i,1},\dots,e_{i,T+T_{burn}}$. The rows correspond to the cross-sectional dimension of the shocks, \textit{i.e.} the $i=1,\dots,n\times S$ index.

  The results from the $2,000$ Monte-Carlo replications are presented in Table \ref{tab:MC_het} for $S=1,2,4$. SMM and antithetic draws are used as a benchmark for the scramble with either a large sample of $n \times S$ individuals (as in Algorithm \ref{algo:static_no_covariates} or $S$ samples of $n$ individuals (as in Algorithm \ref{algo:static_no_covariates}). The scramble generally outperform SMM and antithetic draws. Antithetic draws either under or over-performs SMM depending on the parameter of interest which is in line with previous discussions. Both implementations of the Scrambled Method of Moments perform similarly. For some coefficients, there is little to no improvement in increasing $S$ from $1$ to $2$ or $4$. For most coefficients, the scramble with $S=2$ outperforms SMM with $S=4$. 
  Furthermore, using the same $S=4$, the replications were computed about $15\%$ faster for the scramble than SMM. Since the only difference between the two is the shocks used in the simulations, this reflects faster convergence of the optimizer. Possibly because the scramble are smoother (less noisy) than the MC moments which makes the objective function easier to minimize.
  \section{Conclusion} \label{sec:Conclusion}
    This paper proposes several algorithms implementing Owen's scramble for simulation-based estimation. Since the method is designed for computing integrals of \textit{iid} sequences, some care is needed when simulating data with covariates or time series. Large sample results are provided to support the proposed algorithms. The results for dynamic models could be extended to non-smooth bounded moments through additional stochastic equicontinuity results using the inequality in \citet{Andrews1994} for instance. The simulations illustrate the finite performance of the Scrambled Methods of Moments and Scrambled Indirect Inference compared to other commonly used methods. The last example suggests the scramble could be useful in larger scale problems found in the heterogenous agents literature where SMM is commonly used.
\newpage
\bibliography{refs}
\newpage
\begin{appendices}
  \renewcommand\thetable{\thesection\arabic{table}}
  \renewcommand\thefigure{\thesection\arabic{figure}}
  \renewcommand{\theequation}{\thesection.\arabic{equation}}
  \renewcommand\thelemma{\thesection\arabic{lemma}}
  \renewcommand\thetheorem{\thesection\arabic{theorem}}
  \renewcommand\thedefinition{\thesection\arabic{definition}}
    \renewcommand\theassumption{\thesection\arabic{assumption}}
  \renewcommand\theproposition{\thesection\arabic{proposition}}
    \renewcommand\theremark{\thesection\arabic{remark}}
    \renewcommand\thecorollary{\thesection\arabic{corollary}}

    \section{Proofs} \label{sec:proofs}

    \subsection{Static Models}
    \subsubsection{Smooth Moments with No Covariates}

    \begin{lemma}[ULLN and CLT for Smooth Moments without Covariates] \label{lem:ULLNnocovariates}
      Suppose the conditions in Assumption \ref{ass:smooth_nocovariates} hold, then:
      \begin{itemize}
      \item[i.] $\sup_{\theta \in \Theta} \|\hat \psi_n^s(\theta) - \mathbb{E}(\hat \psi_n^s(\theta))\| = o_p(1),$
      \item[ii.] $\sup_{\theta \in \Theta} \| \partial_\theta \hat \psi_n^s(\theta) - \partial_\theta \mathbb{E}(\hat \psi_n^s(\theta))\| = o_p(1),$
      \item[iii.] $\|\hat \psi_n^s(\theta_0) - \mathbb{E}[\hat \psi_n^s(\theta_0)]\| = o_p(n^{-1/2}).$
      \end{itemize} 
    \end{lemma}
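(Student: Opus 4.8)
The plan is to read off (iii) directly from Theorem~\ref{th:owen}, and then to lift the resulting pointwise statements to the uniform conclusions (i)--(ii) by a stochastic--equicontinuity argument in the spirit of \citet{Jennrich1969}, with the Lipschitz bounds in Assumption~\ref{ass:smooth_nocovariates} supplying the equicontinuity. Throughout I write $\hat\psi_n^s(\theta)=\frac{1}{n}\sum_{i=1}^n\tilde\psi(\theta,u_i)$ for the scrambled sequence $u_1,\dots,u_n$, and I note that because the scrambled draws are identically distributed $\mathcal{U}_{[0,1]^d}$, $\mathbb{E}[\hat\psi_n^s(\theta)]=\mathbb{E}[\tilde\psi(\theta,u_1)]=\int_{[0,1]^d}\tilde\psi(\theta,u)\,du$ does not depend on $n$ or $s$.

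For (iii): by Assumption~\ref{ass:smooth_nocovariates}.i, $\tilde\psi(\theta_0,u_1)$ has finite second moment, hence finite variance, so Theorem~\ref{th:owen} applied coordinatewise to $f(u)=\tilde\psi(\theta_0,u)$ yields $\|\hat\psi_n^s(\theta_0)-\mathbb{E}[\hat\psi_n^s(\theta_0)]\|=o_p(n^{-1/2})$. For the building block of (i), fix any $\theta\in\Theta$: the same argument with $f(u)=\tilde\psi(\theta,u)$ gives $\hat\psi_n^s(\theta)-\mathbb{E}[\tilde\psi(\theta,u_1)]=o_p(n^{-1/2})$, so pointwise consistency holds. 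Next, set $\bar C_{1,n}:=\frac{1}{n}\sum_{i=1}^n C_1(u_i)$; since $\mathbb{E}[C_1(u_1)^2]<\infty$ we have $\mathbb{E}[C_1(u_1)]<\infty$ by Cauchy--Schwarz, and as the $u_i$ are identically distributed $\mathbb{E}[\bar C_{1,n}]=\mathbb{E}[C_1(u_1)]$, so $\bar C_{1,n}=O_p(1)$ by Markov's inequality (and in fact $\bar C_{1,n}\pconv\mathbb{E}[C_1(u_1)]$ by Theorem~\ref{th:owen}, as $C_1$ has finite variance). The Lipschitz bound then gives $\|\hat\psi_n^s(\theta_1)-\hat\psi_n^s(\theta_2)\|\le \bar C_{1,n}\|\theta_1-\theta_2\|$ for all $\theta_1,\theta_2\in\Theta$, and dominated convergence (justified by the Lipschitz bound and $\mathbb{E}[C_1(u_1)]<\infty$) gives $\|\mathbb{E}[\tilde\psi(\theta_1,u_1)]-\mathbb{E}[\tilde\psi(\theta_2,u_1)]\|\le \mathbb{E}[C_1(u_1)]\|\theta_1-\theta_2\|$.

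To conclude (i), fix $\varepsilon>0$ and cover the compact set $\Theta$ by finitely many balls of radius $\varepsilon$ with centers $\theta_1,\dots,\theta_K$. For arbitrary $\theta\in\Theta$, choosing the nearest center $\theta_k$ gives
\[
\|\hat\psi_n^s(\theta)-\mathbb{E}[\tilde\psi(\theta,u_1)]\|\le \bar C_{1,n}\varepsilon+\max_{1\le k\le K}\|\hat\psi_n^s(\theta_k)-\mathbb{E}[\tilde\psi(\theta_k,u_1)]\|+\mathbb{E}[C_1(u_1)]\varepsilon .
\]
The middle term is $o_p(1)$ (a maximum over finitely many $o_p(1)$ terms), while $\bar C_{1,n}=O_p(1)$, so taking the supremum over $\theta$ and then letting $\varepsilon\downarrow 0$ gives $\sup_{\theta\in\Theta}\|\hat\psi_n^s(\theta)-\mathbb{E}[\hat\psi_n^s(\theta)]\|=o_p(1)$. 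Part (ii) follows by repeating these three steps verbatim with $\partial_\theta\tilde\psi(\theta,u_i)$ in place of $\tilde\psi(\theta,u_i)$ and $C_2(u_i)$ in place of $C_1(u_i)$, invoking Assumption~\ref{ass:smooth_nocovariates}.ii; continuous differentiability together with the Lipschitz and moment conditions also permits interchanging $\partial_\theta$ and $\mathbb{E}$, so $\mathbb{E}[\partial_\theta\tilde\psi(\theta,u_1)]=\partial_\theta\mathbb{E}[\hat\psi_n^s(\theta)]$.

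\textbf{Main obstacle.} Once Theorem~\ref{th:owen} is available there is no deep difficulty: the non-independence of the scrambled draws has already been absorbed into that theorem, so only a single integrand needs to be controlled at a time. The one point requiring care is that the empirical modulus of continuity be bounded uniformly in $n$; this is precisely what $\bar C_{1,n}=O_p(1)$ (respectively $\bar C_{2,n}=O_p(1)$) delivers, obtained by applying the identical-distribution/finite-variance property to the envelopes $C_1$ and $C_2$ --- which is why square-integrability of the Lipschitz constants, rather than mere integrability, is imposed.
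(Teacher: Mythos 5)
Your proposal is correct and follows essentially the same route as the paper's proof: part (iii) and the pointwise statements come straight from Theorem \ref{th:owen} applied coordinatewise, and the uniformity in (i)--(ii) is obtained via a finite cover of the compact $\Theta$ combined with the Lipschitz envelopes $C_1,C_2$, whose sample averages are controlled by applying Theorem \ref{th:owen} (or Markov) to the square-integrable Lipschitz constants, exactly as in the paper's Jennrich-style argument. Your added remark justifying the interchange of $\partial_\theta$ and $\mathbb{E}$ in part (ii) is a harmless extra precision that the paper leaves implicit.
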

  
    \begin{proof}[Proof of Lemma \ref{lem:ULLNnocovariates}] $\,$ \newline
      \noindent \textbf{Part i.} ULLN for $\hat \psi_n^S(\theta)$\\
      Assumption \ref{ass:smooth_nocovariates} implies $\|\hat \psi_n^S(\theta) - \mathbb{E}[\hat \psi_n^S(\theta)]\| = o_p(1)$ pointwise. Using the same steps as in \citet{Jennrich1969}, the Lipschitz condition implies that for a finite cover $(\theta_1,\dots,\theta_J)$ of $\Theta$:
      \begin{align*}
        \sup_{\theta \in \Theta} &\|\hat \psi_n^S(\theta)-\mathbb{E}(\hat \psi_n^S(\theta))\| \\&\leq \max_{ j  \in \{1,\dots,J\}} \|\hat \psi_n^S(\theta_j)-\mathbb{E}(\hat \psi_n^S(\theta_j))\| +   \sup_{\theta \in \Theta} \min_{ j  \in \{1,\dots,J\}} \Big\|[\hat \psi_n^S(\theta)-\hat \psi_n^S(\theta_j)]-[\mathbb{E}(\hat \psi_n^S(\theta))-\mathbb{E}(\hat \psi_n^S(\theta_j))]\Big\|.
      \end{align*}
      Using the Lipschitz condition for $\tilde \psi$, we have:
      \begin{align*}
        \sup_{\theta \in \Theta} \min_{j \in \{1,\dots J\} }  &\Big\|[\hat \psi_n^S(\theta)-\hat \psi_n^S(\theta_j)]-[\mathbb{E}(\hat \psi_n^S(\theta))-\mathbb{E}(\hat \psi_n^S(\theta_j))] \Big\| \\ &\leq \Big[\frac{1}{n} \sum_{i=1}^n C_1(u_i) + \mathbb{E}(C_1(u_i)) \Big] \times \sup_{\theta \in \Theta} \min_{j \in \{1,\dots,J\}} \|\theta-\theta_j\| 
      \end{align*}
      Since $C_1$ is square integrable, Theorem \ref{th:owen} applies to $C_1(u_i)$ so that $\Big[\frac{1}{n} \sum_{i=1}^n C_1(u_i) + \mathbb{E}(C_1(u_i)) \Big] = 2 \times \mathbb{E}[C_1(u_i)] + o_p(1)$. For $J \geq 1$ large enough and an appropriate cover, $\sup_{\theta \in \Theta} \min_{j \in \{1,\dots,J\}} \|\theta-\theta_j\| \leq \frac{\varepsilon}{4 \mathbb{E}[C_1(u_i)]}$. Similarly, for any given $J \geq 1$ fixed, $\max_{ j  \in \{1,\dots,J\}} \|\hat \psi_n^S(\theta_j)-\mathbb{E}(\hat \psi_n^S(\theta_j))\| \leq \varepsilon/2$ with probability going to $1$ as $n \to \infty$. Overall, this implies that:
      \[ \mathbb{P}(\sup_{\theta \in \Theta} \|\hat \psi_n^S(\theta)-\mathbb{E}(\hat \psi_n^S(\theta))\|>\varepsilon) \to 0, \]
      this provides a ULLN with scrambled draws. \newline

      \noindent \textbf{Part ii.} ULLN for $\hat \psi_n^S(\theta)$\\
      The ULLN can be directly applied to $\partial_\theta \hat \psi_n^s(\theta)$ under the stated assumptions. \newline

      \noindent \textbf{Part iii.} Convergence rate for $\hat \psi_n^S(\theta_0)-\mathbb{E}[\hat \psi_n^S(\theta_0)]$\\
      This is a direct application of Theorem \ref{th:owen} which concludes the proof.
    \end{proof}

    \begin{proof}[Proof of Proposition \ref{prop:static_nocovariates}]
      Combining Assumption \ref{ass:regular} with the ULLN in Lemma \ref{lem:ULLNnocovariates} imply that the consitency Theorem 2.1 in \citet{Newey1994a} applies; \textit{i.e.} $\hat \theta_n^S \overset{p}{\to} \theta_0$. Then, the ULLN for the Jacobian with a mean value expansion argument imply:
      \begin{align*}
        \sqrt{n} \left( \hat \theta_n^S -\theta_0 \right) &= - \left(  G^\prime W G \right)^{-1} G^\prime W \sqrt{n}\left[ \underbrace{\hat \psi_n - \mathbb{E}[\hat \psi_n^S(\theta_0)]}_{= O_p(n^{-1/2})} + \underbrace{\mathbb{E}[\hat \psi_n^S(\theta_0)] - \hat \psi_n^S(\theta_0)}_{=o_p(n^{-1/2})} \right] +o_p(1)\\
        &= - \left(  G^\prime W G \right)^{-1} G^\prime W \sqrt{n}\left[ \hat \psi_n - \mathbb{E}[\hat \psi_n^S(\theta_0)] \right] + o_p(1)\\
        &\overset{d}{\to} \mathcal{N}(0,\Sigma),
      \end{align*}
      where $\Sigma$ is defined in the Proposition. This concludes the proof.
    \end{proof}

    \subsubsection{Non-Smooth Moments with Covariates}

    \begin{lemma}[Stochastic Equicontinuity and CLT with Covariates] \label{lem:se_clt_covariates}
      Suppose that Assumptions \ref{ass:regular} and \ref{ass:smooth} hold and $S=1$, then:
      \begin{itemize} 
        \item[i.] $\sqrt{n}\left[ \hat \psi_n - \hat \psi_n^S(\theta_0) \right] \overset{d}{\to} \mathcal{N}(0,\tilde V)$ where $\tilde V = \mathbb{E} \left[ \text{var} \left( \psi(y_i,x_i) - \tilde \psi(x_i,u_i;\theta) | u_i \right) \right]$
        \item[ii.] $\sup_{\|\theta_1-\theta_2\| \leq \delta_n} \sqrt{n}\| [\hat \psi_n^S(\theta_1)-\hat \psi_n^S(\theta_2)] - \mathbb{E}[\hat \psi_n^S(\theta_1)-\hat \psi_n^S(\theta_2)|u_{1},\dots,u_n] \| = o_p(1)$, $\forall \delta_n \searrow 0$
        \item[iii.]   $\sup_{\|\theta_1-\theta_2\| \leq \delta_n} \| \mathbb{E}[\hat \psi_n^S(\theta_1)-\hat \psi_n^S(\theta_2)|u_{1},\dots,u_n] - \partial_\theta \mathbb{E}[\hat \psi_n^S(\theta_2)](\theta_1-\theta_2) \| \leq O_p(\delta_n^2)$, $\forall \delta_n \searrow 0$
      \end{itemize}
    \end{lemma}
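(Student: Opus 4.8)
The plan is to exploit the decomposition sketched in Section~\ref{sec:Static}: condition everything on the scrambled sequence $\mathcal U = (u_1,u_2,\dots)$. Since the covariates $(x_1,\dots,x_n)$ form an iid sample that is independent of $\mathcal U$ (and of the data shocks generating $y_i$), conditionally on $\mathcal U$ the arrays $\psi(y_i,x_i)$ and $\tilde\psi(x_i,u_i;\theta)$ become \emph{independent but non-identically distributed} — the law of the $i$-th term depends only on the frozen value $u_i$. Conditional CLTs and conditional empirical-process bounds then apply off the shelf; the remaining work is (a) to identify the non-random conditional limits using Theorem~\ref{th:owen}, and (b) to pass from conditional to unconditional statements. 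Throughout, the $\mathcal U$-averages that appear — $\tfrac1n\sum_i m^0(u_i)$, $\tfrac1n\sum_i\text{var}(d_i\mid u_i)$, $\tfrac1n\sum_i\tilde C(u_i)^2$, and so on — are covered by Theorem~\ref{th:owen} precisely because the integrability conditions in Assumption~\ref{ass:smooth} are tailored to that end.

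For part~i, I would set $d_i=\psi(y_i,x_i)-\tilde\psi(x_i,u_i;\theta_0)$, $m^0(u)=\mathbb E_x[\tilde\psi(x,u;\theta_0)]$ and $\mu_0=\mathbb E[\psi(y_i,x_i)]=\mathbb E[\tilde\psi(x_i,u_i;\theta_0)]$ (equal by Assumption~\ref{ass:regular}(i)), and write
\[ \sqrt n\big[\hat\psi_n-\hat\psi_n^S(\theta_0)\big] = \frac1{\sqrt n}\sum_{i=1}^n\big(d_i-\mathbb E[d_i\mid\mathcal U]\big)\;-\;\sqrt n\Big(\tfrac1n\textstyle\sum_{i=1}^n m^0(u_i)-\mu_0\Big), \]
using $\mathbb E[d_i\mid\mathcal U]=\mu_0-m^0(u_i)$. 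The second term is $o_p(1)$ by Theorem~\ref{th:owen} (square-integrability of $m^0$ follows from Assumption~\ref{ass:smooth}(ii)). The first term is, conditionally on $\mathcal U$, a sum of independent mean-zero variables with conditional variance $\tfrac1n\sum_i\text{var}(d_i\mid u_i)$, which converges in probability to $\tilde V$ by Theorem~\ref{th:owen} applied to $u\mapsto\text{var}(\psi(y_i,x_i)-\tilde\psi(x_i,u;\theta_0)\mid u_i=u)$ (square-integrable by Assumption~\ref{ass:smooth}(i)); the conditional Lindeberg condition follows by truncation from the $(2+\delta)$-moment bounds in Assumption~\ref{ass:smooth}(i)--(ii). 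Hence the conditional characteristic function of the first term converges in probability to $\exp(-\tfrac12 t'\tilde V t)$, and bounded convergence delivers $\sqrt n[\hat\psi_n-\hat\psi_n^S(\theta_0)]\overset d\to\mathcal N(0,\tilde V)$.

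For part~ii, with $\xi_i(\theta)=\tilde\psi(x_i,u_i;\theta)-\mathbb E[\tilde\psi(x_i,u_i;\theta)\mid u_i]$ the target equals $\sup_{\|\theta_1-\theta_2\|\le\delta_n}\|\tfrac1{\sqrt n}\sum_i[\xi_i(\theta_1)-\xi_i(\theta_2)]\|$. Conditionally on $\mathcal U$, $\{\xi_i(\cdot)\}$ is an independent, mean-zero, heterogeneous array with Lipschitz increments in $\theta$ of ($\mathcal U$-conditional $L^2$) modulus $\tilde C(u_i)$ (Assumption~\ref{ass:smooth}(iii)), so $\{\xi_i(\theta):\theta\in\Theta\}$ is a Euclidean class with square-integrable envelope increments; a Jain--Marcus / maximal-inequality argument for such arrays (as in \citet{VanderVaart1996}) gives conditional asymptotic equicontinuity of $\theta\mapsto\tfrac1{\sqrt n}\sum_i\xi_i(\theta)$, with the bound controlled uniformly in $\mathcal U$ once $\tfrac1n\sum_i\tilde C(u_i)^2=O_p(1)$ — which holds by Theorem~\ref{th:owen} and $\mathbb E[\tilde C(u_i)^4]<\infty$. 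The supremum over $\|\theta_1-\theta_2\|\le\delta_n$ is then $o_p(1)$ for any $\delta_n\searrow 0$. For part~iii, I would write $m_i(\theta)=\mathbb E[\tilde\psi(x_i,u_i;\theta)\mid u_i]$, so that the conditional mean increment is $\tfrac1n\sum_i[m_i(\theta_1)-m_i(\theta_2)]$ and $\partial_\theta\mathbb E[\hat\psi_n^S(\theta_2)]=\mathbb E[\partial_\theta m_i(\theta_2)]$; a first-order expansion splits the target into $\tfrac1n\sum_i R_i$ with $\|R_i\|\le\tilde C_2(u_i)\|\theta_1-\theta_2\|^2$ (Assumption~\ref{ass:smooth}(iv)), which is $O_p(\delta_n^2)$ since $\tfrac1n\sum_i\tilde C_2(u_i)=O_p(1)$, plus the lower-order term $(\tfrac1n\sum_i\partial_\theta m_i(\theta_2)-\mathbb E[\partial_\theta m_i(\theta_2)])(\theta_1-\theta_2)=o_p(\delta_n)$, handled by a Jennrich-style uniform law of large numbers for $\partial_\theta m_i$ exactly as in Lemma~\ref{lem:ULLNnocovariates}, using the Lipschitz bound on $\partial_\theta m_i$ in Assumption~\ref{ass:smooth}(iv).

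The main obstacle I anticipate is the conditional-to-unconditional passage. Because Theorem~\ref{th:owen} yields only convergence \emph{in probability} of the $\mathcal U$-averages, I cannot run the conditional CLT/equicontinuity arguments ``along almost every realization of $\mathcal U$''; instead I would phrase the conclusions as convergence in probability of the conditional characteristic functions (resp.\ equicontinuity moduli) towards fixed non-random limits and invoke bounded convergence, or equivalently pass to subsequences along which the relevant $\mathcal U$-averages converge almost surely. A secondary technical point is making the conditional Lindeberg condition hold uniformly enough in $\mathcal U$; this is where the $(2+\delta)$ integrability in Assumption~\ref{ass:smooth}(i)--(ii) does the work, since the truncated second moments are dominated by $(\varepsilon\sqrt n)^{-\delta}\tfrac1n\sum_i\mathbb E[\|d_i-\mathbb E d_i\|^{2+\delta}\mid u_i]$, an $\mathcal U$-average that is $O_p(1)$ by Theorem~\ref{th:owen}.
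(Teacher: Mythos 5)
Your proposal follows essentially the same route as the paper's proof: for part~i the same split into a conditionally independent, non-identically distributed array (handled by a Lindeberg/Lyapunov CLT given the frozen scrambled draws) plus a scrambled average of conditional means killed by Theorem~\ref{th:owen}; for part~ii the same Jain--Marcus-type equicontinuity argument for heterogeneous arrays using the conditional $L^2$-Lipschitz constant $\tilde C(u_i)$ with $\tfrac1n\sum_i \tilde C(u_i)^2$ controlled by Theorem~\ref{th:owen}; and for part~iii the same Taylor expansion of the conditional mean with the remainder bounded via $\tilde C_2$ and the conditional Jacobian replaced by the unconditional one through a Jennrich-style ULLN. Your treatment of the conditional-to-unconditional passage (convergence in probability of conditional characteristic functions plus bounded convergence or subsequence arguments) is in fact more explicit than the paper's, but it is the same argument in substance, so no further comparison is needed.
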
 
  
    \begin{proof}[Proof of Lemma \ref{lem:se_clt_covariates}] $\,$ \newline
      \noindent \textbf{Part i.} CLT for $\hat \psi_n - \hat \psi_n^S(\theta_0)$\\
      Similarly to \citet{Okten2006}, the main idea is to verify the conditions for an independent non-identically distributed CLT hold holding the qMC draws $u_1,\dots,u_{n}$ fixed. Note that:
      \[ \hat \psi_n - \hat \psi_n^S(\theta_0) = \underbrace{\hat \psi_n - \mathbb{E}[\hat \psi_n^S(\theta_0)|u_1,\dots,u_{n}]}_{\text{independent non-identically distributed}} + \underbrace{\hat \psi_n^S(\theta_0) - \mathbb{E}[\hat \psi_n^S(\theta_0)|u_1,\dots,u_n]}_{\text{scrambled sequence}}. \]\
      For the second term, Theorem \ref{th:owen} can be applied given that $\mathbb{E}[\hat \psi_i^S(\theta_0)|u_i]$ has finite variance. For the first term,
      Assumption \ref{ass:smooth} i.  implies a Lyapunov condition holds. As a result, the CLT for independent non-identically distributed arrays can be applied \citep[][Theorem 5.10]{White1984}.    
      Note that similar arguments implies that for each $\theta \in \Theta$, $(\hat \psi_n^S(\theta) - \mathbb{E}[\hat \psi_n^S(\theta)])= O_p(n^{-1/2})$, \textit{i.e.} pointwise convergence holds.\\

      \noindent \textbf{Part ii.} Stochastic Equicontinuity Result for $\hat \psi_n^S(\theta) - \mathbb{E}[\hat \psi_n^S(\theta)|u_1,\dots,u_n]$\\
      As in Part i., Assumption \ref{ass:smooth} i.  implies a Lyapunov condition holds for the envelope $\bar \psi$. This implies a Lindeberg condition for the envelope holds. Further, Assumption \ref{ass:smooth} iii. implies that:
      \begin{align*}
        \sup_{\|\theta_1-\theta_2\| \leq \delta_n} &\frac{1}{n} \sum_{i=1}^n \mathbb{E} \left[ \|[\tilde \psi(x_i,u_i;\theta_1) -\tilde \psi(x_i,u_i;\theta_2)]-\mathbb{E}[\tilde \psi(x_i,u_i;\theta_1) -\tilde \psi(x_i,u_i;\theta_2)|u_i] \|^2 | u_i \right]\\ &\leq  \left( \frac{1}{n} \sum_{i=1}^n \tilde C_1(u_i)^2 + \mathbb{E}[\tilde C_1(u_i)^2] \right) \times \delta_n^2\\ 
        &= \left( 2\mathbb{E}[\tilde C_1(u_i)^2] + o_p(1) \right) \times \delta_n^2,
      \end{align*}
      which goes to $0$ for all sequences $\delta_n \to 0$. The last equality comes from applying Theorem \ref{th:owen} to $\tilde C(u_i)^2$ which has finite variance by assumption. $\Theta$ is a compact and convex subset of $\mathbb{R}^{d_\theta}$ which is finite dimensional. Given the Lindeberg condition, pointwise convergence in Part i. and the $L^2$-smoothness result above holds, the Jain-Markus Theorem can be applied\footnote{See Example 2.11.13 and Theorem 2.11.9 in \citet{VanderVaart1996}.} which implies the desired stochastic equicontinuity result.\\

      \noindent \textbf{Part iii.} Taylor Expansion of $\mathbb{E}[\hat \psi_n^S(\theta)|u_1,\dots,u_n]$\\
      For all $\theta_1,\theta_2$, Assumption \ref{ass:smooth} iv. implies:
      \begin{align*}
        \|\frac{1}{n} &\sum_{i=1}^n \{ \mathbb{E}[\tilde \psi(x_i,u_i;\theta_1)-\tilde \psi(x_i,u_i;\theta_2)|u_i] - \partial_\theta \mathbb{E} \left[ \tilde \psi(x_i,u_i;\theta_2) | u_i \right](\theta_1-\theta_2) \} \|\\
        &\leq \frac{1}{n} \sum_{i=1}^n \tilde C_3(u_i) \times \| \theta_1-\theta_2 \|^2\\
        &= (\mathbb{E}[\tilde C_3(u_i)]+o_p(1))  \times \| \theta_1-\theta_2 \|^2,
     \end{align*}
     which implies the desired result. The last equality follows from Theorem \ref{th:owen} applied to $\tilde C_3(u_i)$ which has finite variance. Also note, that the conditions imply that the ULLN of Lemma \ref{lem:ULLNnocovariates} applies to $\partial_\theta \mathbb{E}[\hat \psi_n^S(\theta)|u_1,\dots,u_n]$ so that $\partial_\theta \mathbb{E}[\hat \psi_n^S(\theta)|u_1,\dots,u_n] = \partial_\theta \mathbb{E}[\hat \psi_n^S(\theta)] + o_p(1)$ uniformly in $\theta \in \Theta$. This concludes the proof.
    \end{proof}

    \begin{proof}[Proof of Proposition \ref{prop:static_covariates}] 
      By Lemma \ref{lem:se_clt_covariates}, $\hat \psi_n - \hat \psi_n^S(\theta)$ is stochastically equicontinuous which, together with Assumption \ref{ass:regular}, implies that $\hat \theta_n^S \overset{p}{\to} \theta_0$ by Theorem 2.1 in \citet{Newey1994a}. Then, using Lemma \ref{lem:se_clt_covariates} and standard arguments, we have:
      \begin{align*}
        0 &= G^\prime W \mathbb{E}\left[ \hat \psi_n - \hat \psi_n^S(\theta_0) \right]\\
        &= G^\prime W \left( \mathbb{E}\left[ \hat \psi_n - \hat \psi_n^S(\theta_0) \right] - \left[ \hat \psi_n - \hat \psi_n^S(\theta_0) \right] + \left[ \hat \psi_n - \hat \psi_n^S(\theta_0) \right] \right)\\
        &= G^\prime W \left( \mathbb{E}\left[ \hat \psi_n - \hat \psi_n^S(\hat \theta_n^S) |u_1,\dots,u_n \right] - \left[ \hat \psi_n - \hat \psi_n^S(\hat \theta_n^S) \right] + \left[ \hat \psi_n - \hat \psi_n^S(\theta_0) \right] \right) + o_p(n^{-1/2})\\
        &= G^\prime W \left( \mathbb{E}\left[ \hat \psi_n - \hat \psi_n^S(\hat \theta_n^S) |u_1,\dots,u_n \right]  + \left[ \hat \psi_n - \hat \psi_n^S(\theta_0) \right] \right) + o_p(n^{-1/2})\\
        &= G^\prime W \left( \mathbb{E}\left[ \hat \psi_n^S(\theta_0) - \hat \psi_n^S(\hat \theta_n^S) |u_1,\dots,u_n \right]  + \left[ \hat \psi_n - \hat \psi_n^S(\theta_0) \right] \right) + o_p(n^{-1/2}).
      \end{align*}
      The stochastic equicontinuity result can then be applied:
      \begin{align*}
        & \mathbb{E}\left[ \hat \psi_n - \hat \psi_n^S(\theta_0) |u_1,\dots,u_n \right] - \left[ \hat \psi_n - \hat \psi_n^S(\theta_0) \right]\\ &= \mathbb{E}\left[ \hat \psi_n - \hat \psi_n^S(\hat \theta_n^S) |u_1,\dots,u_n  \right] - \left[ \hat \psi_n - \hat \psi_n^S(\hat \theta_n^S) \right] +o_p(n^{-1/2}).
      \end{align*}
      Then, by Theorem \ref{th:owen}, $\|\mathbb{E}\left[ \hat \psi_n - \hat \psi_n^S(\theta_0) |u_1,\dots,u_n \right] - \mathbb{E}\left[ \hat \psi_n - \hat \psi_n^S(\theta_0)\right]\| = o_p(n^{-1/2})$ which allows to substitute $ \mathbb{E}\left[ \hat \psi_n - \hat \psi_n^S(\theta_0) \right]$ with the desired quantity.
      Using the CLT and stochastic equicontinuity result in Lemma \ref{lem:se_clt_covariates}:
      \begin{align*}
        \sqrt{n} \left(  \hat \theta_n^S -\theta \right) &= - \left( G^\prime W G \right)^{-1} G^\prime W \sqrt{n}[\hat \psi_n - \hat \psi_n^S(\theta_0)] + o_p(1)\\
        &\overset{d}{\to} \mathcal{N}(0,\Sigma),
      \end{align*}
      where
      \[ \Sigma = \left( G^\prime W G \right)^{-1} G^\prime W \tilde V W G \left( G^\prime W G \right)^{-1},\]
      $G = \partial_\theta \mathbb{E} [\hat \psi_n^S(\theta_0)]$, $\tilde V  = \mathbb{E} \left[ \text{var} \left( \psi(y_i,x_i) - \tilde \psi(x_i,u_i;\theta_0)|u_i  \right) \right]$.

      The results above are given for $S=1$. For $S>1$ fixed and finite, the simulated moments $\hat \psi_n^s$ are \textit{iid} over $s =1,\dots,S$. This implies that the CLT and stochastic equicontinuity results can be applied to each $s \in \{1,\dots,S\}$ and also apply to their average $\hat\psi_n^S$ by independence with $S$ fixed and finite. The remainder of the proof is identical which concludes the proof.
    \end{proof}

    \subsubsection{Scrambled Indirect Inference}

    \begin{proof}[Proof of Proposition \ref{prop:ind}] Assumption \ref{ass:ind} ii. implies a ULLN for $M_n^S(\theta;\psi)$ in $\psi$ for all $\theta \in \Theta$, by Lemma \ref{lem:ULLNnocovariates}. Then Assumption \ref{ass:ind} i. implies that Theorem 2.1 in \citet{Newey1994a} applies for each $\theta \in \Theta$ to $\hat \psi_n^S$ so that $\hat \psi_n^S(\theta)-\psi_\infty(\theta) = o_p(1)$ pointwise in $\theta \in \Theta$.

    Now, to prove that $\hat \theta_n^S$ itself is consistent, a ULLN for $\hat \psi_n^S$ in $\theta$ is needed. Given the pointwise consistency above, it remains to show that $\hat \psi_n^S$ is Lispchitz-continuous with stochastically bounded Lipschitz constant. For all $\theta_1,\theta_2 \in \Theta$, the mean-value theorem and the triangular inequality imply:
    \[ \|\hat \psi_n^S(\theta_1)-\hat \psi_n^S(\theta_2)\| \leq \|\partial_\theta \hat \psi_n^S(\tilde \theta)\| \times \|\theta_1-\theta_2 \|, \]
    where $\tilde \theta$ is some intermediate value. The implicit function theorem provides a closed-form for $\partial_\theta \hat \psi_n^S$ evaluated at any $\theta \in \Theta$:
      \[ \partial_\theta \hat \psi_n^S(\theta) = -\left[ \partial^2_{\psi,\psi^\prime} M_n^S(\theta;\hat \psi_n^S(\theta)) \right]^{-1} \partial^2_{\psi,\theta^\prime} M_n^S(\theta;\hat \psi_n^S(\theta)). \]
      Both $\partial^2_{\psi,\psi^\prime} M_n^S(\theta;\psi)$ and $\partial^2_{\psi,\theta^\prime} M_n^S(\theta;\psi)$ satisfy a ULLN in $(\theta,\psi)$ by Assumption \ref{ass:ind} and Lemma \ref{lem:ULLNnocovariates}. The Continuous Mapping Theorem then implies that $\partial_\theta \hat \psi_n^S(\theta) \overset{p}{\to} \mathbb{E}(\partial_\theta \hat \psi_n^S(\theta))$ pointwise in $\theta \in \Theta$.  Furthermore, Assumption \ref{ass:ind} iv. implies that:
      \[ \Big\|\left( \partial^2_{\psi,\psi^\prime} \mathbb{E}[M_n^S(\theta;\psi)] \right)^{-1} \partial^2_{\psi,\theta^\prime} \mathbb{E}[M_n^S(\theta;\psi)] \Big\| \leq \bar{M} < \infty,\]
      uniformly in $(\theta,\psi)$ for some finite bound $\bar{M} \geq 0$. Putting everything together, we have:
      \begin{align*}
        \|\hat \psi_n^S(\theta_1)-\hat \psi_n^S(\theta_2)\| &\leq \|\partial_\theta \hat \psi_n^S(\tilde \theta)\| \times \|\theta_1-\theta_2 \|\\
        &= \| \mathbb{E} [\partial_\theta \hat \psi_n^S(\tilde \theta)] + o_p(1)\| \times \|\theta_1-\theta_2 \|\\
        &\leq [\bar{M} + o_p(1)] \times \|\theta_1-\theta_2 \|
      \end{align*}
      This implies, as in \citet{Jennrich1969} and Proposition \ref{prop:static_nocovariates}, a ULLN for $\hat \psi_n^S$ over $\theta \in \Theta$.
  
      To establish the asymptotic normality for $\hat \theta_n^S$, first note that the ULLN for $\hat \psi_n^S$, $\partial^2_{\psi,\psi^\prime} M_n^S$ and $\partial^2_{\psi,\theta^\prime} M_n^S$ together with the implicit function theorem and the Lipschitz conditions imply a ULLN for $\partial_\theta \hat \psi_n^S$ in $\theta$.\footnote{The proof is omitted for brevity but is similar to the previous ULLNs.} By the usual mean-value expansion argument, this implies that:
      \[ \sqrt{n}[\hat \theta_n^S - \theta_0] = - \sqrt{n}\left[ \partial_\theta \psi_\infty (\theta_0)^\prime W\partial_\theta \psi_\infty (\theta_0)  + o_p(1)  \right]^{-1}  \partial_\theta \psi_\infty (\theta_0)^\prime W[ \hat \psi_n - \hat \psi_n^S(\theta_0)] + o_p(1). \]
      To conclude the proof, we need to show that $\sqrt{n}[\hat \psi_n^S(\theta_0)-\psi_\infty(\theta_0)] = o_p(1)$. Since $\hat \psi_n^S(\theta_0)$ is an M-estimator with the appropriate regularity conditions, the following holds:\footnote{The proof is very similar to Proposition \ref{prop:static_nocovariates}.}
      \[ \sqrt{n}[\hat \psi_n^S(\theta_0)-\psi_\infty(\theta_0)] = - \left[ \partial^2_{\psi,\psi^\prime} \mathbb{E}[M_n^S(\theta_0;\psi_\infty(\theta_0))] + o_p(1) \right]^{-1} \partial_\psi M_n^S(\theta_0;\psi_\infty(\theta_0)).  \]
      Since $\psi_\infty(\theta_0)$ is the population minimizer of $\mathbb{E}[ M_n^S(\theta_0;\cdot)]$, we have $\partial_\psi \mathbb{E}[M_n^S(\theta_0;\psi_\infty(\theta_0))]=0$. Applying Theorem \ref{th:owen} with Assumption \ref{ass:ind} iii. implies $\partial_\psi M_n^S(\theta_0;\psi_\infty(\theta_0)) = o_p(n^{-1/2})$ which, in turn, implies the desired result and concludes the proof.
      \end{proof}

      \subsection{Dynamic Models}

      \subsubsection{qMC-only Estimator}
    \begin{proof}[Proof of Proposition \ref{prop:dyn_case1}] Given the construction in Algorithm \ref{algo:ScrMMdyn1} and the assumptions the results of Lemma \ref{lem:ULLNnocovariates} hold and the proof proceeds as in Proposition \ref{prop:static_nocovariates}. This concludes the proof.
    \end{proof}

    \subsubsection{Hybrid MC-qMC Estimator}

    \begin{lemma}[Uniform Law of Large Numbers and CLT - MC-qMC] \label{lem:ULLN_mcQmc}
      Suppose that the Assumptions \ref{ass:regular}, \ref{ass:dyn_case2} hold then:
      \begin{itemize}
        \item[i.] $\sup_{\theta \in \Theta} \|\hat \psi_T^S(\theta) - \mathbb{E}[\hat \psi_T^S(\theta)]\| = o_p(1)$,
        \item[ii.] $\sup_{\theta \in \Theta} \| \partial_\theta \hat \psi_T^S(\theta) - \partial_\theta \mathbb{E}[\hat \psi_T^S(\theta)]\| = o_p(1)$,
        \item[iii.] $\sqrt{TS} \left( \hat \psi_T^S(\theta_0)-\mathbb{E}[\hat \psi_T^S(\theta_0)]  \right) \overset{d}{\to} \mathcal{N}(0,V)$ where $V = \lim_{T \to \infty} T \times var[ \hat \psi_T^S(\theta_0)|u_1,\dots,u_{TS}]$. 
      \end{itemize}
    \end{lemma}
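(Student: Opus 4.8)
\textbf{Proof plan for Lemma \ref{lem:ULLN_mcQmc}.}
The strategy mirrors the proof of Lemma \ref{lem:se_clt_covariates}: condition on the scrambled innovations $(u_t)_{t=1}^{TS}$, $u_t=(u_t^2,\dots,u_t^L)$, and split, for each $\theta\in\Theta$,
\[
\hat\psi_T^S(\theta)=\Big\{\hat\psi_T^S(\theta)-\mathbb{E}[\hat\psi_T^S(\theta)\mid u_1,\dots,u_{TS}]\Big\}+\mathbb{E}[\hat\psi_T^S(\theta)\mid u_1,\dots,u_{TS}].
\]
Because $(y_t^1,z_t^1)$ is generated from the independent MC draws $u_t^1$ with a fixed start, it has the non-stationary law $f_t(\cdot;\theta)$ and is independent of the $u_t$, so the conditional expectation equals $\frac1{TS}\sum_{t=1}^{TS}\bar\psi_t(u_t;\theta)$ with $\bar\psi_t(u;\theta):=\int\tilde\psi(y,z,u;\theta)\,f_t(dy,dz;\theta)$. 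Writing $\bar\psi_\infty(u;\theta)$ for the same integral against the ergodic law $f_\infty$, decompose the scrambled part as $\frac1{TS}\sum_t\bar\psi_\infty(u_t;\theta)+\frac1{TS}\sum_t[\bar\psi_t(u_t;\theta)-\bar\psi_\infty(u_t;\theta)]$. Assumption \ref{ass:dyn_case2}.i gives $\|f_t-f_\infty\|_{TV}\le C_1\rho^t$, and a truncation argument using the bounded conditional fourth moments in \ref{ass:dyn_case2}.ii turns this into a uniform (in $u,\theta,t$) bound $\|\bar\psi_t(u;\theta)-\bar\psi_\infty(u;\theta)\|\le C\rho^{ct}$ for some $c>0$; hence $\sum_{t\ge1}\sup_\theta\|\bar\psi_t(u_t;\theta)-\bar\psi_\infty(u_t;\theta)\|=O_p(1)$, so this remainder and its mean are $O_p((TS)^{-1})$ — negligible for the ULLNs and $o_p((TS)^{-1/2})$ for the CLT. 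The leading term $\frac1{TS}\sum_t\bar\psi_\infty(u_t;\theta)$ is an ordinary scrambled average of a fixed function: Theorem \ref{th:owen} gives $\frac1{TS}\sum_t\bar\psi_\infty(u_t;\theta)-\mathbb{E}[\bar\psi_\infty(U;\theta)]=o_p((TS)^{-1/2})$ pointwise in $\theta$, and applying Theorem \ref{th:owen} to the Lipschitz envelopes from \ref{ass:dyn_case2}.iii (and \ref{ass:dyn_case2}.v for the derivative) upgrades this to a uniform statement by the finite-cover argument of \citet{Jennrich1969}, exactly as in Lemma \ref{lem:ULLNnocovariates}.

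It remains to handle the MC part. Conditionally on $(u_t)_{t=1}^{TS}$, set $\xi_t(\theta):=\tilde\psi(y_t^1,z_t^1,u_t;\theta)-\bar\psi_t(u_t;\theta)$, a conditionally mean-zero array in which each $\xi_t(\theta)$ is a measurable function of the Markov state $(y_t^1,z_t^1)$. Since the chain started at a fixed point is geometrically ergodic by \ref{ass:dyn_case2}.i, it is $\alpha$-mixing at a geometric rate, and this rate is inherited by $\{\xi_t(\theta)\}$ after conditioning; the bounded conditional fourth moments in \ref{ass:dyn_case2}.ii--v (for $\tilde\psi$, $\partial_\theta\tilde\psi$ and the Lipschitz envelopes $C_2,C_3$) supply the Lyapunov moments uniformly in the realization of the $u$'s. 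A law of large numbers for $\alpha$-mixing heterogeneous arrays then gives $\frac1{TS}\sum_t\xi_t(\theta)\pconv0$ pointwise, and the Lipschitz bounds plus the finite-cover argument make this uniform in $\theta$; together with the scrambled-part analysis this proves parts i and ii (part ii replacing $\tilde\psi$ by $\partial_\theta\tilde\psi$ and using \ref{ass:dyn_case2}.iv--v). For part iii, a central limit theorem for $\alpha$-mixing heterogeneous arrays (as in \citet{White1984}), applied conditionally on $(u_t)_{t=1}^{TS}$, gives that $\frac1{\sqrt{TS}}\sum_t\xi_t(\theta_0)$ converges in distribution, conditionally on $u_1,\dots,u_{TS}$, to $\mathcal{N}(0,V)$ with $V=\lim_{T\to\infty}T\times\mathrm{var}(\hat\psi_T^S(\theta_0)\mid u_1,\dots,u_{TS})$ positive definite and finite by \ref{ass:dyn_case2}.vi; because this limit is nonrandom, the conditional convergence implies the unconditional one. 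Adding back the scrambled part, which is $o_p((TS)^{-1/2})$ by Theorem \ref{th:owen}, yields $\sqrt{TS}\,(\hat\psi_T^S(\theta_0)-\mathbb{E}[\hat\psi_T^S(\theta_0)])\dconv\mathcal{N}(0,V)$.

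The main obstacle is the MC part: one must argue that geometric ergodicity of the chain with a fixed initial value transfers to a geometric $\alpha$-mixing rate that survives conditioning on the scrambled sequence, verify the Lyapunov condition uniformly over realizations of the $u$'s (this is precisely where the $u$-almost-sure fourth-moment bounds of Assumption \ref{ass:dyn_case2} are needed), and pass from the conditional CLT/LLN to their unconditional counterparts, which requires the limiting conditional variance in \ref{ass:dyn_case2}.vi to be nonrandom. Handling the $t$-heterogeneity of $f_t$ simultaneously with the non-independence of the scrambled draws is what forces the two-layer decomposition above.
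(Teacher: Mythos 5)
Your proposal is correct and follows essentially the same route as the paper: the same two-layer decomposition into the conditionally-centered MC part (handled as a geometrically $\alpha$-mixing heterogeneous array via White's LLN/CLT, where the paper uses Davydov's inequality for the pointwise step) and the conditional-expectation part, which is reduced by geometric ergodicity to a stationary scrambled average plus a geometrically decaying remainder and then treated with Theorem \ref{th:owen} and a Jennrich-style finite cover. The only cosmetic difference is your truncation bound $C\rho^{ct}$ versus the paper's Cauchy--Schwarz bound of order $\rho^{t/2}$ on the total-variation discrepancy, which play identical roles.
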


    \begin{proof}[Proof of Lemma \ref{lem:ULLN_mcQmc}] $\,$ \newline
      \noindent \textbf{Part i.} ULLN for $\hat \psi_T^S(\theta)$\\
      The main steps are similar to Lemma \ref{lem:ULLNnocovariates} using pointwise convergence and Lipschitz continuity arguments.       
      The main difficulty is the presence of the Monte-Carlo terms $y_t^1,z_t^1$ which are dependent and non-stationary. To handle these, as in the proof of Lemma \ref{lem:se_clt_covariates},  separate $\hat \psi_T^S - \mathbb{E}[\hat \psi_T^S]$ into two components $(\hat \psi_T^S - \mathbb{E}[\hat \psi_T^S | u_1,\dots,u_{TS}])$ and $(\mathbb{E}[\hat \psi_T^S | u_1,\dots,u_{TS}] - \mathbb{E}[\hat \psi_T^S])$ to study the two individually.
      For the first term, \citet{Davydov1968}'s inequality implies pointwise convergence under mixing and moment conditions. For the second term, the non-stationarity implies that Theorem \ref{th:owen} does not apply directly. The geometric ergodicity conditions will allow to return to a setting where Theorem \ref{th:owen} applies. 

      As discussed above, for any $\theta \in \Theta$:
      \begin{align*}
        \hat \psi_T^S(\theta) - \mathbb{E}[\hat \psi_T^S(\theta)] &= \underbrace{\hat \psi_T^S(\theta) - \mathbb{E}[\hat \psi_T^S(\theta)|u_1,\dots,u_{TS}]}_{\text{heterogeneous dependent vector}} + \underbrace{\mathbb{E}[\hat \psi_T^S(\theta)|u_1,\dots,u_{TS}] - \mathbb{E}[\hat \psi_T^S(\theta)]}_{\text{non-stationary qMC sequence}}.
      \end{align*}
      For the first term, Davydov's inequality implies, up to a universal constant:
      \begin{align*}
        &\mathbb{E}[\| \hat \psi_T^S(\theta) - \mathbb{E}[\hat \psi_T^S(\theta)|u_1,\dots,u_{TS}] \|^2 | u_1,\dots,u_{TS}]\\ 
        &\leq \frac{1}{[TS]^2} \sum_{t=1}^{TS} \mathbb{E}[ \|\tilde \psi( y_t^1, z_t^1,u_t;\theta)-\mathbb{E}[\tilde \psi( y_t^1, z_t^1,u_t;\theta)|u_t]\|^2 |u_t]\\ &+ \frac{1}{[TS]^2} \sum_{t \neq t^\prime} \alpha(|t-t^\prime|)^{1/2} \times \mathbb{E}[ \| \tilde \psi( y_t^1, z_t^1,u_t;\theta)-\mathbb{E}[\tilde \psi( y_t^1, z_t^1,u_t;\theta)|u_t]\|^{4}|u_t ]^{1/4} \\ &\quad\quad\quad\quad\quad\quad\quad\quad\quad\quad\quad\times \mathbb{E}[ \| \tilde \psi(\tilde y_{t^\prime},\tilde z_{t^\prime},u_{t^\prime};\theta)-\mathbb{E}[\tilde \psi( y_{t^\prime}^1, z_{t^\prime}^1,u_{t^\prime};\theta)|u_{t^\prime}]\|^{4}|u_{t^\prime} ]^{1/4}.
      \end{align*}
      Note that $\mathbb{E}[ \|\tilde \psi( y_t^1, z_t^1,u_t;\theta)-\mathbb{E}[\tilde \psi( y_t^1, z_t^1,u_t;\theta)|u_t]\|^2 |u_t]$ is not stationary, so that Theorem \ref{th:owen} does not apply directly. However, by geometric ergodicity we have for any function $g$ with bounded fourth moment:
      \begin{align*}
        \| &\mathbb{E}[ g( y^1_t, z^1_t,u_t;\theta)-g( y^\infty_t, z^\infty_t,u_t;\theta) |u_t]\|\\ &= \|\int g( y^1, z^1,u_t;\theta)[f_t( y^1, z^1)-f_{\infty}( y^1, z^1)] d y^1 d z^1  \| \\
        &\leq \int \|g( y^1, z^1,u_t;\theta)\| \times |f_t( y^1, z^1)-f_{\infty}( y^1, z^1)| dy^1 d z^1 \\
        &\leq \left( \int \|g( y^1, z^1,u_t;\theta)\|^2 \times |f_t( y^1, z^1)-f_{\infty}( y^1, z^1)| d y^1 d z^1 \right)^{1/2} \left( \int |f_t( y^1, z^1)-f_{\infty}( y^1, z^1)| d y^1 d z^1 \right)^{1/2}\\
        &\leq \sqrt{2} \times \bar K_g \times \|f_t-f_\infty\|_{TV}^{1/2}\\
        &\leq \sqrt{2 C_1} \times \bar K_g \times \rho^{t/2},
      \end{align*}
      where $\bar K_g \geq 0$ is a bound for the moment conditional on $u_t$ fixed. This bound is finite by Assumption \ref{ass:dyn_case2} iii. and v. for $\tilde \psi$ and $\partial_\theta \tilde \psi$, respectively.
      Under the geometric ergodicity assumption, $\rho \in [0,1)$ so that $\sum_{t \geq 0} \rho^{t/2} <+\infty$ and:
      \begin{align*}
        &\frac{1}{[TS]^2} \sum_{t=1}^{TS} \mathbb{E}[ \|\tilde \psi( y_t^1, z_t^1,u_t;\theta)-\mathbb{E}[\tilde \psi( y_t^1, z_t^1,u_t;\theta)|u_t]\|^2 |u_t]\\ 
        &= \frac{1}{[TS]^2} \sum_{t=1}^{TS} \left( \mathbb{E}[ \|\tilde \psi( y^\infty_t, z^\infty_t,u_t;\theta)-\mathbb{E}[\tilde \psi ( y^\infty_t, z^\infty_t,u_t;\theta)|u_t]\|^2 |u_t]\right) + O(1/[TS]^2)\\
        &=  \mathbb{E} \left( \mathbb{E}[ \|\tilde \psi( y^\infty_t, z^\infty_t,u_t;\theta)-\mathbb{E}[\tilde \psi( y^\infty_t, z^\infty_t,u_t;\theta)|u_t]\|^2 ]\right)/[TS] + o_p(1/[TS]) + O(1/[TS]^2),
      \end{align*}
      where the last equality is due to Theorem \ref{th:owen} using the bounded fourth moment assumption to find the finite variance condition needed in the Theorem.

      The second term, which is a non-stationary qMC sequence, can be handled using the geometric ergodicity condition and the bounded fourth moment asusmption to get:
      \begin{align*}
        &\frac{1}{TS} \sum_{t=1}^{TS} \Big( \mathbb{E}[\tilde \psi( y_t^1, z_t^1,u_t;\theta)|u_t] - \mathbb{E}[\tilde \psi( y_t^1, z_t^1,u_t;\theta)] \Big)\\ &= \frac{1}{TS} \sum_{t=1}^{TS} \Big( \mathbb{E}[\tilde \psi( y^\infty_t, z^\infty_t,u_t;\theta)|u_t] - \mathbb{E}[\tilde \psi( y^\infty_t, z^\infty_t,u_t;\theta)] \Big) + O(1/[TS])\\
        &= o_p(1/\sqrt{TS}) + O(1/[TS]).
      \end{align*}

      Finally, the geometric ergodicity imply that $( y_t^1, z_t^1)_{t \geq 1}$ is $\alpha$-mixing with exponential decay. This implies that $\frac{1}{[TS]^2} \sum_{t \neq t^\prime} \alpha(|t-t^\prime|) = O(1/[TS])$ where $\alpha$ are the $\alpha$-mixing coefficients. Furthermore, by assumption $\mathbb{E}[ \| \tilde \psi( y_t^1, z_t^1,u_t;\theta)-\mathbb{E}[\tilde \psi( y_t^1, z_t^1,u_t;\theta)|u_t]\|^{4}|u_t ]^{1/4}$ is bounded for all $t \geq 1$. Altogether, these imply that:
      \begin{align*}
        \hat \psi_T^S(\theta) - \mathbb{E}[\hat \psi_T^S(\theta)] &= \underbrace{\hat \psi_T^S(\theta) - \mathbb{E}[\hat \psi_T^S(\theta)|u_1,\dots,u_{TS}]}_{=O_p(1/\sqrt{TS})} + \underbrace{\mathbb{E}[\hat \psi_T^S(\theta)|u_1,\dots,u_{TS}] - \mathbb{E}[\hat \psi_T^S(\theta)]}_{=o_p(1/\sqrt{TS})}\\
        &= O_p(1/\sqrt{TS}),
      \end{align*}
      which implies pointwise convergence.

    As in Proposition \ref{prop:static_nocovariates}, take a cover $\{\theta_1,\dots,\theta_J\}$ of $\Theta$ and:
    \begin{align*}
      \sup_{\theta \in \Theta} &\|\hat \psi_T^S(\theta) - \mathbb{E}[\hat \psi_T^S(\theta)]\|\\
      &\leq \max_{j \in \{1,\dots,J\}} \|\hat \psi_T^S(\theta_j) - \mathbb{E}[\hat \psi_T^S(\theta_j)]\| + \sup_{\theta \in \Theta} \min_{j \in \{1,\dots,J\}} \Big\| [\hat \psi_T^S(\theta)-\hat \psi_T^S(\theta_j)] - \mathbb{E}[ \hat \psi_T^S(\theta) - \hat \psi_T^S(\theta_j)] \Big\|.
    \end{align*}
    The first term can be handled with the pointwise convergence result above. For the second term, note that:
    \[ \|\hat \psi_T^S(\theta)-\hat \psi_T^S(\theta_j)\| \leq \frac{1}{TS} \sum_{t=1}^{TS} C_2( y_t^1, z_t^1,u_t;\theta_j) \times \|\theta-\theta_j\|. \]
    It is sufficient to show that $\sum_{t=1}^{TS} C_2( y_t^1, z_t^1,u_t;\theta_j) /[TS]$ is a $O_p(1)$ for each $j \in \{1,\dots,J\}$. Since $C_2$ satisfies the conditions for the pointwise convergence derived above, using the same arguments as for $\hat \psi_T^S$ we have: \[\frac{1}{TS}\sum_{t=1}^{TS} C_2( y_t^1, z_t^1,u_t;\theta_j) \overset{p}{\to} \mathbb{E}[C_2( y^\infty_t, z^\infty_t,u_t;\theta_j)].\]
    As in the proof of Lemma \ref{lem:ULLNnocovariates}, for $J$ and $T$ large enough we have:
    \[ \sup_{\theta \in \Theta} \|\hat \psi_T^S(\theta) - \mathbb{E}[\hat \psi_T^S(\theta)]\| \leq \varepsilon,\]
    with probability going to $1$, which implies the desired result.\\

    \noindent \textbf{Part ii.} ULLN for $\partial_\theta \hat \psi_T^S$\\
    Given the stated assumptions, the same results as above apply to $\partial_\theta \hat \psi_T^S$ uniformly in $\theta \in \Theta$.\\

    \noindent \textbf{Part iii.} CLT for $\sqrt{TS} \left( \hat \psi_T^S(\theta_0) - \mathbb{E}[\hat \psi_T^S(\theta_0)] \right)$\\
    In part i., it was shown that $\mathbb{E}[\hat \psi_T^S(\theta_0)|u_1,\dots,u_{TS}] -\mathbb{E}[\hat \psi_T^S(\theta_0)]= o_p(1/\sqrt{TS})$. Then, the bounded fourth moment in Assumption \ref{ass:dyn_case2} ii., the mixing condition i. and the variance condition vi. imply that the CLT for heterogeneous dependent arrays \citep[][Theorem 5.20]{White1984} can be applied and:
    \[ \sqrt{TS} \left( \hat \psi_n^S(\theta_0) - \mathbb{E}[\hat \psi_T^S(\theta_0)|u_1,\dots,u_{TS}] \right) \overset{d}{\to} \mathcal{N}(0,V), \]
    where $V = \lim_{T \to \infty} T \times \text{var}[\hat \psi_T^S(\theta_0)|u_1,\dots,u_{TS}]$. This concludes the proof.
    \end{proof}

    \begin{proof}[Proof of Proposition \ref{prop:dyn_case2}] Given the assumptions, Lemma \ref{lem:ULLN_mcQmc} applies and the proof proceed as in Proposition \ref{prop:static_nocovariates}. This concludes the proof.
    \end{proof}
\end{appendices}
\end{document}